%% file: main.tex
\documentclass[10pt]{article}

\usepackage[preprint]{tmlr}

\usepackage[utf8]{inputenc}
\usepackage{amsmath,amsfonts,amssymb,amsthm}
\usepackage{algorithmic}
\usepackage{algorithm}
\usepackage{array}
\usepackage{subfig}
\usepackage{float}
\usepackage{textcomp}
\usepackage{url}
\usepackage{verbatim}
\usepackage{graphicx}
\usepackage{enumitem}
\usepackage{xparse}
\usepackage{booktabs}
\usepackage[table]{xcolor}
\usepackage{makecell}
\usepackage[percent]{overpic}
\usepackage{hyperref}
\hypersetup{hidelinks}

\newcolumntype{C}{>{\centering\arraybackslash}p{1.2em}}

\definecolor{gtbg}{RGB}{210,245,210}
\definecolor{mdsbg}{RGB}{255,230,150}
\definecolor{cedbg}{RGB}{255,210,210}
\definecolor{shade}{RGB}{248,248,248}
\definecolor{good}{RGB}{0,120,0}
\definecolor{bad}{RGB}{180,0,0}

\theoremstyle{plain}
\newtheorem{theorem}{Theorem}[section]
\newtheorem{proposition}[theorem]{Proposition}
\newtheorem{lemma}[theorem]{Lemma}
\newtheorem{corollary}[theorem]{Corollary}

\theoremstyle{definition}
\newtheorem{definition}[theorem]{Definition}

\theoremstyle{remark}
\newtheorem{remark}[theorem]{Remark}

\providecommand{\appendices}{}

\def\month{MM}
\def\year{YYYY}
\def\openreview{\url{https://openreview.net/forum?id=XXXX}}

\title{Sierpi\'nski--Knopp Wasserstein Distance for Persistence Diagrams and Applications to 2-Wasserstein Approximation}

\author{\name Sebastien Tchitchek \email sebastien.tchitchek@etu.sorbonne-universite.fr\\
\addr CNRS and Sorbonne Universit\'e
\AND
\name Julien Tierny \email julien.tierny@sorbonne-universite.fr\\
\addr CNRS and Sorbonne Universit\'e}

\begin{document}

\input{notations.tex}

\maketitle

\begin{abstract}
\input{abstract_tmlr.tex}

\end{abstract}

\input{introduction_Section1_}
\input{preliminaries_Section2_}

\input{continuous_edit_distance_between_time_varying_persistence_diagrams_Section3_}

\input{continuous_edit_distance_geodesics_between_time_varying_persistence_diagrams_Section4_}
\input{applications_Section6_}

\input{results_Section7_}
\input{limitations_section}
\input{conclusion_Section8_}

% Acknowledgments should normally be omitted in the anonymous submission.
% Uncomment for the accepted or preprint version.
%\subsubsection*{Acknowledgments}
%This work is partially supported by the European Commission grant ERC-2019-COG
%``TORI'' (ref. 863464, \url{https://erc-tori.github.io/}).

\bibliography{refs}
\bibliographystyle{tmlr}

\clearpage
\section*{Appendices}
\appendix

\input{appendix_skot_from_ieee.tex}
\input{supplementary_experiments.tex}

\end{document}

%% file: notations.tex
%%%%%%%%%%%%%%%%%%%%%%%%%%%%%%%%%%%%%%%%%%%%%%%%%%%%%%%%%%%%%%%%%%%%%%%%
% notations.tex  –  Shortcuts and notations for the IEEE-TVCG paper “CED”
%%%%%%%%%%%%%%%%%%%%%%%%%%%%%%%%%%%%%%%%%%%%%%%%%%%%%%%%%%%%%%%%%%%%%%%%

\newcommand{\encommentaire}[1]{}

\newcommand{\toDiscuss}[1]{#1}

\newcommand{\julien}[1]{#1}
\newcommand{\sebastien}[1]{#1}

%color used when orange text (\sebastien) has been overwritten
%\newcommand{\sebastienBis}[1]{\textcolor{red}{#1}}
\newcommand{\sebastienBis}[1]{#1}

\newcommand{\Scorrection}[1]{#1}

\renewcommand{\sectionautorefname}{Sec.}
\renewcommand{\subsectionautorefname}{Sec.}
\renewcommand{\subsubsectionautorefname}{Sec.}

\renewcommand{\figureautorefname}{Fig.}
\renewcommand{\tableautorefname}{Tab.}

\newcommand{\N}{N}  

\newcommand{\RNB}{\mathbb{R}}      

\newcommand{\NNB}{\mathbb{N}}      

\newcommand{\algebra}{\mathcal{M}}  

\newcommand{\borelian}{\mathcal{B}}

\newcommand{\candidate}{\ensuremath{B}}

\newcommand{\candidateTwo}{\ensuremath{C}}

\newcommand{\frechetEnergy}{\mathcal{E}}

\newcommand{\measure}{\mu}    

\newcommand{\completion}{\mathcal{Z}}

\newcommand{\scalarField}{\ensuremath{f}}   

\newcommand{\variableTime}{\ensuremath{t}}  

\newcommand{\manifold}{\mathcal{M}}  

\newcommand{\dimensionvariable}{\ensuremath{d}}

\newcommand{\timedScalarFieldsSeq}{\ensuremath{U}}  

\newcommand{\PDSeq}{\ensuremath{V}}  

\newcommand{\sampleTVPD}{\ensuremath{V}}  

\newcommand{\divisibilitySymbol}{\mid} 

\newcommand{\sublevelset}{L^{-}_{w}}     

\newcommand{\PDS}{\mathcal{D}}   

\newcommand{\PD}{\ensuremath{X}} 

\newcommand{\birthPair}{\ensuremath{b}} 

\newcommand{\deathPair}{\ensuremath{d}} 

\newcommand{\persistancepair}{\ensuremath{x}} 

\newcommand{\persistancepairTwo}{\ensuremath{y}} 

\newcommand{\PDBijection}{\psi}        

\newcommand{\floorCost}{\ensuremath{c}} 

\newcommand{\wasserstein}{\ensuremath{W}}  

\newcommand{\diago}{\Lambda}  

\newcommand{\diagoProj}[1]{\pi\left(#1\right)}    

\newcommand{\deletionSet}{\mathcal{D}} 

\newcommand{\substitutionSet}{\mathcal{S}} 

\newcommand{\insertionSet}{\mathcal{I}} 

\newcommand{\morseIndex}{\mathcal{I}} 

\newcommand{\persistencePairNumber}{\ensuremath{I}} 

\newcommand{\persistencePairNumberBis}{\ensuremath{J}} 

\newcommand{\persistencePairNumberBisBis}{\ensuremath{K}} 

\newcommand{\metricSpace}{\ensuremath{\mathcal{X}}}

\newcommand{\metricDistance}{\ensuremath{d}}

\newcommand{\paramDelta}{\Delta}     

\newcommand{\paramDeltaone}{\Delta_1}  

\newcommand{\paramDeltatwo}{\Delta_2}  

\newcommand{\paramWeight}{\alpha}    

\newcommand{\paramPenalty}{\beta}            

\newcommand{\scalarVariableOne}{\varepsilon} 

\newcommand{\scalarVariableTwo}{\gamma}

\newcommand{\paramIntervalSize}{\eta}        

\newcommand{\paramSGD}{\rho_{\mathrm{s}}}

\newcommand{\paramGGD}{\ensuremath{k}}

\newcommand{\paramGGDBis}{\ensuremath{M}} 

\newcommand{\stepGre}{\rho_{\mathrm{g}}}

\newcommand{\paramGD}{\ensuremath{T}}

\newcommand{\paramGDBis}{\ensuremath{t}}  

\newcommand{\WtwoGeodesic}{\gamma}

\newcommand{\metricGeodesic}{\gamma}

\newcommand{\boundarySet}{\ensuremath{A}}

\newcommand{\PASet}{\mathcal{A}}

\newcommand{\initDPOne}{\ensuremath{A}}

\newcommand{\initDPTwo}{\ensuremath{B}}

\newcommand{\Subdspace}{\ensuremath{s}}   

\newcommand{\TVPDspace}{\ensuremath{\mathcal{S}}}  

\newcommand{\CTVPDspace}{\ensuremath{C}}   

\newcommand{\ITVPDspace}{\ensuremath{I}}   

\newcommand{\PCTVPDspace}{\mathrm{PC}}     

\newcommand{\Img}{\mathrm{Im}}   

\newcommand{\Dom}{\mathrm{Dom}}   

\newcommand{\SubdInterval}{I}    

\newcommand{\TVPDp}{\ensuremath{P}} 

\newcommand{\TVPDq}{\ensuremath{Q}}

\newcommand{\TVPDf}{\ensuremath{F}}

\newcommand{\TVPDg}{\ensuremath{G}}

\newcommand{\TVPDX}{\ensuremath{X}}

\newcommand{\TVPDY}{\ensuremath{Y}}

\newcommand{\functionf}{\ensuremath{f}}

\newcommand{\functiong}{\ensuremath{g}}

\newcommand{\partialAssignment}{\ensuremath{f}}

\newcommand{\partialAssignmentBis}{\ensuremath{g}}

\newcommand{\partialAssignmentBisBis}{\ensuremath{h}}

\newcommand{\dilatedWtwoGeodesic}{\mathcal{H}}

\newcommand{\geodesicCED}{\ensuremath{G}}

\newcommand{\TDAP}{TDA}

\newcommand{\TDAPP}{TDA\ }

\newcommand{\localDistance}{\ensuremath{d}}

\newcommand{\localDistanceAppendix}{\ensuremath{D}}

\newcommand{\assignmentCost}{\operatorname{cost}} 

\newcommand{\TVPDP}{TVPD\ }

\newcommand{\TVPDPP}{TVPD}

\newcommand{\TVPDsP}{TVPDs\ } 

\newcommand{\TVPDsPP}{TVPDs} 

\newcommand{\TVPDM}{\mathrm{TVPD}}

\newcommand{\CEDP}{CED\ }  

\newcommand{\CEDPP}{CED}  

\newcommand{\CEDM}{\mathrm{CED}}

\newcommand{\DP}{\delta}

\newcommand{\dimension}{\ensuremath{d}}

\newcommand{\intervalBounda}{\ensuremath{a_{i}}}

\newcommand{\intervalBoundaBis}{\ensuremath{a_{i,n}}}

\newcommand{\intervalBoundaBisBis}{\ensuremath{a_{i,n+1}}}

\newcommand{\intervalBoundb}{\ensuremath{b_{i}}}

\newcommand{\intervalBoundc}{\ensuremath{c_{j}}}

\newcommand{\intervalBoundd}{\ensuremath{d_{j}}}

\newcommand{\criticalPointOne}{\ensuremath{c}}

\newcommand{\criticalPointTwo}{\ensuremath{c'}}

\newcommand{\limitl}{\ensuremath{l}}

\newcommand{\variablei}{\ensuremath{i}}

\newcommand{\variablej}{\ensuremath{j}}

\newcommand{\variablek}{\ensuremath{k}}

\newcommand{\variableK}{\ensuremath{K}}

\newcommand{\variableKLips}{\ensuremath{K}}

\newcommand{\variablel}{\ensuremath{l}}

\newcommand{\variableL}{\ensuremath{L}}

\newcommand{\variableM}{\ensuremath{M}}

\newcommand{\variablen}{\ensuremath{n}}

\newcommand{\variableN}{\ensuremath{N}}

\newcommand{\variabler}{\ensuremath{r}}

\newcommand{\variables}{\ensuremath{s}}

\newcommand{\variablev}{\ensuremath{v}}

\newcommand{\variablew}{\ensuremath{w}}

\newcommand{\variablex}{\ensuremath{x}}

\newcommand{\variablexBis}{\ensuremath{x}}

\newcommand{\variablexBisBis}{\ensuremath{x}}

\newcommand{\variabley}{\ensuremath{y}}

\newcommand{\variableyMetricSpace}{\ensuremath{y}}

\newcommand{\variablez}{\ensuremath{z}}

\newcommand{\variableZ}{\ensuremath{Z}}

\newcommand{\rayon}{\ensuremath{r}}

\newcommand{\ball}{\mathcal{B}}

\newcommand{\intervalI}{\ensuremath{I}}

\newcommand{\averageNumberDeltaS}{\ensuremath{n}}

\newcommand{\averageNumberPersistencePair}{\ensuremath{p}}

\newcommand{\averageNumberPersistencePairPerTVPD}{\ensuremath{P}}

\newcommand{\sampleSize}{\ensuremath{N}}

\newcommand{\bigLandau}{\ensuremath{\mathcal{O}}}

\newcommand{\I}{[0,1]}
\newcommand{\dd}{\,\mathrm d}
\newcommand{\cDel}{c_{\mathrm{del}}}
\newcommand{\cCre}{c_{\mathrm{cre}}}
\newcommand{\cbd}{c}
\newcommand{\X}{\mathcal X}
\newcommand{\Y}{\mathcal Y}
\newcommand{\Xbar}{\overline{\mathcal X}}
\newcommand{\Ybar}{\overline{\mathcal Y}}
\newcommand{\qbar}{\overline q}
\newcommand{\dX}{d_{\mathcal X}}
\newcommand{\dY}{d_{\mathcal Y}}
\newcommand{\dXbar}{\overline d_{\mathcal X}}
\newcommand{\dYbar}{\overline d_{\mathcal Y}}
\newcommand{\dAh}{\overline d_{\mathcal Ah}}
\newcommand{\Diag}{\mathrm{Diag}}
\newcommand{\Left}{\mathrm{Left}}
\newcommand{\Top}{\mathrm{Top}}
\newcommand{\R}{\mathbb R}
\newcommand{\Atri}{\mathcal A_{\triangle}}
\newcommand{\AtriSet}{\{(x,y)\in[0,1]^2:\ x\le y\}}
\newcommand{\dist}{\operatorname{dist}}
\newcommand{\distset}[2]{\dist\!\big(#1,#2\big)}

%%%%%%%%%%%%%%%%%%%%%%%%%%%%%%%%%%%%%%%%%%%%%%%%%%%%%%%%%%%%%%%%%%%%%%%%
% End of notations.tex
%%%%%%%%%%%%%%%%%%%%%%%%%%%%%%%%%%%%%%%%%%%%%%%%%%%%%%%%%%%%%%%%%%%%%%%%

%% file: abstract_tmlr.tex
\julien{This paper introduces the Sierpi\'nski--Knopp (SK) Wasserstein distance, a fast metric between persistence diagrams. The SK-Wasserstein distance, denoted $d_{\mathrm{SK}}$,} maps diagram
points and their diagonal projections to the unit interval
\julien{via the Sierpi\'nski--Knopp space-filling
curve on the upper diagonal triangle. The encoded point sets are then efficiently matched via one-dimensional optimal assignment, in \(O(N\log N)\) steps, yielding an explicit
diagonal-aware point assignment between the two input persistence
diagrams.} \julien{We show that the SK-Wasserstein distance controls} the
classical \(2\)-Wasserstein distance \julien{between diagrams}, admits an explicit isometric embedding
into a Hilbert space, and induces a positive-definite Gaussian kernel, making the resulting geometry
directly compatible with Euclidean and kernel-based learning methods.
\julien{A tighter surrogate dissimilarity, noted \(W_\Gamma\), is also introduced based on the point  assignments along the curve.} Experiments on 12 scientific collections comprising 227 diagrams show median Spearman correlations with \(W_2\) of \(0.879\) for
\(d_{\mathrm{SK}}\) and \(0.924\) for \(W_\Gamma\),
\julien{reflecting consistent pairwise dissimilarity rankings}.
The median per-collection speedup of \(d_{\mathrm{SK}}\) \julien{over state-of-the-art approximations of \(W_2\)} is \(626\times\), while the
aggregate speedup over the full benchmark is \(2100\times\). Average-linkage partitions obtained from
\(d_{\mathrm{SK}}\) and \(W_\Gamma\) each exactly match the
corresponding \(W_2\) partition on 8 of the 12 collections. Hilbert
\(k\)-means and Gaussian spectral clustering, both based on
\(d_{\mathrm{SK}}\), achieve mean adjusted Rand indices (ARI) of \(0.756\)
and \(0.800\), respectively, with respect to the benchmark reference
partitions, compared to
\(0.750\) obtained by average linkage on \(W_2\). The Gaussian \(d_{\mathrm{SK}}\) kernel supports other
kernel-based analysis tasks, as illustrated by its use
for contiguous segmentation of ordered diagram collections in our experiments. A C++ implementation is provided for reproducibility at \href{https://github.com/sebastien-tchitchek/SK-Wasserstein-Reproducibility}{https://github.com/sebastien-tchitchek/SK-Wasserstein-Reproducibility}.

%% file: introduction_Section1_.tex
\section{Introduction}\label{sec:introduction}

Advancements in data acquisition and numerical simulation have
made increasingly large and complex datasets available for analysis. In this
context, topological data analysis (TDA) provides a family of techniques for
summarizing geometric and scalar data through concise topological descriptors
\cite{book,oudot2015persistence}. Among them, persistence diagrams are one of
the most widely used representations: they encode the birth and death of
topological features across a filtration and provide a compact summary of the
salient structures of a
% scalar field
\julien{dataset
\cite{B94,frosini99,robins99,edelsbrunner02}.}

A central task in persistence-based analysis is the comparison of persistence diagrams. Standard metrics, such as the \(2\)-Wasserstein distance, compare diagrams through optimal
\julien{assignments}
% matchings
in the birth--death plane, while allowing unmatched points to be paired with the diagonal \cite{Munkres1957,Bertsekas81,kerber2017geometry}. This distance has strong theoretical foundations, including stability guarantees~\cite{cohen2005stability}, and has enabled
% statistical and learning
\julien{analysis}
tasks on diagram spaces~\cite{mileyko2011probability,turner2013frechetmeansdistributionspersistence,lacombe2018large,vidal2019progressive,pont_tvcg22,sisouk_tvcg24, sisouk2026robustbarycenterspersistencediagrams}. However, its repeated evaluation can become costly, especially when large collections of diagrams must be compared or when distances are used inside iterative algorithms. This motivates the design of fast diagram distances that remain faithful to the \(2\)-Wasserstein geometry.

This paper addresses this issue by introducing
\julien{the \emph{Sierpi\'nski--Knopp (SK) Wasserstein distance},
denoted $d_{\mathrm{SK}}$,}
% \emph{Sierpi\'nski--Knopp
% Optimal Transport} (SKOT),
a fast
\julien{metric}
% distance
between
% normalized
persistence
diagrams. The construction keeps the diagonal-aware augmentation used in the
standard \(2\)-Wasserstein comparison of persistence diagrams, where unmatched
features may be paired with the diagonal through their orthogonal projections.
The key difference is that, instead of solving the resulting
\julien{assignment}
% transport
problem
in the persistence
\julien{upper diagonal}
triangle \julien{(denoted \(\Atri\))},
\julien{$d_{\mathrm{SK}}$}
replaces it with a one-dimensional
assignment problem on \([0,1]\). This is achieved by fixing a Sierpi\'nski--Knopp space-filling curve \(S:[0,1]\to\Atri\) and using its first-hit selector on each augmented
diagram to assign a scalar value in \([0,1]\) to every diagram point, thereby
producing two equal-length lists of scalar values. These lists,
equivalently viewed as equal-mass atomic measures on \([0,1]\), are then compared through the
classical one-dimensional \(1\)-Wasserstein assignment problem, and
\(d_{\mathrm{SK}}\) is defined as the square root of its optimal cost. \Scorrection{Beyond the distance value, the one-dimensional optimal assignment yields an explicit
diagonal-aware assignment between the two input diagrams, which can be
reused by downstream methods requiring point correspondences.}

The benefit of this construction is twofold. First, one-dimensional optimal
% transport
\julien{assignment}
admits a monotone closed-form solution, which reduces the evaluation of
% SKOT
\julien{$d_{\mathrm{SK}}$}
to sorting lists of scalar values. Second, the \(1/2\)-H\"older regularity of the SK
curve ties this one-dimensional comparison back to the original geometry of the
persistence triangle. In particular,
\julien{the classical $2$-Wasserstein distance between persistence diagrams is controlled, up to a constant factor, by \toDiscuss{\Scorrection{${d_{\mathrm{SK}}}$}}.}
% the quadratic Wasserstein distance between
% normalized persistence diagrams is controlled, up to a constant factor, by
% \(\sqrt{\mathrm{SKOT}}\).
This makes
\toDiscuss{\Scorrection{${d_{\mathrm{SK}}}$}}
% \(\sqrt{\mathrm{SKOT}}\)
a computationally
attractive surrogate for repeated \(W_2\)-type evaluations. Moreover, \julien{\Scorrection{$d_{\mathrm{SK}}^{\,2}$}} admits
an explicit cumulative \(L^1\) representation, which yields an isometric Hilbert
embedding of
\toDiscuss{\Scorrection{${d_{\mathrm{SK}}}$}},
% \(\sqrt{\mathrm{SKOT}}\)
and a positive-definite Gaussian \(d_{\mathrm{SK}}\) kernel on
% normalized
persistence diagrams.

\subsection{Related work}\label{sec:related_work}

The literature related to our work can be grouped into two main
families:
\textit{(i)} persistence-diagram metrics, transport approximations,
and Hilbert-compatible representations, and
\textit{(ii)} space-filling curves and one-dimensional orderings.

\noindent\textbf{(i) Persistence-diagram metrics, transport
approximations, and representations:}
The bottleneck and \(p\)-Wasserstein distances compare persistence
diagrams through partial matchings in the birth--death plane, with
unmatched points sent to the diagonal. They underlie stability,
statistical analysis, and barycenter constructions on diagram spaces
\cite{cohen2005stability,mileyko2011probability,
turner2013frechetmeansdistributionspersistence}.
Their evaluation nevertheless requires solving
% a matching
\julien{an assignment}
problem
\cite{Munkres1957,Bertsekas81}.
Geometric auction algorithms improve its practical performance
\cite{kerber2017geometry}, while near-linear approximation schemes
have been developed specifically for the diagram
\(1\)-Wasserstein distance~\cite{chen2021approximation}. Repeated
transport computations nevertheless remain costly in large-scale
means and clustering procedures
\cite{lacombe2018large,vidal2019progressive,lacombe2018large,pont_tvcg22,sisouk_tvcg24,sisouk2026robustbarycenterspersistencediagrams}.

For statistical learning, persistence landscapes and persistence
images map diagrams into linear spaces
\cite{bubenik2015landscapes,adams2017persistenceimages}, while stable
diagram kernels provide RKHS representations
\cite{reininghaus2015multiscale,kusano2016pwgk}. Embedding the
standard diagram metrics themselves into Hilbert spaces is subject to
unavoidable metric-distortion limitations
\cite{carriere2019distortion}.

A complementary learned representation is the domain-oblivious
persistence-diagram hashing of Qin et al.~\cite{qin2022domain}.
Their method maps each diagram to a short binary code and compares the
resulting representations using Hamming distance, yielding
orders-of-magnitude faster comparisons and scalability to large
diagram collections. However, the comparison does not produce an
explicit pointwise assignment, and its agreement with Wasserstein
geometry is evaluated empirically rather than through a deterministic
approximation bound.

The closest transport-based construction to ours is the Sliced
Wasserstein kernel for persistence diagrams
\cite{carriere2017sliced}\julien{, inspired from sliced optimal transport \cite{sisouk_tmlr25}}. It augments diagrams with diagonal
projections, computes one-dimensional \(W_1\) distances after linear
projections, and integrates these costs over all directions. For
diagrams of bounded cardinality, its comparison theorem has the form
\[
\frac{d_1}{2M}
\leq
\mathrm{SW}
\leq
2\sqrt{2}\,d_1,
\]
where \(d_1\) is the first diagram Wasserstein distance and \(M\)
depends on the cardinality bound. Thus, this result concerns the
\(1\)-Wasserstein diagram geometry, whereas our objective is a direct
control of the quadratic diagram distance \(W_2\) by
\(
d_{\mathrm{SK}}
.
\)
\Scorrection{$d_{\mathrm{SK}}^{}$} also replaces the integration over projection directions by one
deterministic injective scalar encoding and one sorting-based
one-dimensional transport problem. The sliced construction relies on a family of direction-dependent
one-dimensional assignments and therefore does not directly return a
single diagram-level assignment in the birth--death plane. In
contrast, the SK ordering yields one explicit diagonal-aware
assignment between the augmented diagrams, which is available to
downstream methods requiring point correspondences.

\noindent\textbf{(ii) Space-filling curves and one-dimensional
orderings:}
Space-filling curves provide continuous surjections from an interval
onto a higher-dimensional domain and have long been used to impose
one-dimensional orderings on multidimensional data
\cite{Sagan1994,Bader2013}.

Hierarchical spatial embeddings provide a related but distinct
approach. For EMD-based image retrieval, Indyk and Thaper embed the
ground geometry into a randomly shifted quadtree metric
\cite{indyk2003fast}. The corresponding tree-based \(W_1\) distance
admits a sparse \(\ell_1\) representation, enabling fast indexed
nearest-neighbour search with approximation guarantees
\cite{backurs2020scalable}. Unlike our construction, this is a
multiscale tree embedding rather than a single scalar selector, and it
is not tailored to the diagonal-assignment structure of persistence
diagrams.

Hilbert-curve orderings have also been
used to construct inexpensive Wasserstein surrogates for empirical
distributions of equal mass
\cite{bernton2019abcwasserstein,li2024hilbertcurve}. These methods
order samples along a space-filling curve to induce a coupling. In
the Hilbert Curve Projection distance, for example, the cost of this
coupling is subsequently evaluated in the original ambient space.
This ambient-cost principle is conceptually closest to our planar
surrogate \(W_\Gamma\).

Our construction is instead tailored to persistence diagrams. It
uses the Sierpi\'nski--Knopp curve directly on the normalized
persistence triangle and incorporates the diagonal through the
standard augmentation by orthogonal projections. The first-hit
selector provides an injective scalar encoding, so distinct points of
the persistence triangle are not identified. Moreover, the
\(1/2\)-H\"older regularity of the curve links assignment between the
scalar codes to assignment in the birth--death plane
\cite{Shchepin2020}. Retaining the one-dimensional assignment cost
defines \Scorrection{$d_{\mathrm{SK}}^{}$} itself, yields the \(W_2\) control proved below, and
leads to the explicit Hilbertian and Gaussian-kernel structures
developed in \autoref{sec:skot_hilbert_kernel}.

\subsection{Contributions}\label{sec:contributions}

This paper makes the following contributions:

\begin{itemize}

\item
\textit{An SK-Wasserstein metric distance \(d_{\mathrm{SK}}\) for persistence
diagrams.}
We introduce \(d_{\mathrm{SK}}\), a distance between normalized
persistence diagrams based on the exact first-hit selector \(\iota\) of the
Sierpi\'nski--Knopp space-filling curve.  In
practice, \(\iota\) is approximated by its finite level-\(L\)
version \(\iota_L\), yielding the numerical approximation
\(d_{\mathrm{SK},L}\). Using the standard diagonal
augmentation, the original two-dimensional partial-assignment problem
is replaced by a one-dimensional \(1\)-Wasserstein assignment problem
between equal-mass atomic encodings on \([0,1]\). We prove that
\(d_{\mathrm{SK}}\) defines a metric on normalized persistence
diagrams.

\item
\textit{A sorting-based \(O(N\log N)\) evaluation algorithm for \(d_{\mathrm{SK},L}\).}
We exploit the monotone
structure of one-dimensional \(W_1\) to derive a closed-form sorting formula for
\Scorrection{$d_{\mathrm{SK}}^{}$}. Given the selector values of the relevant diagram points and their
diagonal projections, \Scorrection{$d_{\mathrm{SK},L}$} between two diagrams of total size \(N\) can be
computed in \(O(N\log N)\) time. 

\item
\textit{A quantitative connection with the diagram
\(2\)-Wasserstein distance.}
Using the \(1/2\)-H\"older regularity of the SK curve, we establish the
universal bound
\[
W_2(X_1,X_2)
\leq
\sqrt{2}\,d_{\mathrm{SK}}(X_1,X_2).
\]
The monotone one-dimensional assignment also induces an admissible
assignment in the birth--death plane. \Scorrection{Thus, the construction returns both a distance value and an explicit
diagonal-aware point correspondence between $X_1$ and $X_2$.} Re-evaluating this assignment
with the quadratic diagram cost defines the planar surrogate
\(W_\Gamma\), which satisfies
\[
W_2
\leq
W_\Gamma
\leq
\sqrt{2}\,d_{\mathrm{SK}}.
\]
Unlike \(d_{\mathrm{SK}}\), \(W_\Gamma\) is not guaranteed to be a
metric, but it provides a numerically tighter surrogate for \(W_2\)
in median.

\item
\textit{An explicit Hilbertian geometry and a positive-definite
Gaussian kernel.}
We derive a cumulative \(L^1\) representation of
\(d_{\mathrm{SK}}^2\), from which we obtain an explicit isometric
embedding of \(d_{\mathrm{SK}}\) into a Hilbert space and a
positive-definite Gaussian \(d_{\mathrm{SK}}\) kernel. These results make the
SK-Wasserstein distance \(d_{\mathrm{SK}}\)  directly compatible with
Euclidean embedding methods and learning pipelines that require vector
representations, and
kernel-based learning methods on normalized persistence diagrams \cite{Guella2020GaussianHilbert,Sriperumbudur2010HilbertEmbedding}.

\item
\textit{Experimental validation and applications on scientific
ensembles.}
We evaluate \(d_{\mathrm{SK}}\) and \(W_\Gamma\) on 12 scientific
collections containing 227 persistence diagrams, with complete pairwise numerical \(W_2\) reference matrices computed with TTK's auction-based
solver. The experiments demonstrate
numerical stabilization at \(L=30\), strong preservation of the
\(W_2\) geometry, and a median \(626\times\) speedup for
\(d_{\mathrm{SK},30}\). They further demonstrate the practical use of
the Hilbertian and Gaussian-kernel representations of \(d_{\mathrm{SK}}\) for clustering and
contiguous segmentation of ordered diagram collections.

\end{itemize}

%% file: preliminaries_Section2_.tex
\section{Preliminaries}\label{sec:preliminaries}

\noindent
This section presents the theoretical background required for our work. It formalizes our input data, introduces persistence diagrams and a standard metric for their comparison, and specifies the SK space-filling curve used in our definition of \Scorrection{$d_{\mathrm{SK}}^{}$}. We refer the reader to standard textbooks~\cite{book,oudot2015persistence} for an introduction to TDA, and to~\cite{Bader2013} for a general presentation of space-filling curves.

\subsection{Input data}\label{sec:input_data}

We consider as input a finite collection of PL scalar fields
\(
\mathcal F=\{f_i:\mathcal M_i\to\mathbb R\}_{i=1}^{n_{\mathcal F}},
\)
where each \(\mathcal M_i\) is a PL \((d_i)\)--manifold, with \(d_i\le 3\) in our applications. 

For a fixed index \(i\in\NNB\) and a threshold
\(\variablew\in\RNB\), we denote by
\(
\sublevelset( fi)
=
 f_{i}^{-1}\bigl((-\infty,\variablew]\bigr)
\)
the sublevel set of \( f_i\) at value \(\variablew\).
When \(\variablew\) increases from \(-\infty\) to \(+\infty\), the topology of
\(\sublevelset( f_i)\) remains unchanged except when
\(\variablew\) crosses a critical value of \( f_i\). The
critical points
\(\criticalPointOne\in\manifold_{ f_{i}}\) are classified
according to their Morse index \(\morseIndex(\criticalPointOne)\): index \(0\)
corresponds to minima, index \(1\) to \(1\)-saddles, index
\(\dimension_{\manifold_{ f_{i}}}-1\) to
\((\dimension_{\manifold_{ f_{i}}}-1)\)-saddles, and index
\(\dimension_{\manifold_{ f_{i}}}\) to maxima. In practice, as
is standard~\cite{edelsbrunner2001hierarchical,edelsbrunner1990simulation}, we
assume that all critical points are isolated and non-degenerate. By the Elder rule~\cite{book}, each topological feature created during the
sublevel-set sweep, such as a connected component, a cycle, or a void, is
associated with a pair of critical points
\((\criticalPointOne,\criticalPointTwo)\). The first critical point
\(\criticalPointOne\) creates the feature, while the second critical point
\(\criticalPointTwo\) destroys it. These points satisfy
\(
 f_{i}(\criticalPointOne)
<
 f_{i}(\criticalPointTwo),
\text{ and }
\morseIndex(\criticalPointOne)
=
\morseIndex(\criticalPointTwo)-1.
\)
The feature is therefore born at the older critical point
\(\criticalPointOne\) and dies at the younger critical point
\(\criticalPointTwo\), and the pair
\((\criticalPointOne,\criticalPointTwo)\) is called a persistence pair. For
example, when two connected components merge at a critical point
\(\criticalPointTwo\), the component that was created most recently disappears,
whereas the older component survives. Each persistence pair is represented by its birth and death values,
\(
(\birthPair,\deathPair)
=
\bigl(
 f_{i}(\criticalPointOne),
 f_{i}(\criticalPointTwo)
\bigr).
\) The persistence diagram associated with \(f_i\) is the finite multiset
of all such pairs,
\[
X_i^{\mathrm{raw}}
=
\{(b_i^1,d_i^1),\dots,(b_i^{m_i},d_i^{m_i})\}.
\]

\noindent As illustrated in
\autoref{fig:noNoisy-NoisyPersistenceDiagrams}, features with large topological
significance correspond to points \((\birthPair,\deathPair)\) located far from
the diagonal
\(
\Delta
=
\{(\birthPair,\deathPair)\in\RNB^2:\birthPair=\deathPair\}.
\)
Such points have a large lifespan
\(\deathPair-\birthPair\), called their persistence. Conversely, pairs produced
by small-amplitude noise typically have short lifespans and therefore lie close
to the diagonal.

\begin{figure}[t]
  \centering
  \includegraphics[width=0.8\linewidth]{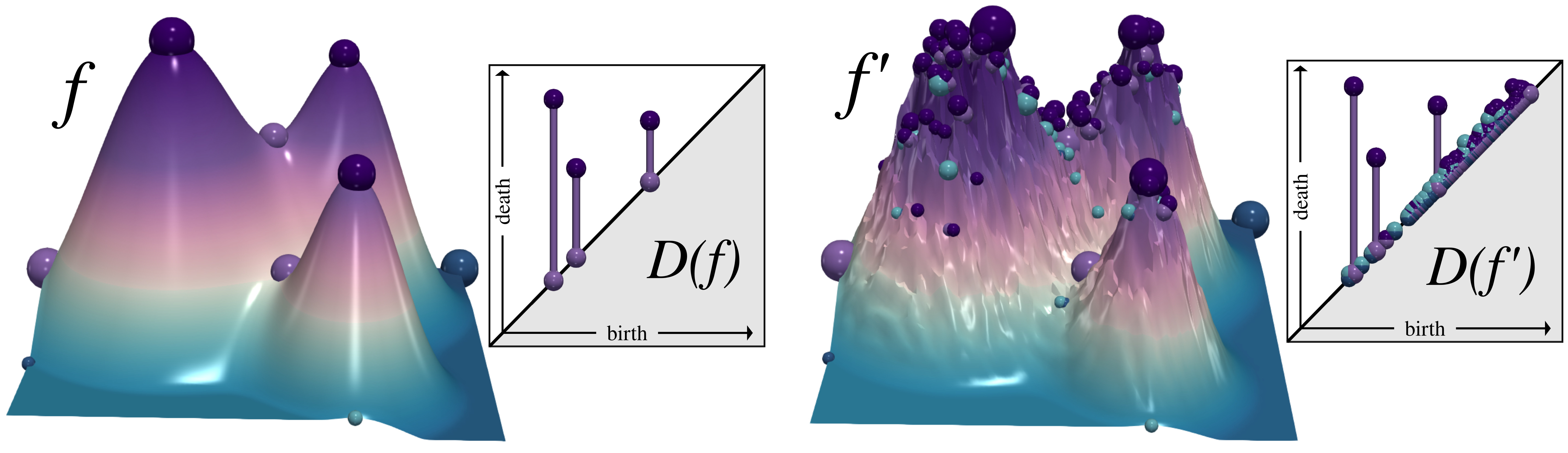}
  \caption{Persistence diagrams \(\PDS(\functionf)\) and
\(\PDS(\functionf')\) associated respectively with the noise-free
scalar field \(\functionf\) (left) and its noisy counterpart
\(\functionf'\) (right). In both diagrams, the three dominant peaks
appear as highly persistent pairs, while the numerous points close to
the diagonal in \(\PDS(\functionf')\) correspond to low-persistence
features introduced by the background noise.}
  \label{fig:noNoisy-NoisyPersistenceDiagrams}
\end{figure}
\noindent

\Scorrection{In order to fit the setting in which \Scorrection{$d_{\mathrm{SK}}^{}$} is defined, that is of normalized persistence diagrams, while preserving the relative amplitude of features across the collection, we normalize all
persistence diagrams of the collection jointly: assuming that at least one diagram is
nonempty, let
\[
v_{\min}
:=
\min_{\substack{1\leq i\leq n_{\mathcal F}\\1\leq r\leq m_i}}
b_i^r,
\qquad
v_{\max}
:=
\max_{\substack{1\leq i\leq n_{\mathcal F}\\1\leq r\leq m_i}}
d_i^r.
\]
We apply the same increasing affine map to every point of every
diagram and define
\[
 X_i
:=
\left\{
\left(
\frac{b_i^r-v_{\min}}{v_{\max}-v_{\min}},
\frac{d_i^r-v_{\min}}{v_{\max}-v_{\min}}
\right)
\right\}_{r=1}^{m_i}.
\]
Thus, every \( X_i\) is supported in the normalized
persistence triangle
\[
\Atri
:=
\{(x,y)\in[0,1]^2:x\leq y\}.
\] Since the same affine transformation is used
for the entire collection, the relative birth, death, and persistence
scales among its diagrams are preserved. This normalization depends
only on the persistence diagrams and can therefore also be applied to
diagrams obtained from point clouds or other filtered data. If all diagrams are already supported in \(\Atri\), no further normalization is required and we
simply set \(X_i=X_i^{\mathrm{raw}}\).}

\Scorrection{Thanks to the global normalization, the persistence diagram \(X_i\) can also be viewed as an integer atomic measure supported in the normalized persistence triangle
\(
\Atri.
\)
More precisely,
\[
X_i=\sum_{r=1}^{m_i}\delta_{(b_i^r,d_i^r)},\,
(b_i^r,d_i^r)\in\Atri\setminus\Delta,
\]}

\Scorrection{where, for simplicity, \(b_i^r\) and \(d_i^r\) now denote the
normalized birth and death coordinates. Repeated points in the
multiset are represented by repeated Dirac masses.}

\Scorrection{Although our experiments use persistence diagrams derived from PL
scalar fields, the framework developed in this paper only requires a
collection of persistence diagrams, which can be jointly normalized
regardless of their data source. It therefore applies directly to
diagrams obtained from other filtered data, such as point clouds
equipped with Vietoris--Rips filtrations.}

In what follows, we
enumerate the $\persistencePairNumberBisBis$ points of a persistence diagram
$\PD$ as
$\PD=\{\persistancepair^{1},\dots,\persistancepair^{\persistencePairNumberBisBis
}\}$, and denote $\PD_{\varnothing}=\{\}$ the empty persistence diagram.

\subsection{\julien{Metric for} persistence diagrams}
\label{sec:persistence_diagrams}

Let
\[
\PD_1=\{\persistancepair_1^1,\dots,
\persistancepair_1^{\persistencePairNumberBisBis_1}\},
\qquad
\PD_2=\{\persistancepair_2^1,\dots,
\persistancepair_2^{\persistencePairNumberBisBis_2}\}
\]
be two persistence diagrams. Since their numbers of off-diagonal points may differ, we augment each diagram with diagonal projections of the off-diagonal points of the other diagram. More precisely, we set
\(
\PD_1^\ast
=
\PD_1
\cup
\bigl\{
\Pi(\persistancepair):
\persistancepair\in \PD_2\setminus\Delta
\bigr\},
\PD_2^\ast
=
\PD_2
\cup
\bigl\{
\Pi(\persistancepair):
\persistancepair\in \PD_1\setminus\Delta
\bigr\},
\)
where the diagonal projection is given by
\[
\Pi(\birthPair,\deathPair)
=
\left(
\frac{\birthPair+\deathPair}{2},
\frac{\birthPair+\deathPair}{2}
\right).
\]
The two augmented diagrams have the same cardinality, which we denote by
\(
\persistencePairNumberBisBis
:=
|\PD_1^\ast|
=
|\PD_2^\ast|.
\)
We write
\[
\PD_1^\ast
=
\{
\persistancepair_{*1}^{1},
\dots,
\persistancepair_{*1}^{\persistencePairNumberBisBis}
\},
\qquad
\PD_2^\ast
=
\{
\persistancepair_{*2}^{1},
\dots,
\persistancepair_{*2}^{\persistencePairNumberBisBis}
\},
\]
with multiplicities. Let
\(
\intervalI_{\persistencePairNumberBisBis}
=
\{1,\dots,\persistencePairNumberBisBis\}.
\)
A bijection
\(
\psi:\intervalI_{\persistencePairNumberBisBis}
\to
\intervalI_{\persistencePairNumberBisBis}
\)
defines a one-to-one assignment between the points of the two augmented diagrams. We use the diagonal-aware squared cost
\[
\floorCost(\persistancepair,\persistancepairTwo)
=
\begin{cases}
0,
& \persistancepair\in\Delta
\text{ and }
\persistancepairTwo\in\Delta,\\[2pt]
\|\persistancepair-\persistancepairTwo\|_2^2,
& \text{otherwise},
\end{cases}
\]
where \(\|\cdot\|_2\) is the Euclidean norm in \(\RNB^2\). The corresponding \(2\)-Wasserstein distance is
\begin{equation}
\wasserstein_2(\PD_1,\PD_2)
=
\min_{\psi:\,\intervalI_{\persistencePairNumberBisBis}
\to
\intervalI_{\persistencePairNumberBisBis}
\ \mathrm{bijective}}
\left(
\sum_{\variablej=1}^{\persistencePairNumberBisBis}
\floorCost
\bigl(
\persistancepair_{*1}^{\variablej},
\persistancepair_{*2}^{\psi(\variablej)}
\bigr)
\right)^{1/2}.
\label{eq:wasserstein}
\end{equation}

The diagonal augmentation encodes the standard convention that features may be deleted or created by matching off-diagonal points to the diagonal. Thus, \( \wasserstein_2 \) measures the minimum root-sum-of-squares cost required to transport one persistence diagram to the other, while assigning zero cost to matches between diagonal points; see \autoref{fig:2-2-L2-PD-Wasserstein_Distance}.

\begin{figure}[t]
  \centering
  \includegraphics[width=0.78\linewidth]{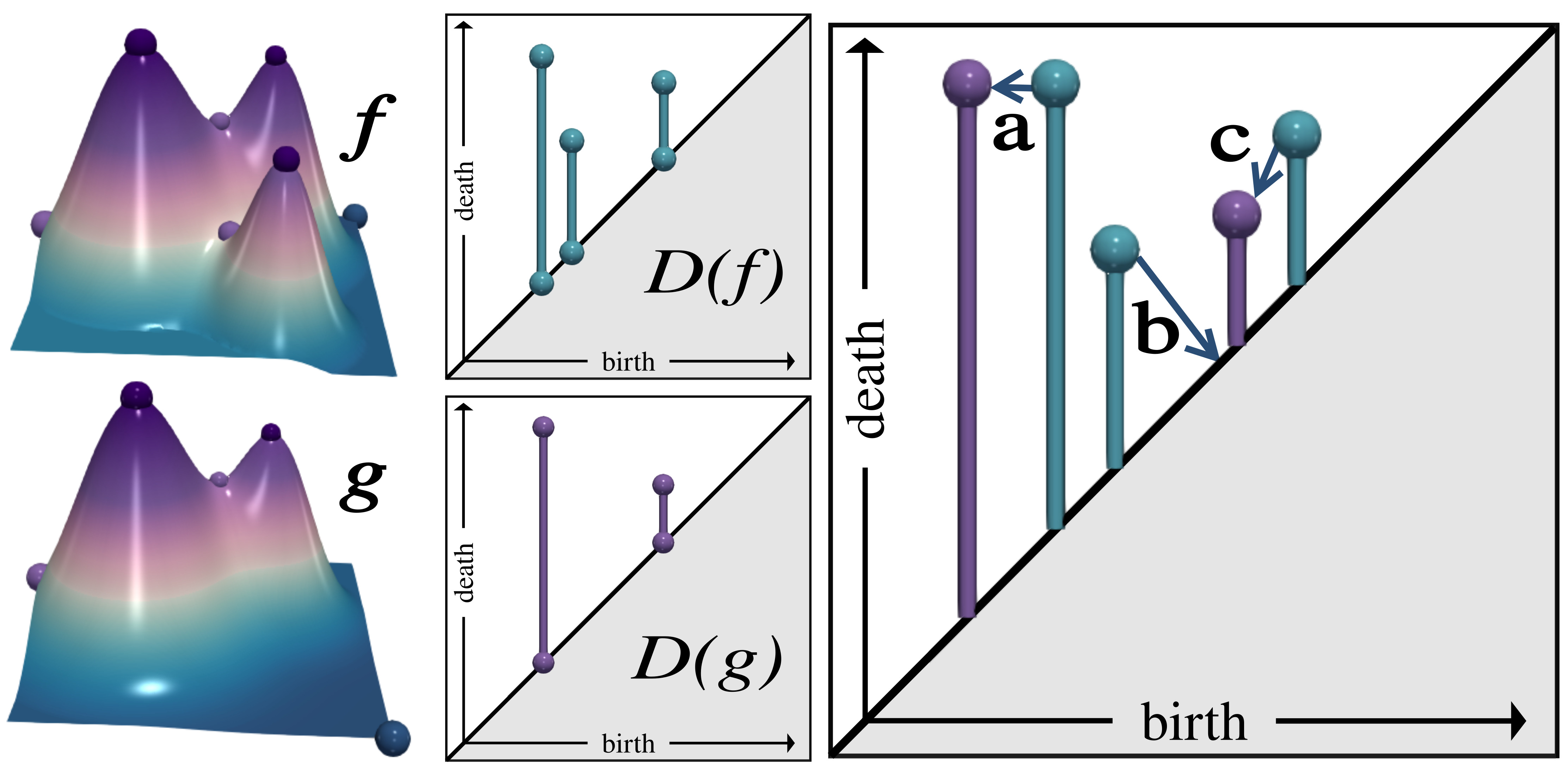}
  \caption{Left: two synthetic scalar fields, \(\functionf\) (top) and
\(\functiong\) (bottom). Center: their corresponding persistence
diagrams, \(\PDS(\functionf)\) and \(\PDS(\functiong)\). Right: the
optimal \(2\)-Wasserstein assignment \(\psi\), between
$\PDS(\functionf)$ and $\PDS(\functiong)$ visualized by arrows.
For clarity, the diagonal augmentations are omitted and only the
off-diagonal matchings are represented. The
squared \(2\)-Wasserstein distance is the sum of the squared arrow
lengths:
\(
a^{2}+b^{2}+c^{2}
=
\wasserstein_{2}\bigl(\PDS(\functionf),\PDS(\functiong)\bigr)^{2}.
\)}
  \label{fig:2-2-L2-PD-Wasserstein_Distance}
\end{figure}

\subsection{Sierpiński--Knopp space-filling curve}
\label{sec:skot_encoding}

We first present the finite-level recursive SK construction and the numerical
selector \(\iota_L\), before introducing the exact SK space-filling curve \(S\) and
its first-hit selector \(\iota\) as their limiting counterparts.

\smallskip
\noindent
\textbf{Finite-level recursive SK construction and numerical selector.}

Starting from the persistence triangle
\[
\Atri
=
\operatorname{conv}\{(0,0),(1,1),(0,1)\},
\]
the level-\(0\) construction consists of the single cell \(\Atri\),
associated with the parameter interval \([0,1]\). We equip this root cell with the ordered hypotenuse
\(((0,0),(1,1))\), whose endpoints are its entry and exit vertices, respectively. At each level-$(\ell+\!1)$ refinement
step, every current level-$\ell$ right isosceles triangular cell is subdivided into
two similar level-$(\ell+\!1)$ right isosceles cells, while its associated level-$\ell$ dyadic interval is split into two level-$(\ell+\!1)$ dyadic subintervals.

More precisely, let
\(
T=\operatorname{conv}\{r,p,q\}
\)
be a level-\(\ell\) cell, where \(p\) and \(q\) are respectively the entry and exit endpoints
of its hypotenuse, \(r\) is the remaining vertex, and
\(
m:=\frac{p+q}{2}
\)
is the midpoint of the hypotenuse. We subdivide \(T\) into
\(
T^{(0)}
=
\operatorname{conv}\{r,p,m\},
T^{(1)}
=
\operatorname{conv}\{r,m,q\}.
\)
The two children \(T^{(0)}\) and \(T^{(1)}\) inherit the ordered
hypotenuses \((p,r)\) and \((r,q)\), respectively. The SK traversal designates the level-$(\ell+1)$ cell $T^{(0)}$ as the first-visited child and the level-$(\ell+1)$ cell $T^{(1)}$ as
the second-visited child. Accordingly, if $I=[a,b]$ denotes the current level-$\ell$ dyadic interval
associated with $T$, we associate $T^{(0)}$ with the left half of $I$, yielding
a level-$(\ell+1)$ dyadic subinterval
\(
I^{(0)}:=[a,\ \tfrac{a + b}{2}],
\)
and we associate $T^{(1)}$ with the right half, yielding a level-$(\ell+1)$
dyadic subinterval
\(
I^{(1)}:=[\tfrac{a+b}{2},\ b].
\)

After \(L\) refinements, this construction yields \(2^L\) ordered
triangular cells
\(
T_0^{(L)},\ldots,T_{2^L-1}^{(L)},
\)
associated respectively with the dyadic intervals
\(
I_k^{(L)}
=
\left[
\frac{k}{2^L},
\frac{k+1}{2^L}
\right],
k=0,\ldots,2^L-1.
\)

\autoref{fig:sk_curve_levels} visualizes the cell visitation order
induced by this recursive construction for \(L=1,\ldots,9\).

\begin{figure}[t]
  \centering
  \includegraphics[width=.98\linewidth]
  {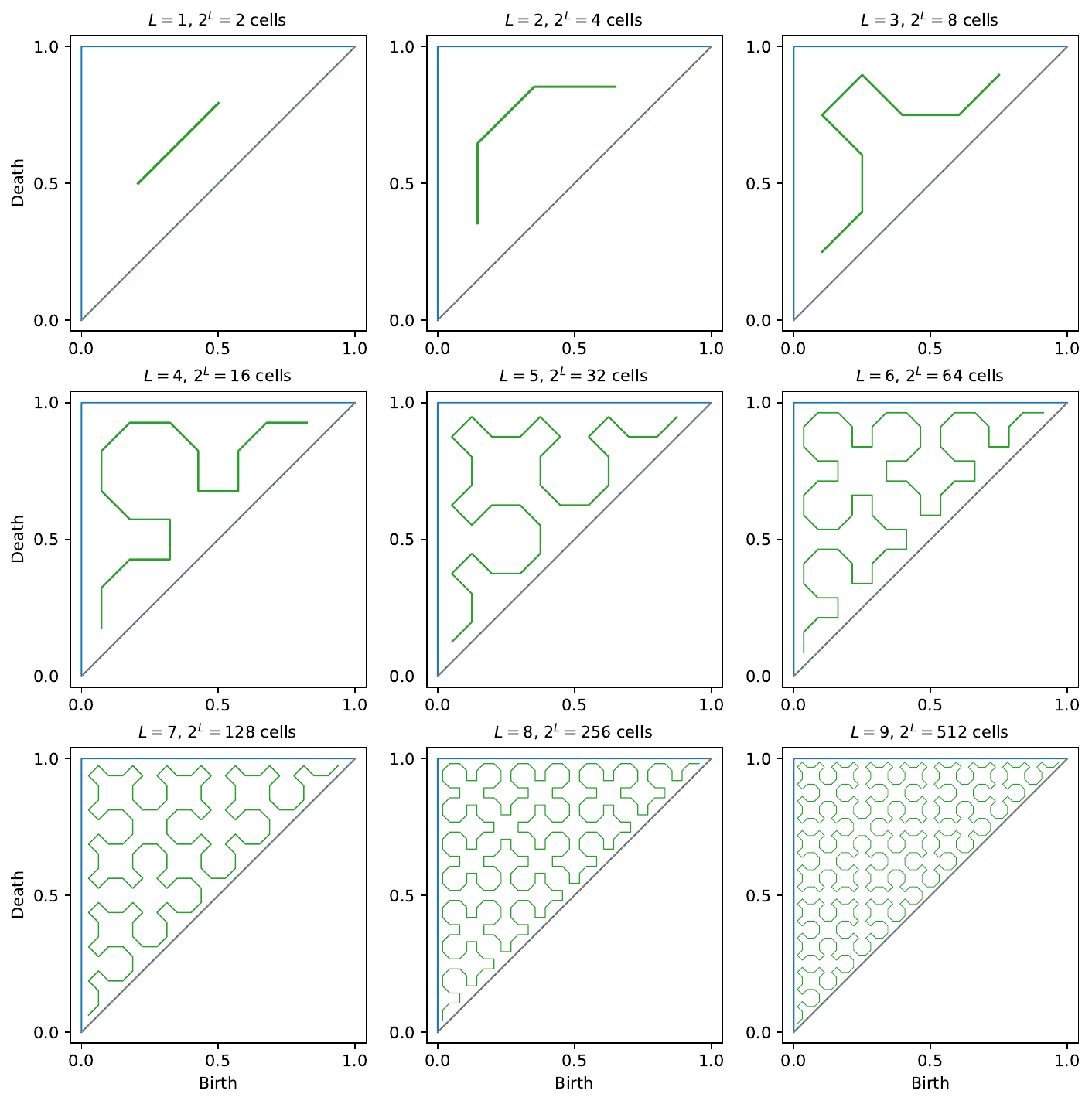}
  \caption{Finite-level SK cell traversals for \(L=1,\ldots,9\). At level
  \(L\), the polyline connects the incenters of the \(2^L\)
  triangular cells in visitation order, which coincides with the
  order of their associated dyadic subintervals of \([0,1]\).
  }
  \label{fig:sk_curve_levels}
\end{figure}

For a point \(z\in\Atri\), let
\(
k_L(z)
:=
\min
\left\{
k\in\{0,\ldots,2^L-1\}:
z\in T_k^{(L)}
\right\}.
\)
The minimum implements the first-visited-cell convention when \(z\)
belongs to the common boundary of several cells. We define the
finite-level selector by
\(
\iota_L(z)
:=
\frac{k_L(z)}{2^L},
\)
that is, the left endpoint of the first level-\(L\) dyadic interval
whose associated cell contains \(z\).

In practice, \(k_L(z)\) is computed by descending the binary
refinement tree. Initialize $k\leftarrow 0$ and set the
current right isosceles triangle to the root persistence triangle
$\Atri$. For $\ell=1,\dots,L$, form the two right isosceles child
triangles and decide which child contains $z$. This membership test is carried
out in constant time with an oriented-area predicate: with
$\mathrm{cross}(u,v):=u_x v_y-u_y v_x$, for a triangle $(a,b,c)$ one evaluates
the signs of $\mathrm{cross}(z-a,b-a)$, $\mathrm{cross}(z-b,c-b)$, and
$\mathrm{cross}(z-c,a-c)$ (with a small tolerance), and $z$ lies inside if and
only if these signs are consistent (boundary included). If $z$ lies in the
first-visited child, set $k\leftarrow 2k$; otherwise set $k\leftarrow 2k+1$;
update the current triangle accordingly and iterate. If $z$ lies on the common
boundary of the two children (up to numerical tolerance), we select the
first-visited child, consistently
with the definition of \(k_L(z)\). After \(L\) steps, the computed index satisfies
\[
k=k_L(z).
\]
Hence,
\[
I_k^{(L)}
=
\left[
\frac{k}{2^L},
\frac{k+1}{2^L}
\right]
\]
is the dyadic interval associated with the first-visited level-\(L\)
cell containing \(z\). The algorithm therefore returns
\[
\iota_L(z)=\frac{k}{2^L},
\]
the left endpoint of \(I_k^{(L)}\). Since each refinement level
requires constant work, the overall cost is \(O(L)\) per query point.

\smallskip
\noindent
\textbf{Exact SK curve and first-hit selector.}

The ordered cells determine a continuous polygonal map at each
refinement level: for \(k=0,\ldots,2^L-1\), let
\(e_k^{(L)}\) and \(q_k^{(L)}\) denote respectively the entry and exit
vertices of \(T_k^{(L)}\), that is, the two endpoints of its
hypotenuse ordered according to the SK traversal. For \(t\in I_k^{(L)}\), let
\[
\theta:=2^Lt-k\in[0,1]
\]
and define
\[
S_L(t)
:=
(1-\theta)e_k^{(L)}
+
\theta q_k^{(L)}.
\]
Thus, \(S_L\) traverses the level-\(L\) cell \(T_k^{(L)}\) linearly from its entry
vertex to its exit vertex. Using these entry and exit vertices, let
\[
S_L:[0,1]\to\Atri
\]
denote the polygonal approximation obtained by traversing
each level-\(L\) cell \(T_k^{(L)}\) from its entry vertex
\(e_k^{(L)}\) to its exit vertex \(q_k^{(L)}\) over the associated
dyadic interval \(I_k^{(L)}\). The SK visitation convention ensures that consecutive segments share
their endpoints, so \(S_L:[0,1]\to\Atri\) is continuous. Then, by the convergence result for the recursive SK construction
\cite{Sagan1994,Bader2013}, the sequence
\((S_L)_{L\geq0}\) converges uniformly, as \(L\to\infty\), to a
continuous surjection
\[
S:[0,1]\to\Atri,
\]
called the Sierpi\'nski--Knopp space-filling curve of $\Atri$. Since $S$ is surjective, every point $z\in\Atri$ is hit at least once. Therefore, we can define the associated first-hit selector
\[
\iota(z):=\min S^{-1}(\{z\})=\min\{t\in[0,1]:S(t)=z\}\in[0,1].
\]

Because $S$ is surjective, the fiber $S^{-1}(\{z\})\neq\varnothing,\forall\,z\in\Atri$, and since $S$ is
continuous and $\{z\}$ is closed, $S^{-1}(\{z\})$ is closed in $[0,1]$.
Therefore $S^{-1}(\{z\})$ is a non-empty closed subset of the compact $[0,1]$, hence it
attains its minimum, and $\iota(z)=\min S^{-1}(\{z\})$ is well defined. As a result, the finite-level selectors \(\iota_L\) converge
pointwise to the first-hit selector \(\iota\) as \(L\to\infty\)
\cite{Bader2013,Sagan1994}. Moreover,
\(
S(\iota(z))=z,\forall\,z\in\Atri,
\)
hence $\iota:\Atri\to[0,1]$ is injective (indeed, $\iota(z)=\iota(z') \implies z=S(\iota(z))=S(\iota(z'))=z' $), this will later guarantee that the
\Scorrection{assignment} problem on $[0,1]$ between the one-dimensional diagram-encoded measures induces a metric on
normalized persistence diagrams. The selector $\iota$ is also Borel measurable. Indeed, for any $t\in[0,1]$,
\[
\iota^{-1}([0,t])=\{z\in\Atri:\iota(z)\le t\}=S([0,t]),
\]
because $\iota(z)\le t$ if and only if $z$ is hit by $S$ at some parameter
$s\le t$. Since $S$ is continuous, $S([0,t])$ is a compact subset of $\Atri$,
hence closed in $\Atri$, and therefore Borel. Consequently, $\iota$ is Borel measurable; this measurability will ensure that our first-layer
one-dimensional encoding of normalized persistence diagrams is well defined.

\smallskip
\noindent
\textbf{H\"older regularity.} ~\cite{Shchepin2020} proves that on a unit-area right isosceles triangle
$T_{\text{ref}}$ the reference SK space-filling curve $S_{\text{ref}}:[0,1]\to T_{\text{ref}}$ is
$1/2$-H\"older; more precisely,
\[
\|S_{\text{ref}}(t)-S_{\text{ref}}(s)\|_2^2 \le 4\,|t-s|
\qquad\forall\,s,t\in[0,1].
\]

Our persistence triangle $\Atri$ is a right isosceles triangle of area $1/2$,
whereas the reference right isosceles triangle $T_{\text{ref}}$ in~\cite{Shchepin2020} has
area $1$. Therefore, $\Atri$ is a scaled copy of $T_{\text{ref}}$ with scale
$\alpha=1/\sqrt{2}$, up to an isometry of $\mathbb{R}^2$ (i.e., a rotation or reflection,
followed by a translation). Let $\varphi:\,T_{\text{ref}}\to\Atri$ be such a similarity, i.e., $\varphi(x)=\alpha R x+b$, where $R$ is
orthogonal and $b$ is a translation.

Choosing the orientation of $S_{\text{ref}}$ consistently with the SK traversal introduced above, our SK space-filling curve \(S\) on $\Atri$ is obtained from the reference
curve \(S_{\text{ref}}\) through the similarity
\[
S=\varphi\circ S_{\text{ref}}:[0,1]\to\Atri.
\]

Therefore, for all $s,t\in[0,1]$, using the fact that
$R$ is orthogonal (so $\|Rw\|_2=\|w\|_2$), we obtain
\[
\begin{aligned}
\|S(t)-S(s)\|_2^2
&=\|\varphi(S_{\text{ref}}(t))-\varphi(S_{\text{ref}}(s))\|_2^2 \\
&=\|\alpha R(S_{\text{ref}}(t)-S_{\text{ref}}(s))\|_2^2 \\
&=\alpha^2\,\|R(S_{\text{ref}}(t)-S_{\text{ref}}(s))\|_2^2 \\
&=\alpha^2\,\|S_{\text{ref}}(t)-S_{\text{ref}}(s)\|_2^2 \\
&\le 4\alpha^2\,|t-s|
=2\,|t-s|.
\end{aligned}
\]
Thus $S$ satisfies the $1/2$-H\"older bound
$\|S(t)-S(s)\|_2^2 \le 2\,|t-s|$ for all $s,t\in[0,1]$, and we set
$C_{\mathrm{SK}}:=2$ in the sequel. This regularity will later allow us to relate \Scorrection{$d_{\mathrm{SK}}^{}$} to the $2$-Wasserstein
distance between normalized persistence diagrams.

%% file: continuous_edit_distance_between_time_varying_persistence_diagrams_Section3_.tex
\section{Sierpiński--Knopp Wasserstein distance between normalized persistence diagrams}\label{sec:skot}

\noindent
In this section, we introduce the Sierpiński--Knopp Wasserstein distance \Scorrection{$d_{\mathrm{SK}}^{}$} between normalized persistence diagrams. First, we provide an overview of its construction and summarize its key
properties.
Then, we detail the mathematical formulation of \Scorrection{$d_{\mathrm{SK}}^{}$}, which is built in two conceptual layers:
(i) a deterministic one-dimensional encoding of normalized diagram points induced by the SK space-filling curve on the persistence triangle, and (ii) a one-dimensional
\Scorrection{assignment} problem on $[0,1]$ between the induced encodings. Next, we show how \Scorrection{$d_{\mathrm{SK}}^{}$} can be evaluated efficiently in practice through a
sorting-based formula, leading to an $O(N\log N)$ procedure for normalized diagrams of total
size $N$. Finally, we relate \Scorrection{$d_{\mathrm{SK}}^{}$} to the 2-Wasserstein distance between normalized persistence diagrams.

\subsection{Overview}\label{sec:skot_overview}

Here, we provide a high-level overview of the 
the \Scorrection{$d_{\mathrm{SK}}^{}$} distance. We first describe how normalized diagrams are handled with respect to the
diagonal and encoded onto the unit interval using the SK space-filling curve. We then define \Scorrection{$d_{\mathrm{SK}}^{}$}
as a one-dimensional \Scorrection{assignment} problem on these encodings, and summarize its main properties.

\medskip
The \Scorrection{$d_{\mathrm{SK}}^{}$} construction combines two conceptual layers.

\smallskip
\noindent
\textbf{(i) A diagonal-aware 1D encoding of normalized persistence diagrams.}
Let $\PD=\{\persistancepair^{1},\dots,\persistancepair^{\persistencePairNumberBisBis
}\}$, be a normalized persistence diagram. We can see $\PD=\sum_{k=1}^K \,\delta_{\persistancepair^{k}}$ as an integer atomic measure on $\Atri$. Using the associated first-hit selector
$\iota(z)=\min S^{-1}(\{z\})$ of the SK space-filling curve we can map points in $\Atri$ to scalar values in $[0,1]$. This yields two integer atomic measures on
$[0,1]$:
\[
\mu_X:=\iota_{\#}X,
\qquad
\nu_X:=\iota_{\#}(\Pi_{\#}X),
\]
encoding respectively the off-diagonal atoms of $X$ and their diagonal
projections, both have equal total mass $|X|$.

\smallskip
\noindent
\textbf{(ii) \Scorrection{$d_{\mathrm{SK}}^{}$} distance between normalized persistence diagrams as a 1D \Scorrection{assignment} between their augmented encodings.}

We compare normalized diagrams by evaluating the classical 1D $1$-Wasserstein distance between their
augmented 1D encodings on $([0,1],|\cdot|)$. For two diagrams $X_1,X_2$ we define
\[
\mathrm{\Scorrection{d_{\mathrm{SK}}^{}}}(X_1,X_2)
:=\sqrt{W_1\big(\mu_{X_1}+\nu_{X_2},\ \mu_{X_2}+\nu_{X_1}\big)}.
\]
Equivalently, setting $\bar X_1:=X_1+\Pi_{\#}X_2$ and $\bar X_2:=X_2+\Pi_{\#}X_1$
(which have equal mass), one has
$\mathrm{\Scorrection{d_{\mathrm{SK}}^{}}}(X_1,X_2)=\sqrt{W_1(\iota_{\#}\bar X_1,\iota_{\#}\bar X_2)}$. This augmentation mirrors the standard persistence-diagram \Scorrection{assignment} convention in
which unmatched off-diagonal points may be matched to the diagonal. In \Scorrection{$d_{\mathrm{SK}}^{}$}, this effect is encoded by adding diagonal projections to the compared measures; the difference is that the \Scorrection{resulting assignment problem} is no longer solved directly in the persistence triangle, but through a 1D 1-Wasserstein problem on $[0,1]$.

\autoref{fig:dsk_assignment_pipeline} provides a finite-level illustration of the two-layer
construction at \(L=6\): diagram points and their
diagonal projections are encoded on \([0,1]\), the augmented scalar
lists are matched monotonically, and the resulting pairing is lifted
back to the normalized persistence triangle.

\begin{figure}[t]
  \centering
  \includegraphics[width=.98\linewidth]
  {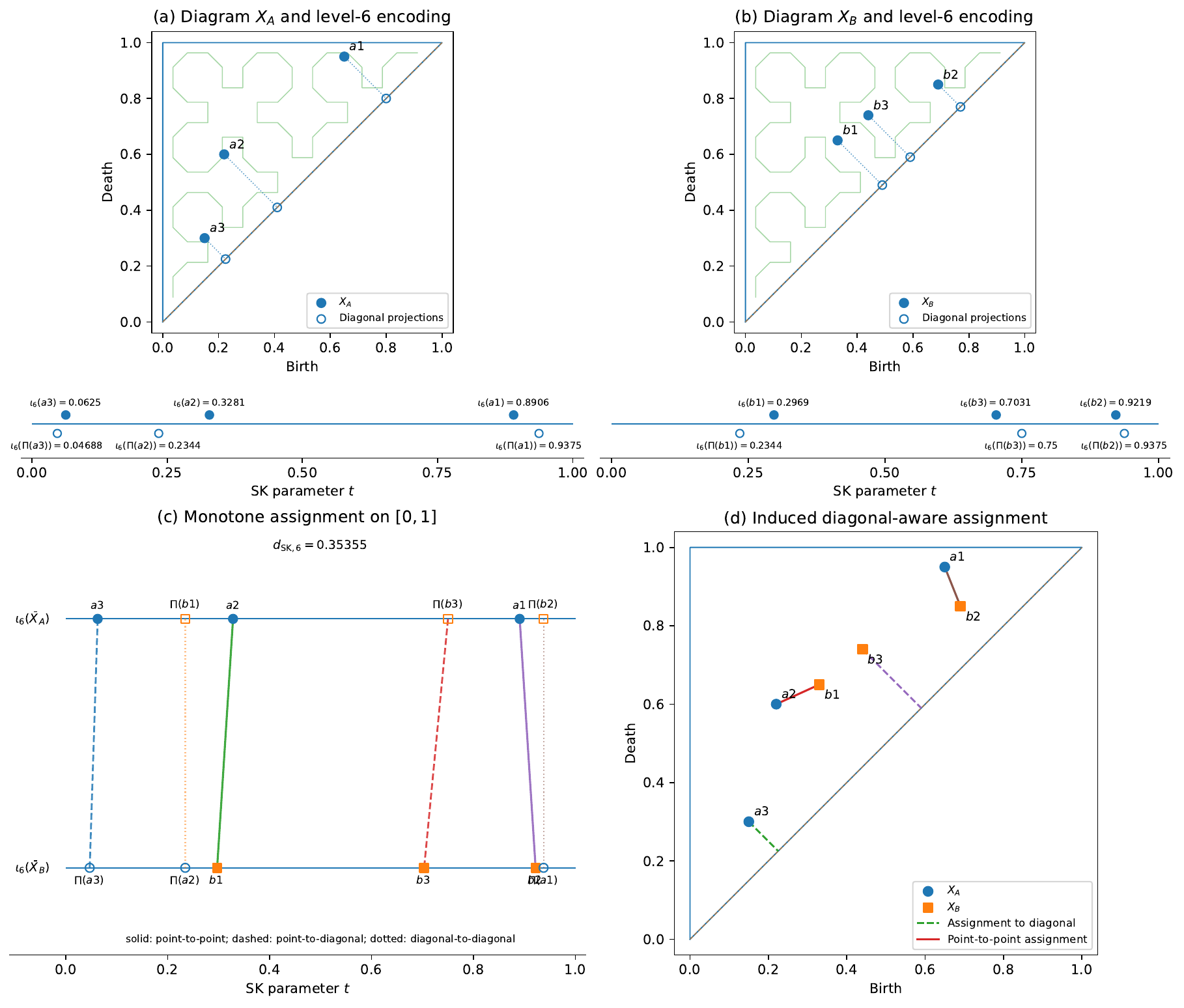}
  \caption{
  Finite-level construction of \(d_{\mathrm{SK},6}\) on two toy
  normalized persistence diagrams. (a)--(b) The points of \(X_A\)
  and \(X_B\), together with their diagonal projections, are encoded
  by the level-\(6\) selector \(\iota_6\). (c) Sorting the two
  augmented scalar lists yields the monotone optimal assignment on
  \([0,1]\), from which \(d_{\mathrm{SK},6}\) is computed.
  (d) Retaining point identities induces an explicit partial
  assignment between \(X_A\) and \(X_B\), with unmatched points
  assigned to \(\Delta\). Diagonal-to-diagonal pairs are omitted from
  the final panel.
  }
  \label{fig:dsk_assignment_pipeline}
\end{figure}

\medskip
\noindent
\textbf{Key properties.}
First, \Scorrection{$d_{\mathrm{SK}}^{}$} defines a metric on the space of normalized persistence diagrams, and its square $\mathrm{\Scorrection{d_{\mathrm{SK}}^{\,2}}}$ does as well (see \autoref{app:thm:SKOT_metric} and \autoref{app:cor:dSK_metric}). Second, because one-dimensional $W_1$ has a
closed-form monotone optimal coupling, \Scorrection{$d_{\mathrm{SK}}^{}$} between two normalized persistence diagrams $X_1$ and $X_2$ can be evaluated by sorting two length-$N$ lists of $\iota$-values (with $N=|X_1|+|X_2|$), giving an $O(N\log N)$ evaluation procedure once the relevant $\iota$-values have been computed. Third, thanks to the $1/2$-H\"older regularity of the SK space-filling curve, the Wasserstein distance $W_{2}$ is controlled, up to a constant factor, by $\mathrm{\Scorrection{d_{\mathrm{SK}}^{}}}$, which makes $\mathrm{\Scorrection{d_{\mathrm{SK}}^{}}}$ a fast surrogate metric for repeated $W_{2}$ evaluations on normalized persistence diagrams. Finally, \Scorrection{$d_{\mathrm{SK}}^{\,2}$} admits an explicit $L^1$ signature representation. In particular, there is an explicit Hilbert embedding $\Phi$ into $L^2$ such that $\mathrm{\Scorrection{d_{\mathrm{SK}}^{}}}(X_1,X_2)=\lVert\Phi(X_1)-\Phi(X_2)\rVert_{L^2}$, and the associated Gaussian kernel $\exp(-\mathrm{\Scorrection{d_{\mathrm{SK}}^{}}}(X_1,X_2)^2/(2\sigma^2))$ is positive definite. Thus,
\Scorrection{\(d_{\mathrm{SK}}\)} supports both Hilbertian embedding
methods and kernel-based learning.

\subsection{Deterministic diagonal-aware 1D encoding of normalized persistence diagrams}
\label{sec:skot_layer1}

The first layer of \Scorrection{$d_{\mathrm{SK}}^{}$} maps each normalized persistence diagram to a diagonal-aware one-dimensional encoding on the unit interval. This reduction is entirely driven by the fixed SK space-filling curve introduced above and is designed to preserve both the location of off-diagonal atoms and their interaction with the diagonal $\Delta$:

A normalized persistence diagram is represented as a finite integer atomic measure
\[
X=\sum_{r=1}^m a_r\,\delta_{z_r},
\qquad
z_r\in\Atri,\ \ a_r\in\mathbb{N},
\]
more precisely, with all atoms lying in $\Atri\setminus\Delta$.

Given the SK curve \(S:[0,1]\to\Atri\), each persistence pair \(z_{r}\in\Atri\) is assigned its first-hit value via the selector \(\iota:\Atri\to[0,1]\):
\[
\iota(z_{r}):=\min\{t\in[0,1]:S(t)=z_{r}\}.
\]
The minimum is well defined, and \(\iota\) is injective, as shown in~\autoref{sec:skot_encoding}. This injectivity is important: distinct points of the persistence triangle receive distinct scalar values, so the one-dimensional encoding does not collapse geometric information. In particular, the two sets \(\iota(\Atri\setminus\Delta)\) and \(\iota(\Delta)\) are disjoint, which means that off-diagonal atoms and diagonal atoms remain separated after encoding. The selector is also Borel measurable, so pushing atomic measures through \(\iota\) is well defined.

This leads to two associated integer atomic measures on \([0,1]\):
\[
\mu_X:=\iota_{\#}X,
\qquad
\nu_X:=\iota_{\#}(\Pi_{\#}X).
\]
more explicitly, at the level of atoms,
\[
\mu_X=\sum_{r=1}^m a_r\,\delta_{\iota(z_r)},
\qquad
\nu_X=\sum_{r=1}^m a_r\,\delta_{\iota(\Pi(z_r))}.
\]
The first measure \(\mu_X\) records where the diagram places mass along the SK parameterization, while the second measure \(\nu_X\) records where the same mass would lie after projection onto the diagonal. Both measures have the same total mass, namely \(|X|\). Thus, the first layer of \Scorrection{$d_{\mathrm{SK}}^{}$} does not produce a single list of scalar values, but rather a diagonal-aware pair of one-dimensional measures. While \(\mu_X\) alone already determines \(X\) because \(\iota\) is injective, the companion measure \(\nu_X\) is essential because persistence-diagram comparisons must also account for matches to the diagonal.

It is convenient to combine these two measures into the signed measure
\[
\sigma_X:=\mu_X-\nu_X,
\qquad
\sigma_X([0,1])=0.
\]
This signed object can be viewed as a compact one-dimensional signature of the diagram: positive atoms represent off-diagonal contributions carried by \(X\), while negative atoms represent the corresponding diagonal cancellations. Because \(\mu_X\) and \(\nu_X\) are supported on disjoint subsets of \([0,1]\), this signed encoding still retains the full information of the original normalized diagram. \Scorrection{As shown in \autoref{sec:skot_hilbert_kernel}, this representation is
the key object behind the cumulative \(L^1\) formulation of the squared distance
\(d_{\mathrm{SK}}^{\,2}\), from which the Hilbertian representation
of \(d_{\mathrm{SK}}\) and its Gaussian kernel follow.}

At this stage, no \Scorrection{assignment} problem has yet been solved. The first layer simply maps each diagram \(X\) to the pair of one-dimensional measures
\[
X\longmapsto(\mu_X,\nu_X).
\]

\subsection{\Scorrection{$d_{\mathrm{SK}}^{}$} distance between
normalized persistence diagrams as a 1D \Scorrection{assignment} between their
augmented encodings}
\label{sec:skot_layer2}

The second layer of \Scorrection{$d_{\mathrm{SK}}^{}$} turns the diagonal-aware one-dimensional encodings introduced above into a distance between normalized persistence diagrams. Let \(X_1\) and \(X_2\) be two normalized persistence diagrams, and let
\[
(\mu_{X_1},\nu_{X_1})
\qquad\text{and}\qquad
(\mu_{X_2},\nu_{X_2})
\]
denote their corresponding one-dimensional encodings on \([0,1]\) produced by the first layer. A direct comparison of \(\mu_{X_1}\) and \(\mu_{X_2}\) alone would only account for matches between off-diagonal atoms. However, as in the standard \Scorrection{assignment} formulation for persistence diagrams, unmatched off-diagonal points must also be allowed to match to the diagonal. In \Scorrection{$d_{\mathrm{SK}}^{}$}, this effect is encoded directly in one dimension by augmenting each off-diagonal measure with the diagonal projection measure of the other diagram.

More precisely, we introduce the two augmented measures
\[
\xi_{1}:=\mu_{X_1}+\nu_{X_2},
\qquad
\xi_{2}:=\mu_{X_2}+\nu_{X_1}.
\]
Both are finite nonnegative measures on \([0,1]\), and they have the same total mass:
\[
\xi_{1}([0,1])=\xi_{2}([0,1])=|X_1|+|X_2|.
\]
Equivalently, if we define the augmented diagrams
\[
\bar X_1:=X_1+\Pi_{\#}X_2,
\qquad
\bar X_2:=X_2+\Pi_{\#}X_1,
\]
then
\[
\xi_{1}=\iota_{\#}\bar X_1,
\qquad
\xi_{2}=\iota_{\#}\bar X_2.
\]
Thus, at this step, comparing two normalized persistence diagrams reduces to comparing two equal-mass atomic measures on the unit interval.

We then define \Scorrection{\(d_{\mathrm{SK}}\)} as the square root of
the one-dimensional \(1\)-Wasserstein distance between these augmented encodings. More precisely, for two finite nonnegative Borel measures \(\alpha,\beta\) on \([0,1]\) with equal total mass, we write
\[
W_1(\alpha,\beta)
:=
\inf_{\pi\in\Pi(\alpha,\beta)}
\int_{[0,1]^2}|s-t|\,d\pi(s,t),
\]
where \(\Pi(\alpha,\beta)\) denotes the set of couplings of \(\alpha\) and \(\beta\). Here, a coupling \(\pi\) is a joint measure on \([0,1]^2\) whose first and second marginals are \(\alpha\) and \(\beta\), respectively (i.e., \((\mathrm{pr}_1)_\#\pi=\alpha\) and \((\mathrm{pr}_2)_\#\pi=\beta\), where \(\mathrm{pr}_1(s,t)=s\) and \(\mathrm{pr}_2(s,t)=t\)). The quantity \(|s-t|\) is the cost of moving one unit of mass from \(s\) to \(t\), so \(W_1(\alpha,\beta)\) is the minimum total transport cost over all couplings of \(\alpha\) and \(\beta\). \Scorrection{$d_{\mathrm{SK}}^{}$} is then defined by
\[
\mathrm{\Scorrection{d_{\mathrm{SK}}^{}}}(X_1,X_2):=\sqrt{W_1(\xi_{1},\xi_{2})
}=
\sqrt{W_1\bigl(\mu_{X_1}+\nu_{X_2},\,\mu_{X_2}+\nu_{X_1}\bigr)}.
\]

Equivalently, its squared value satisfies
\[
\mathrm{\Scorrection{d_{\mathrm{SK}}^{}}}(X_1,X_2)^2
=
W_1(\xi_1,\xi_2)
=
W_1\bigl(
\mu_{X_1}+\nu_{X_2},
\mu_{X_2}+\nu_{X_1}
\bigr).
\]

\Scorrection{Unlike the cost used in \(W_2\), the ground cost
\(|s-t|\) is applied to every pair in the one-dimensional coupling.
Consequently, pairing two distinct diagonal projections contributes a
positive cost to \(d_{\mathrm{SK}}^{}\), whereas a
diagonal-to-diagonal pair has zero cost in \(W_2\). This is one source
of discrepancy between \(d_{\mathrm{SK}}\) and \(W_2\).}

A key algebraic identity is
\[
(\mu_{X_1}+\nu_{X_2})-(\mu_{X_2}+\nu_{X_1})
=
(\mu_{X_1}-\nu_{X_1})-(\mu_{X_2}-\nu_{X_2})
=
\sigma_{X_1}-\sigma_{X_2}.
\]
In other words, the signed difference between the two augmented measures is exactly the difference of the signed measures introduced in the first layer. In particular, for every threshold \(t\in[0,1]\),
\[
(\xi_1-\xi_2)([0,t])=(\sigma_{X_1}-\sigma_{X_2})([0,t]).
\]

Since one-dimensional \(W_1\) depends only on the cumulative discrepancy
\[
t\longmapsto (\xi_1-\xi_2)([0,t]),
\]
\cite{peyre2019computational}, and this discrepancy is exactly
\[
t\longmapsto (\sigma_{X_1}-\sigma_{X_2})([0,t]),
\]
the value of $\mathrm{\Scorrection{d_{\mathrm{SK}}^{}}}(X_1,X_2)^2$ is entirely determined by the signed difference \(\sigma_{X_1}-\sigma_{X_2}\). This identity is precisely what makes the cumulative \(L^1\) representation of the squared distance \Scorrection{$d_{\mathrm{SK}}^{\,2}$} and, in turn, the
Hilbertian and kernel interpretations of
\Scorrection{\(d_{\mathrm{SK}}\)} developed in the next section.

In summary, the second layer of \Scorrection{$d_{\mathrm{SK}}^{}$} replaces the original two-dimensional partial-assignment problem between normalized persistence diagrams by a one-dimensional \Scorrection{assignment} problem on \([0,1]\), while still retaining the diagonal-matching mechanism through the augmented encodings. This is the central tradeoff behind \Scorrection{$d_{\mathrm{SK}}^{}$}: the comparison is performed in one dimension, which makes the distance computationally much simpler and, as shown in the next subsection, reducible to a sorting-based \(O(N\log N)\) procedure for two normalized persistence diagrams \(D_1\) and \(D_2\), where \(N=|D_1|+|D_2|\), while the \(1/2\)-H\"older regularity of the SK curve ensures that the \Scorrection{$d_{\mathrm{SK}}^{}$} geometry remains tied to the original two-dimensional geometry of normalized persistence diagrams.

\subsection{Sorting-based evaluation and computational complexity of \Scorrection{$d_{\mathrm{SK}}^{}$}}
\label{sec:skot_computation}

Another key advantage of reducing the original two-dimensional geometry to a one-dimensional \Scorrection{assignment} problem is that the transport on the line admits a closed-form monotone solution. If
\[
A:=\{\iota(x):x\in X_1\}\cup\{\iota(\Pi(y)):y\in X_2\},
\qquad
B:=\{\iota(y):y\in X_2\}\cup\{\iota(\Pi(x)):x\in X_1\},
\]
denote the two multisets of scalar values associated with \(\xi_{1}\) and \(\xi_{2}\), and if
\[
a_{(1)}\le\cdots\le a_{(N)},
\qquad
b_{(1)}\le\cdots\le b_{(N)},
\qquad
N:=|X_1|+|X_2|,
\]
are their sorted versions, then the optimal \Scorrection{assignment} is obtained by matching the \(k\)-th smallest value on the left with the \(k\)-th smallest value on the right. As a result (see \autoref{app:prop:1D_EMD}), 
\[
\mathrm{\Scorrection{d_{\mathrm{SK}}^{}}}(X_1,X_2)
=
\left(
\sum_{k=1}^{N}
|a_{(k)}-b_{(k)}|
\right)^{1/2}.
\]

\Scorrection{Keeping the identity of the diagram point associated with each scalar
value, the same sorted pairing induces an explicit assignment between
the augmented diagrams \(\bar X_1\) and \(\bar X_2\). Equivalently, it
defines a partial assignment between \(X_1\) and \(X_2\), with
unmatched points assigned to the diagonal.}

For the finite-level numerical evaluation, the exact selector
\(\iota\) is replaced by its finite-level selector \(\iota_L\).
Computing \(\iota_L\) for a diagram point or a diagonal projection
requires traversing \(L\) successive levels of the SK refinement, with
constant work at each level. Since the two diagrams contain \(N\)
points in total and generate \(N\) diagonal projections, constructing
the multisets \(A\) and \(B\) costs \(O(NL)\). 

Sorting the two lists \(A\) and \(B\) costs
\(O(N\log N)\), while evaluating the
resulting monotone assignment and taking the final square root costs
\(O(N)\). Consequently, the complete evaluation of
\(d_{\mathrm{SK},L}\), including the computation of all selector
values, has complexity
\[
O\!\left(NL+N\log N\right)
=
O\!\left(N(L+\log N)\right).
\]
For a fixed refinement level \(L\), as used in our implementation,
this reduces to \(O(N\log N)\).

This sorting-based formula is the main algorithmic advantage of \Scorrection{$d_{\mathrm{SK}}^{}$}. After the geometric information of the diagrams has been encoded on the unit interval, the full comparison reduces to a one-dimensional monotone \Scorrection{assignment} problem with a closed-form solution. In this sense, \Scorrection{$d_{\mathrm{SK}}^{}$} preserves the diagonal-aware structure of persistence-diagram matching while replacing the original two-dimensional optimal \Scorrection{assignment} computation by a simple and efficient procedure on sorted scalar codes.

\subsection{Controlling the 2-Wasserstein distance with \texorpdfstring{$\mathrm{\Scorrection{d_{\mathrm{SK}}^{}}}$}{\Scorrection{$d_{\mathrm{SK}}^{}$}}}
\label{sec:skot_w2}

We explain in this subsection how the \(1/2\)-H\"older regularity of the SK space-filling curve ties the one-dimensional \Scorrection{assignment} geometry of \Scorrection{$d_{\mathrm{SK}}^{}$} back to the original two-dimensional Wasserstein geometry of normalized persistence diagrams. More precisely, it allows us to control the quadratic 2-Wasserstein distance by \(\mathrm{\Scorrection{d_{\mathrm{SK}}^{}}}\), thereby justifying \(d_{\mathrm{SK}}\) as a fast surrogate metric for repeated \(W_{2}\)-type evaluations.

Recall that, for two normalized persistence diagrams \(X_1\) and \(X_2\), we introduced the augmented diagrams
\[
\bar X_1:=X_1+\Pi_{\#}X_2,
\qquad
\bar X_2:=X_2+\Pi_{\#}X_1,
\]
which have the same total mass. We also consider the diagonal-aware squared cost
\[
c_\Delta(x,y):=
\begin{cases}
\|x-y\|_2^2, & x\notin\Delta,\ y\notin\Delta,\\[2pt]
d_\Delta(x)^2, & x\notin\Delta,\ y\in\Delta,\\[2pt]
d_\Delta(y)^2, & x\in\Delta,\ y\notin\Delta,\\[2pt]
0, & x\in\Delta,\ y\in\Delta,
\end{cases}
\]
where
\[
d_\Delta(z):=\inf_{u\in[0,1]}\|z-(u,u)\|_2
\qquad (z\in\Atri)
\]
denotes the Euclidean distance to the diagonal. The associated quadratic Wasserstein distance between normalized persistence diagrams is then defined by
\[
W_{2}^2(X_1,X_2)
:=
\inf_{\gamma\in\Pi(\bar X_1,\bar X_2)}
\int_{\Atri\times\Atri} c_\Delta(x,y)\,d\gamma(x,y).
\]

On the other hand, by construction,
\[
\mathrm{\Scorrection{d_{\mathrm{SK}}^{}}}(X_1,X_2)^{2}=W_1(\iota_{\#}\bar X_1,\iota_{\#}\bar X_2),
\]
where the ground cost on \([0,1]\) is \(|t-s|\). Let \(\pi^\star\) be the optimal monotone coupling between \(\iota_{\#}\bar X_1\) and \(\iota_{\#}\bar X_2\), so that
\[
\mathrm{\Scorrection{d_{\mathrm{SK}}^{}}}(X_1,X_2)^2
=
\int_{[0,1]^2}|t-s|\,d\pi^\star(t,s).
\]
Pushing this coupling forward through the SK curve yields
\[
\Gamma_{12}:=(S,S)_{\#}\pi^\star.
\]
Since \(S\circ\iota=\mathrm{id}_{\Atri}\), the marginals of \(\Gamma_{12}\) are exactly \(\bar X_1\) and \(\bar X_2\), so \(\Gamma_{12}\) is an admissible coupling for \(W_{2}(X_1,X_2)\). \Scorrection{In
the finite-diagram setting considered here, \(\Gamma_{12}\) encodes precisely the
diagonal-aware assignment between \( X_1\) and \( X_2\) obtained by monotonically pairing the sorted scalar lists (see~\autoref{sec:skot_computation}).}

Now let \(x:=S(t)\) and \(y:=S(s)\). If both points are off-diagonal, then \(c_\Delta(x,y)=\|x-y\|_2^2\). If one of them lies on the diagonal, say \(y\in\Delta\), then
\[
c_\Delta(x,y)=d_\Delta(x)^2\le \|x-y\|_2^2,
\]
because \(y\) is itself a diagonal point. If both points lie on the diagonal, then \(c_\Delta(x,y)=0\). Therefore, in all cases,
\[
c_\Delta(S(t),S(s))\le \|S(t)-S(s)\|_2^2.
\]
By the \(1/2\)-H\"older bound established in Section~\ref{sec:skot_encoding},
\[
\|S(t)-S(s)\|_2^2\le C_{\mathrm{SK}}\,|t-s|.
\]
Combining the two inequalities and integrating against \(\pi^\star\), we obtain
\[
\int_{\Atri\times\Atri} c_\Delta(x,y)\,d\Gamma_{12}(x,y)
=
\int_{[0,1]^2} c_\Delta(S(t),S(s))\,d\pi^\star(t,s)
\le
C_{\mathrm{SK}}\int_{[0,1]^2}|t-s|\,d\pi^\star(t,s)
=
C_{\mathrm{SK}}\,\mathrm{\Scorrection{d_{\mathrm{SK}}^{}}}(X_1,X_2)^{2}.
\]
Since \(\Gamma_{12}\) is only one admissible coupling among all couplings defining \(W_{2,}(X_1,X_2)\), taking the infimum yields
\[
W_{2}^2(X_1,X_2)\le C_{\mathrm{SK}}\,
\mathrm{\Scorrection{d_{\mathrm{SK}}^{}}}(X_1,X_2)^2.
\]
Equivalently,
\[
W_{2}(X_1,X_2)\le \sqrt{C_{\mathrm{SK}}}\,\mathrm{\Scorrection{d_{\mathrm{SK}}^{}}}(X_1,X_2).
\]

For the normalized SK curve used throughout this section, we have \(C_{\mathrm{SK}}=2\), so in particular
\[
W_{2}(X_1,X_2)\le \sqrt{2}\,\mathrm{\Scorrection{d_{\mathrm{SK}}^{}}}(X_1,X_2).
\]

A formal statement and proof of this bound are given in
\autoref{app:thm:W2_le_SKOT}.

Thus, although \Scorrection{$d_{\mathrm{SK}}^{}$} is defined through
a one-dimensional assignment problem on \([0,1]\), the
\(1/2\)-Hölder regularity of the SK curve ensures that it controls the
two-dimensional \(2\)-Wasserstein distance between persistence
diagrams. This is precisely why \(d_{\mathrm{SK}}\) can be used as a computationally efficient surrogate for the diagonal-aware quadratic Wasserstein distance between normalized persistence diagrams.

\begin{remark}[A tighter but non-metric surrogate]
Recall that the proof above constructs a specific admissible coupling
\[
\Gamma_{12}:=(S,S)_{\#}\pi^\star \in \Pi(\bar X_1,\bar X_2),
\]
where \(\pi^\star\) is the monotone optimal coupling on \([0,1]\). Define its associated quadratic cost by
\[
W_{\Gamma}^2(X_1,X_2)
:=
\int_{\Atri\times\Atri} c_\Delta(x,y)\,d\Gamma_{12}(x,y),
\]
or equivalently
\[
W_{\Gamma}(X_1,X_2)
:=
\left(
\int_{\Atri\times\Atri} c_\Delta(x,y)\,d\Gamma_{12}(x,y)
\right)^{1/2}.
\]

\Scorrection{Thus, \(d_{\mathrm{SK}}^{}\) and \(W_\Gamma^{}\) evaluate the
correspondence induced by the same one-dimensional coupling with
different costs: the former uses \(|s-t|\) on \([0,1]\), whereas the
latter uses \(c_\Delta\) in \(\Atri\), so diagonal-to-diagonal pairs
have zero cost, as in \(W_2\).}

By construction, one has
\[
W_{2}^2(X_1,X_2)
\le
W_{\Gamma}^2(X_1,X_2)
\le
C_{\mathrm{SK}}\,\mathrm{\Scorrection{d_{\mathrm{SK}}^{}}}(X_1,X_2)^{2},
\]
or equivalently,
\[
W_{2}(X_1,X_2)
\le
W_{\Gamma}(X_1,X_2)
\le
\sqrt{C_{\mathrm{SK}}}\,\mathrm{\Scorrection{d_{\mathrm{SK}}^{}}}(X_1,X_2).
\]
Thus \(W_{\Gamma}\) can be viewed as a tighter upper surrogate for \(W_{2}\) than the coarse bound \(\sqrt{C_{\mathrm{SK}}}\,\mathrm{\Scorrection{d_{\mathrm{SK}}^{}}}\), since it evaluates the \Scorrection{assignment} cost directly in the persistence triangle along the specific correspondence induced by \(\pi^\star\).

\Scorrection{However, unlike \(d_{\mathrm{SK}}\), \(W_\Gamma\) is not a metric,
as shown by the following singleton counterexample. Let
\[
X=\{x\},
\qquad
Y=\{y\},
\qquad
Z=\{z\},
\]
with
\[
x=(0.40,0.85),
\qquad
y=(0.50,0.80),
\qquad
z=(0.55,0.75).
\]
For the SK traversal fixed above, the level-\(8\) selector values satisfy
\[
\iota_8(x)
<
\iota_8(y)
<
\iota_8(\Pi(x))
<
\iota_8(z)
<
\iota_8(\Pi(y))
=
\iota_8(\Pi(z)).
\]
Since each exact first-hit value lies in the level-\(8\) dyadic
interval associated with its finite-level value, and these intervals
are pairwise disjoint and strictly ordered for the five distinct
locations, this finite-level ordering certifies the exact ordering:
\[
\iota(x)
<
\iota(y)
<
\iota(\Pi(x))
<
\iota(z)
<
\iota(\Pi(y))
=
\iota(\Pi(z)).
\]
The induced monotone assignments therefore match \(x\) with \(y\)
and \(y\) with \(z\), whereas the comparison between \(X\) and \(Z\)
assigns both off-diagonal points to the diagonal. Consequently,
\[
\begin{aligned}
W_\Gamma(X,Y)
&=
\|x-y\|_2
\approx
0.112,\\
W_\Gamma(Y,Z)
&=
\|y-z\|_2
\approx
0.071,\\
W_\Gamma(X,Z)
&=
\sqrt{d_\Delta(x)^2+d_\Delta(z)^2}
\approx
0.348.
\end{aligned}
\]
Hence,
\[
W_\Gamma(X,Z)
>
W_\Gamma(X,Y)+W_\Gamma(Y,Z),
\]
since \(0.348>0.112+0.071\). The failure arises because the
SK-induced assignments selected for the three pairs are not mutually
compatible. The failure of the triangle inequality for \(W_\Gamma\) is illustrated
in \autoref{fig:wgamma_triangle_counterexample}.}

\begin{figure}[t]
  \centering
  \includegraphics[width=\linewidth]
  {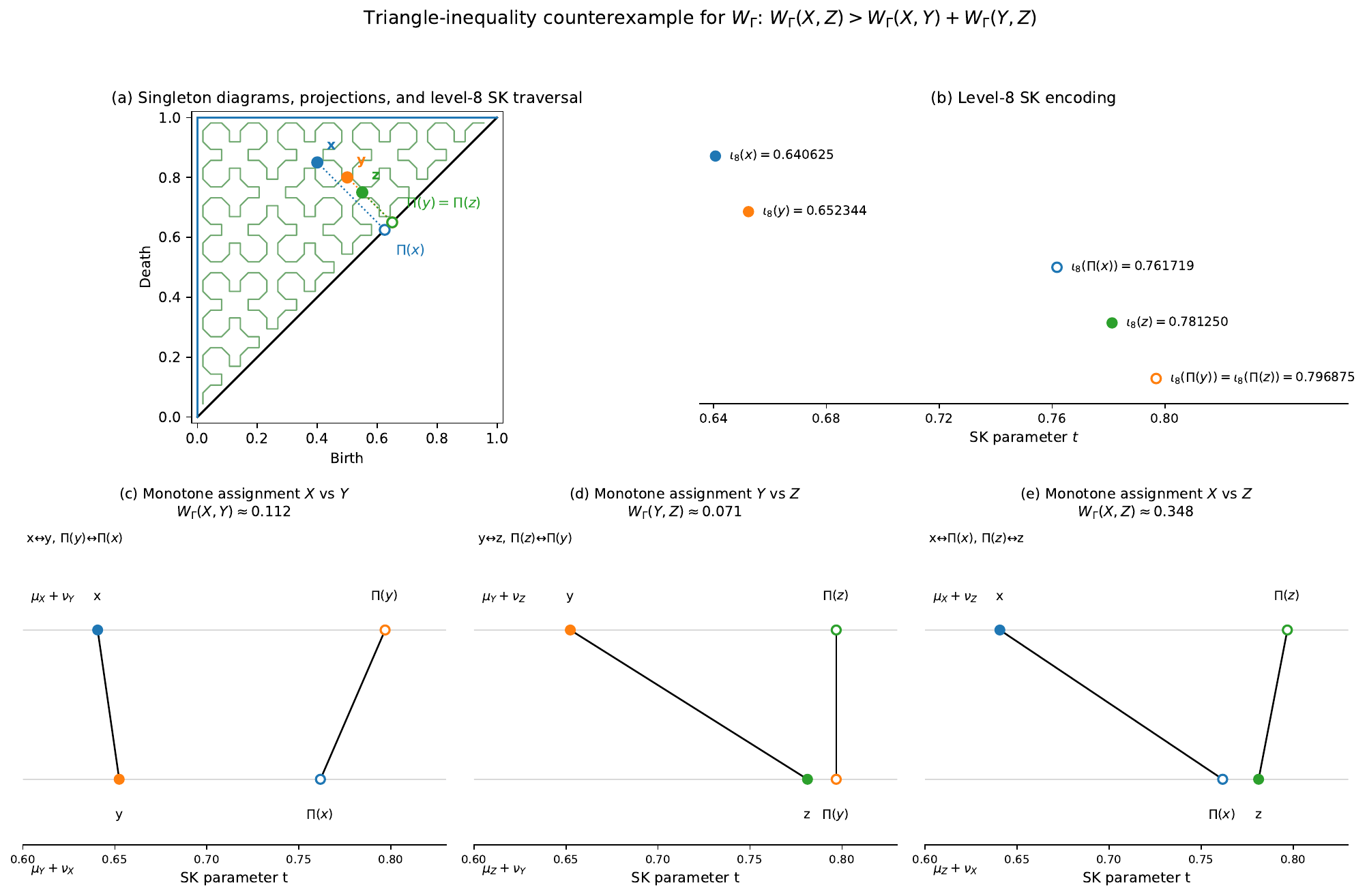}
  \caption{
  Triangle-inequality counterexample for \(W_\Gamma\).
  (a) Singleton diagrams \(X=\{x\}\), \(Y=\{y\}\), and \(Z=\{z\}\),
  their diagonal projections, and the level-\(8\) SK traversal on
  \(\Atri\).
  (b) The level-\(8\) selector values certify the exact first-hit
  ordering
  \(
  \iota(x)
  <
  \iota(y)
  <
  \iota(\Pi(x))
  <
  \iota(z)
  <
  \iota(\Pi(y))
  =
  \iota(\Pi(z)).
  \)
  (c)--(e) Monotone assignments induced for the three pairwise
  comparisons. They yield
  \(W_\Gamma(X,Y)\approx0.112\),
  \(W_\Gamma(Y,Z)\approx0.071\), and
  \(W_\Gamma(X,Z)\approx0.348\). Consequently,
  \(
  W_\Gamma(X,Z)
  >
  W_\Gamma(X,Y)+W_\Gamma(Y,Z),
  \)
  violating the triangle inequality.
  }
  \label{fig:wgamma_triangle_counterexample}
\end{figure}

\Scorrection{This non-metric behavior is also confirmed numerically for
\(W_{\Gamma,30}\) in our experiments.}

Nevertheless, because it may provide a sharper approximation of \(W_{2}\) in practice, we also evaluate its empirical accuracy, alongside \(\mathrm{\Scorrection{d_{\mathrm{SK}}^{}}}\), in the Experimental evaluation section.
\end{remark}

\smallskip
\noindent
\textbf{Dataset-specific converse bounds.}
The forward control \(W_{2}(X_1,X_2)\lesssim \mathrm{\Scorrection{d_{\mathrm{SK}}^{}}}(X_1,X_2)\) is universal, but the converse direction is necessarily more data-dependent, because the selector \(\iota\) is only Borel measurable and need not be globally Lipschitz on \(\Atri\). Nevertheless, on any fixed finite family \(\mathcal F\) of normalized persistence diagrams, one can recover a quantitative reverse comparison. More precisely, if
\[
\mathcal Z_{\mathcal F}
:=
\bigcup_{X\in\mathcal F}
\Bigl(\operatorname{spt}(X)\cup \Pi(\operatorname{spt}(X))\Bigr)
\subset \Atri
\]
denotes the finite set of relevant atoms, then the empirical Lipschitz constant
\[
L_{\mathcal F}
:=
\max_{\substack{x,y\in\mathcal Z_{\mathcal F}\\ x\neq y}}
\frac{|\iota(x)-\iota(y)|}{\|x-y\|_2}
\]
is finite, and the following bound holds for every
\(X_1,X_2\in\mathcal F\) (see \autoref{app:prop:SKOT_reverse_finite_dataset}):
\[
\mathrm{\Scorrection{d_{\mathrm{SK}}^{}}}(X_1,X_2)^{2}
\le
L_{\mathcal F}\,\sqrt{2\bigl(|X_1|+|X_2|\bigr)}\;W_{2}(X_1,X_2).
\]
In particular, if all diagrams in \(\mathcal F\) satisfy \(|X|\le M\), this simplifies to
\[
\mathrm{\Scorrection{d_{\mathrm{SK}}^{}}}(X_1,X_2)^2\le {2L_{\mathcal F}\sqrt{M}\;}{W_{2}(X_1,X_2)}.
\]
A coarser but easier-to-evaluate version is obtained by replacing \(L_{\mathcal F}\) with a separation-based constant: if \(\delta_{\mathcal F}\) denotes the minimum pairwise Euclidean distance between distinct off-diagonal atoms in the dataset and \(\eta_{\mathcal F}\) the minimum Euclidean distance to the diagonal, then the following bound holds (see \autoref{app:cor:SKOT_reverse_dataset_simple}):
\[
\mathrm{\Scorrection{d_{\mathrm{SK}}^{}}}(X_1,X_2)^2
\le
{\frac{2\sqrt M}{\min(\delta_{\mathcal F},\eta_{\mathcal F})}\;} {W_{2}(X_1,X_2)}
\qquad\forall\,X_1,X_2\in\mathcal F.
\]
Then, despite no universal reverse inequality is expected in full generality, \Scorrection{$d_{\mathrm{SK}}^{}$} and \(W_2\) remain quantitatively comparable on any fixed finite collection of normalized persistence diagrams.

%% file: continuous_edit_distance_geodesics_between_time_varying_persistence_diagrams_Section4_.tex
\section{Hilbertian Geometry and Gaussian Kernel Induced by
\texorpdfstring{$\mathrm{\Scorrection{d_{\mathrm{SK}}^{}}}$}{dSK}}
\label{sec:skot_hilbert_kernel}

\noindent

This section establishes two structural properties of
\Scorrection{\(d_{\mathrm{SK}}\)} on the space of normalized
persistence diagrams. We first show that the squared distance
\(\Scorrection{d_{\mathrm{SK}}}^2\) admits an explicit
cumulative \(L^1\) representation on \([0,1]\). This representation
yields an explicit Hilbert-space embedding whose Hilbert distance is
exactly \Scorrection{\(d_{\mathrm{SK}}\)}. It also shows that the
squared distance is conditionally negative definite and therefore
induces a positive definite Gaussian kernel on normalized persistence
diagrams.

\subsection{Cumulative \texorpdfstring{\(L^1\)}{L1}
representation of the squared distance}
\label{sec:skot_l1_signature}

Recall that, for a normalized persistence diagram \(X\), the first
layer of \Scorrection{\(d_{\mathrm{SK}}\)} yields the signed measure
\[
\sigma_X
:=
\mu_X-\nu_X
\qquad
\text{on }[0,1].
\]
We define its cumulative signature by
\[
H_X(t)
:=
\sigma_X([0,t]),
\qquad
t\in[0,1].
\]

By the algebraic identity established in
\autoref{sec:skot_layer2},
\[
(\mu_{X_1}+\nu_{X_2})
-
(\mu_{X_2}+\nu_{X_1})
=
\sigma_{X_1}-\sigma_{X_2}.
\]
Combining this identity with the cumulative formula for
one-dimensional \(W_1\) yields
\[
\mathrm{\Scorrection{d_{\mathrm{SK}}^{}}}(X_1,X_2)^2
=
\int_0^1
\left|
(\sigma_{X_1}-\sigma_{X_2})([0,t])
\right|
\,dt
=
\int_0^1
\left|
H_{X_1}(t)-H_{X_2}(t)
\right|
\,dt.
\]
Hence, the squared distance is exactly the \(L^1\) distance between
the cumulative signatures:
\[
\mathrm{\Scorrection{d_{\mathrm{SK}}^{}}}(X_1,X_2)^2
=
\left\|
H_{X_1}-H_{X_2}
\right\|_{L^1([0,1])}.
\]

Equivalently, the metric itself is given by
\[
\mathrm{\Scorrection{d_{\mathrm{SK}}^{}}}(X_1,X_2)
=
\left\|
H_{X_1}-H_{X_2}
\right\|_{L^1([0,1])}^{1/2}.
\]
Thus, the complete
\Scorrection{\(d_{\mathrm{SK}}\)} geometry is encoded by a
one-dimensional cumulative signal.

This representation is also injective at the diagram level. Indeed,
the cumulative function \(H_X\) determines the finite signed measure
\(\sigma_X\). Moreover, since \(\mu_X\) and \(\nu_X\) are supported
on the disjoint sets
\(
\iota(\Atri\setminus\Delta)
\text{ and }
\iota(\Delta),
\)
respectively, they are the positive and negative parts of
\(\sigma_X\). Therefore, \(\sigma_X\) determines \(\mu_X\) and
\(\nu_X\); finally, the injectivity of the exact selector \(\iota\)
implies that \(\mu_X\) determines the original normalized persistence
diagram \(X\).

The cumulative \(L^1\) representation and the injectivity of this
encoding are formally proved in \autoref{app:prop:SKOT_L1}.

\subsection{Hilbertian geometry of
\texorpdfstring{$\mathrm{\Scorrection{d_{\mathrm{SK}}^{}}}$}{dSK}}
\label{sec:skot_hilbert}

The \(L^1\) signature representation immediately implies that \(\mathrm{\Scorrection{d_{\mathrm{SK}}^{}}}\) is not merely a metric, but a Hilbertian one. Indeed, define for each normalized persistence diagram \(X\) the function
\[
\Phi(X):[0,1]\times\mathbb{R}\to\mathbb{R},
\qquad
\Phi(X)(t,u):=\mathbf{1}_{\{u<H_X(t)\}}-\mathbf{1}_{\{u<0\}}.
\]

For each fixed \(t\), the function
\[
u\longmapsto\Phi(X)(t,u)
\]
is supported on the interval between \(0\) and \(H_X(t)\). Since
\(H_X\) is the cumulative function of a finite signed atomic measure,
it is bounded and integrable on \([0,1]\). Consequently,
\[
\Phi(X)
\in
L^2\bigl([0,1]\times\mathbb{R},dt\,du\bigr).
\]

Moreover, for any two reals \(a,b\),
\[
\int_{\mathbb{R}}
\Bigl(
\mathbf{1}_{\{u<a\}}-\mathbf{1}_{\{u<b\}}
\Bigr)^2\,du
=
|a-b|.
\]
Applying this identity pointwise with \(a=H_{X_1}(t)\) and \(b=H_{X_2}(t)\), and then integrating in \(t\), gives
\[
\|\Phi(X_1)-\Phi(X_2)\|_{L^2([0,1]\times\mathbb{R})}^2
=
\int_0^1 |H_{X_1}(t)-H_{X_2}(t)|\,dt
=
\mathrm{\Scorrection{d_{\mathrm{SK}}^{\,2}}}(X_1,X_2).
\]
Therefore,
\[
d_{\mathrm{SK}}(X_1,X_2)=
\|\Phi(X_1)-\Phi(X_2)\|_{L^2([0,1]\times\mathbb{R})}.
\]

A formal proof of this explicit isometric Hilbert embedding is
provided in~\autoref{app:prop:SKOT_Hilbert_embed}.

In other words, \Scorrection{\(d_{\mathrm{SK}}\)} is the pullback of an ordinary Hilbert space distance through the explicit embedding
\(\Phi\). In particular, this makes
\Scorrection{\(d_{\mathrm{SK}}\)} directly compatible with
Euclidean embedding methods and learning pipelines that require vector
representations,  while preserving the diagram-level structure encoded by \Scorrection{$d_{\mathrm{SK}}^{}$}.

\subsection{Gaussian kernel induced by
\texorpdfstring{$\mathrm{\Scorrection{d_{\mathrm{SK}}^{}}}$}{dSK}}
\label{sec:skot_kernel}

The Hilbertian representation also provides a natural Gaussian kernel
for normalized persistence diagrams. Since
\[
\mathrm{\Scorrection{d_{\mathrm{SK}}^{}}}(X_1,X_2)^2
=
\left\|
\Phi(X_1)-\Phi(X_2)
\right\|_
{L^2([0,1]\times\mathbb{R})}^2,
\]
the squared distance
\[
(X_1,X_2)
\longmapsto
\mathrm{\Scorrection{d_{\mathrm{SK}}^{}}}(X_1,X_2)^2
\]
is conditionally negative definite (see~\autoref{app:prop:SKOT_CND}). By Schoenberg's theorem, for
every \(\sigma>0\), the function
\[
k_\sigma(X_1,X_2)
:=
\exp\left(
-\frac{
\mathrm{\Scorrection{d_{\mathrm{SK}}^{}}}(X_1,X_2)^2
}{
2\sigma^2
}
\right)
\]
is a positive definite kernel on normalized persistence diagrams (see~\autoref{app:cor:SKOT_kernel_PD}).

Using the explicit embedding \(\Phi\), this kernel can equivalently
be written as
\[
k_\sigma(X_1,X_2)
=
\exp\left(
-\frac{
\left\|
\Phi(X_1)-\Phi(X_2)
\right\|_
{L^2([0,1]\times\mathbb{R})}^2
}{
2\sigma^2
}
\right),
\]
that is, as the restriction of the standard Gaussian kernel on the Hilbert space \(L^2([0,1]\times\mathbb{R})\)
to the embedded space of normalized persistence diagrams. This kernel is a natural companion to the metric
\Scorrection{\(d_{\mathrm{SK}}\)}: both encode the same Hilbertian
geometry, while the kernel provides a similarity function directly
usable in kernel-based methods. In particular, it gives a principled
way of applying support vector machines, kernel ridge regression,
Gaussian processes, spectral or kernel clustering, and
MMD-based distributional comparisons to normalized persistence
diagrams. Under standard results for Gaussian kernels on Hilbert spaces, the
kernel is also characteristic and, on compact subsets of the
embedded diagram space, universal
\cite{Guella2020GaussianHilbert,Sriperumbudur2010HilbertEmbedding};
see~\autoref{app:thm:SKOT_kernel_characteristic} and~\autoref{app:cor:SKOT_kernel_universal}, respectively.

Finally, from a computational viewpoint, evaluating \(k_\sigma(X_1,X_2)\) only requires one computation of \(\mathrm{\Scorrection{d_{\mathrm{SK}}^{\,2}}}(X_1,X_2)\) followed by one exponential. Therefore, given the relevant values of the selector \(\iota\), the kernel can be evaluated in the same \(O(N\log N)\) time as \Scorrection{$d_{\mathrm{SK}}^{}$} itself, where \(N=|X_1|+|X_2|\).

%% file: applications_Section6_.tex
\section{Experimental evaluation}\label{sec:applications}

\subsection{Scientific benchmark and experimental setup}
\label{sec:scientific_ensembles}

We evaluate \Scorrection{the two SK-based constructions, the SK-Wasserstein
distance \(d_{\mathrm{SK}}\) and the SK-induced planar surrogate
\(W_\Gamma\),} on 12 scientific ensemble collections derived from
scalar fields previously used in the evaluation of the Wasserstein
distance between merge trees~\cite{pont2021wasserstein}. The benchmark contains 227
persistence diagrams and \(3{,}376{,}524\) persistence pairs in total.
Collection sizes range from 7 to 48 diagrams, while individual
diagrams contain between 22 and \(102{,}735\) pairs. The collections
and their main characteristics are summarized in
\autoref{tab:dataset_summary}.

Each collection is accompanied by metadata assigning its \(n\)
scalar fields to \(k\) reference groups. These labels define the reference
partition supplied with the benchmark.

For each scalar field, we compute a classical persistence diagram
with the Discrete Morse Sandwich backend of TTK~\cite{ttk17, ttk19,
guillou2023discretemorsesandwichfast, leguillou_tvcg24}. We
retain all available critical-pair types and homological dimensions,
without persistence thresholding, cardinality truncation, or
subsampling. We use the collection-wise normalization introduced in
\autoref{sec:input_data}.

For every collection, we compute a complete pairwise matrix of the classical $2$-Wasserstein distance using TTK's auction-based solver \cite{kerber2017geometry, vidal2019progressive},
with relative-precision threshold \(\texttt{DeltaLim}=0.01\), without
persistence thresholding, cardinality truncation, or subsampling. Throughout the remainder of this section, \(W_2\) denotes this
numerical TTK approximation.

All distance constructions below are evaluated on exactly
the same normalized diagrams.

\medskip
\noindent
\textbf{SK-based matrices and refinement levels.}
We next specify the pairwise matrices and refinement levels used
throughout the evaluation.

For each collection \(c\), containing \(n_c\) persistence diagrams,
and each refinement level
\[
L\in\{10,20,30,40\},
\]
we compute the pairwise matrices associated with
\(d_{\mathrm{SK},L}\) and \(W_{\Gamma,L}\). Here,
\(d_{\mathrm{SK},L}\) denotes the level-\(L\) numerical approximation
of the exact SK-Wasserstein distance \(d_{\mathrm{SK}}\): it is
obtained by replacing the exact first-hit selector \(\iota\) in the
definition of \(d_{\mathrm{SK}}\) with the finite-level
selector \(\iota_L\) introduced in
\autoref{sec:skot_encoding}. Likewise, \(W_{\Gamma,L}\) denotes the
SK-induced planar surrogate of \autoref{sec:skot_w2}, computed from
the monotone assignment induced by the same finite-level selector \(\iota_L\).

\input{tables/table_dataset_summary.tex}

\subsection{Convergence with respect to the selector refinement level}
\label{sec:selector_convergence_results}

\subsubsection{Protocol}

We first evaluate the influence of the selector refinement level
\(L\) on the two SK-based constructions. For each construction
\[
m\in\{d_{\mathrm{SK}},W_\Gamma\},
\]
let
\[
D_{c,L}^{m}\in\mathbb{R}^{n_c\times n_c}
\]
denote its symmetric pairwise matrix for collection \(c\). We use
\(L=40\) as a high-resolution numerical reference and compare the
matrices obtained at
\[
L\in\{10,20,30\}
\]
with their corresponding \(L=40\) matrix. The \(L=40\) computation is
used only as a finite numerical reference and is not assumed to
coincide with the exact \(L\to\infty\) limit.

For \(d_{\mathrm{SK}}\), this empirical refinement study is
complemented by the perturbation bound in
\autoref{app:prop:SKOT_selector_approx}: a selector approximation
error of at most \(\varepsilon\) on all persistence pairs of \(X\) and \(Y\) and
on their diagonal projections induces an error of at most
\(2(|X|+|Y|)\varepsilon\) on
\(d_{\mathrm{SK}}(X,Y)^2\).

\begin{itemize}

\item We first measure the numerical difference between the matrices
using the relative Frobenius error
\[
E_c^{m}(L)
:=
\frac{
  \left\|D_{c,L}^{m}-D_{c,40}^{m}\right\|_F
}{
  \left\|D_{c,40}^{m}\right\|_F
}.
\]

\item To assess whether changing \(L\) modifies the global ordering of
the pairwise dissimilarities, let
\[
\operatorname{ut}(D)
:=
\bigl(D_{ij}\bigr)_{1\leq i<j\leq n_c}
\]
denote the vector of strict upper-triangular entries of a symmetric
matrix \(D\). This vector contains each distinct pairwise
dissimilarity exactly once. We compute
\[
\rho_c^{m}(L)
:=
\rho_{\mathrm{S}}
\left(
  \operatorname{ut}\!\left(D_{c,L}^{m}\right),
  \operatorname{ut}\!\left(D_{c,40}^{m}\right)
\right),
\]
where \(\rho_{\mathrm{S}}\) denotes Spearman's rank correlation
coefficient. Thus, \(\rho_c^{m}(L)=1\) means that all distinct
pairwise dissimilarities are ranked in the same order at refinement
levels \(L\) and \(40\), although their numerical values may differ.

\item We additionally evaluate local-neighbourhood preservation. For
a pairwise matrix \(D\), let
\[
\mathcal{N}_3(D,i)
\]
denote the set of the three nearest neighbours of sample \(i\),
excluding \(i\) itself. We define
\[
\operatorname{NN@3}_c^{m}(L)
:=
\frac{1}{3n_c}
\sum_{i=1}^{n_c}
\left|
  \mathcal{N}_3\!\left(D_{c,L}^{m},i\right)
  \cap
  \mathcal{N}_3\!\left(D_{c,40}^{m},i\right)
\right|.
\]
Hence, \(\operatorname{NN@3}_c^{m}(L)=1\) means that every diagram in
collection \(c\) has exactly the same three nearest neighbours at
levels \(L\) and \(40\).

\end{itemize}

Across the 12 collections, we report the median and maximum values of
\(E_c^{m}(L)\), together with the minimum values of
\(\rho_c^{m}(L)\) and \(\operatorname{NN@3}_c^{m}(L)\). The latter
two quantities therefore characterize the worst-performing
collection at each refinement level. These convergence diagnostics are used to
select the refinement level employed in the subsequent evaluation.

\subsubsection{Results}

\autoref{fig:selector_convergence} and
\autoref{tab:selector_convergence} summarize convergence toward the
high-resolution \(L=40\) numerical reference using the diagnostics
defined above.

At \(L=30\), the median relative Frobenius error over the 12
collections is \(6.97\times10^{-8}\) for
\(d_{\mathrm{SK},30}\) and \(4.50\times10^{-6}\) for
\(W_{\Gamma,30}\). \Scorrection{These median errors, of order \(10^{-6}\) or smaller, are close to
the numerical error of single-precision floating-point computations
and can therefore be regarded as virtually zero.} The corresponding worst-case errors are
\(1.41\times10^{-5}\) and \(6.63\times10^{-5}\), respectively.

For every collection and both constructions, the Spearman
correlation between the strict upper-triangular entries of the
\(L=30\) and \(L=40\) matrices rounds to \(1.000000\), while
\(\operatorname{NN@3}=1.000\). Hence, at the reported numerical
precision, no change is detected in the global ranking of the
distinct pairwise dissimilarities, and every sample retains exactly
the same three nearest neighbours as at \(L=40\).

By contrast, at \(L=10\), the worst-case relative errors reach
\(0.811\) for \(d_{\mathrm{SK},10}\) and \(1.11\) for
\(W_{\Gamma,10}\), while the worst-collection Spearman correlations
decrease to \(0.605\) and \(0.492\), respectively. We therefore fix
\[
L=30
\]
for all subsequent experiments in this section.

\begin{figure}[t]
  \centering
  \includegraphics[width=0.83\linewidth]
  {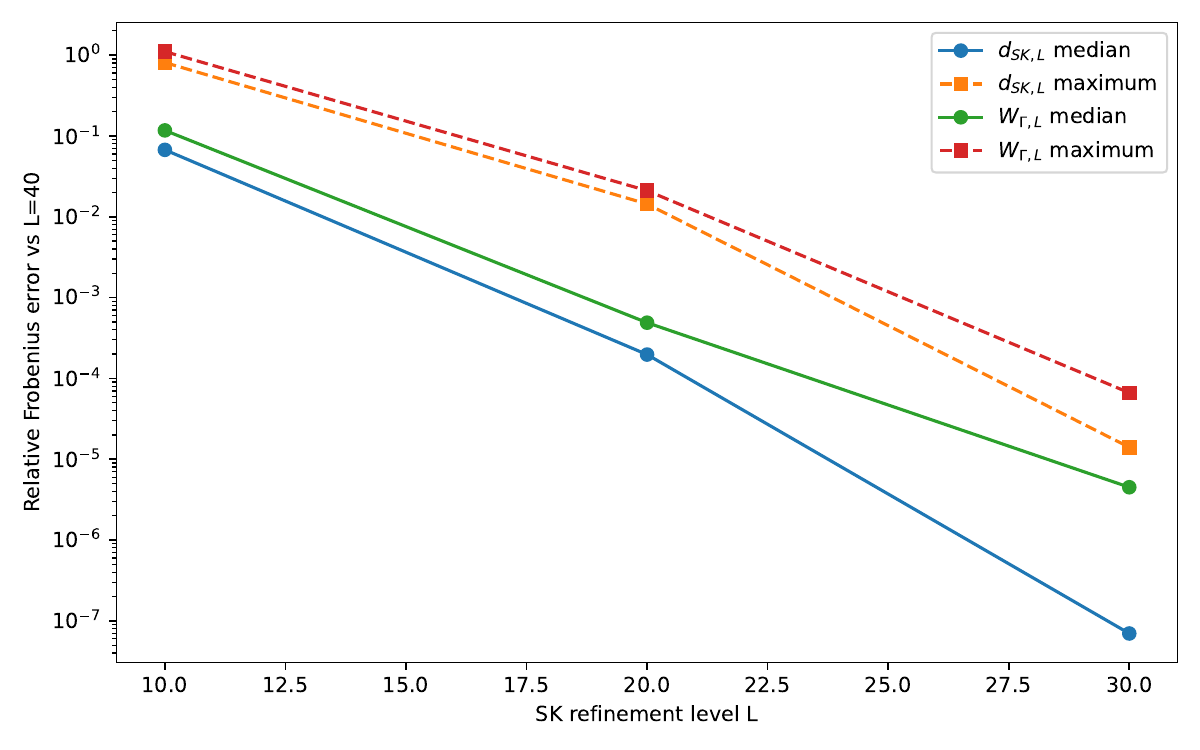}
  \caption{
  Convergence of \(d_{\mathrm{SK},L}\) and \(W_{\Gamma,L}\) toward
  their high-resolution \(L=40\) numerical reference matrices. For
  each method and refinement level \(L\), the curves report the
  median and maximum, over the 12 collections, of the relative
  Frobenius error \(E_c(L)\) defined in
  \autoref{sec:selector_convergence_results}. Lower values indicate closer
  agreement with the \(L=40\) reference; the reference level itself
  has zero error by construction.
  }
  \label{fig:selector_convergence}
\end{figure}

\input{tables/table_convergence.tex}

\subsection{Faithfulness to \(W_2\) and computational performance}
\label{sec:w2_faithfulness_results}

\subsubsection{Protocol}

Following the selector-refinement study in
\autoref{sec:selector_convergence_results}, we fix
\[
L=30
\]

and compare the two finite-level SK-based constructions
\(
d_{\mathrm{SK},30}
\text{ and }
W_{\Gamma,30}
\) in terms of both their faithfulness to the numerical \(W_2\)
reference defined in \autoref{sec:scientific_ensembles} and their
computational performance relative to the corresponding TTK
\(W_2\) computation. Across the 12 collections, this
yields \(3{,}004\) distinct pairwise diagram comparaisons for each dissimilarity.

Let
\[
D_c^{W_2}
\]
denote the pairwise \(W_2\) matrix for collection \(c\).

\paragraph{Pairwise faithfulness and numerical bound checks.}

For each construction
\[
m\in\{d_{\mathrm{SK}},W_\Gamma\},
\]
we assess faithfulness to the numerical \(W_2\) reference through two
complementary diagnostics:

\begin{itemize}

\item 
We measure global rank preservation with
\[
\rho_{c,W_2}^{m}
:=
\rho_{\mathrm{S}}
\left(
  \operatorname{ut}\!\left(D_{c,30}^{m}\right),
  \operatorname{ut}\!\left(D_c^{W_2}\right)
\right).
\]
This coefficient evaluates whether the \(L=30\) \Scorrection{SK-based constructions} and
\(W_2\) rank all distinct pairs of diagrams in a similar order.

\item We also measure preservation of the local \(W_2\) neighbourhoods
through
\[
\operatorname{NN@3}_{c,W_2}^{m}
:=
\frac{1}{3n_c}
\sum_{i=1}^{n_c}
\left|
  \mathcal{N}_3\!\left(D_{c,30}^{m},i\right)
  \cap
  \mathcal{N}_3\!\left(D_c^{W_2},i\right)
\right|.
\]
Thus, this NN@3 measure compares the three nearest neighbours
obtained with the \(L=30\) \Scorrection{SK-based constructions} to those obtained with \(W_2\).
\end{itemize}

For every distinct diagram pair, we additionally verify numerically
the ordering corresponding to the theoretical inequalities

\[
W_2(X,Y)
\leq
W_{\Gamma,30}(X,Y)
\leq
\sqrt{2}\,d_{\mathrm{SK},30}(X,Y).
\]

\paragraph{Computational performance.}

We measure the wall-clock computation time of each complete
distance-matrix filter on the same 20-thread workstation. For
\(m\in\{d_{\mathrm{SK}},W_\Gamma\}\), the collection-level speedup is
defined as
\[
S_c^{m}
:=
\frac{
  T_c^{W_2}
}{
  T_{c,30}^{m}
},
\]
where \(T_c^{W_2}\) is the computation time of the \(W_2\) matrix and
\(T_{c,30}^{m}\) that of the corresponding \(L=30\) SK matrix. The reported values are average wall-clock filter timings over five
independent runs.

\paragraph{Agreement with \(W_2\) average-linkage partitions.}

To determine whether differences between the numerical distances
affect a downstream grouping task, we additionally apply
average-linkage clustering to
\[
D_{c,30}^{d_{\mathrm{SK}}},
\qquad
D_{c,30}^{W_\Gamma},
\qquad\text{and}\qquad
D_c^{W_2},
\]
using the number of groups \(k\) provided by the collection metadata.
Each SK-based
partition is then compared directly with the corresponding
\(W_2\)-based partition using the Adjusted Rand Index (ARI).

\subsubsection{Results}

\paragraph{Pairwise faithfulness and numerical bound checks.}

The pairwise comparisons in \autoref{fig:w2_scatter} and the
dataset-level statistics in \autoref{tab:w2_comparison} show that,
across the complete 12-collection benchmark,
\(d_{\mathrm{SK},30}\) attains a median Spearman correlation of
\(0.879\) with \(W_2\), while \(W_{\Gamma,30}\) attains \(0.924\).
The corresponding median NN@3 overlaps are \(0.886\) and \(0.914\),
respectively. Hence, \(W_{\Gamma,30}\) is more faithful in median to the
\(W_2\) geometry according to both diagnostics. However,
\(d_{\mathrm{SK},30}\) still achieves a median Spearman correlation
of \(0.879\) and a median NN@3 overlap of \(0.886\), while additionally
providing the metric and Hilbertian guarantees established in
\autoref{sec:skot_hilbert_kernel}.

The collection-level results in \autoref{tab:w2_comparison} also
illustrate why global and local diagnostics are complementary. For
example, Isabel has a moderate global Spearman correlation of
\(0.603\) between \(d_{\mathrm{SK},30}\) and \(W_2\), but its
three-nearest-neighbour sets are exactly preserved. Thus, a moderate
correlation over the complete ranking of pairwise distances does not
necessarily imply that the local neighbourhood structure has been
lost.

Finally, the numerical bound checks accompanying
\autoref{fig:w2_scatter} confirm the theoretical ordering on all
\(3{,}004\) distinct pairwise diagram comparisons:
\[
W_2(X_i,X_j)
\leq
W_{\Gamma,30}(X_i,X_j)
\leq
\sqrt{2}\,d_{\mathrm{SK},30}(X_i,X_j).
\]
No value of \(W_{\Gamma,30}\) is observed below \(W_2\), and
\[
\max_{i<j}
\frac{
  W_{\Gamma,30}(X_i,X_j)
}{
  \sqrt{2}\,d_{\mathrm{SK},30}(X_i,X_j)
}
=
0.957.
\]
Moreover,
\[
\max_{i<j}
\frac{
  W_2(X_i,X_j)
}{
  \sqrt{2}\,d_{\mathrm{SK},30}(X_i,X_j)
}
=
0.902.
\]
Thus, both inequalities are satisfied for every pairwise comparison
available in the benchmark.

\begin{figure}[t]
  \centering
  \includegraphics[width=.99\linewidth]{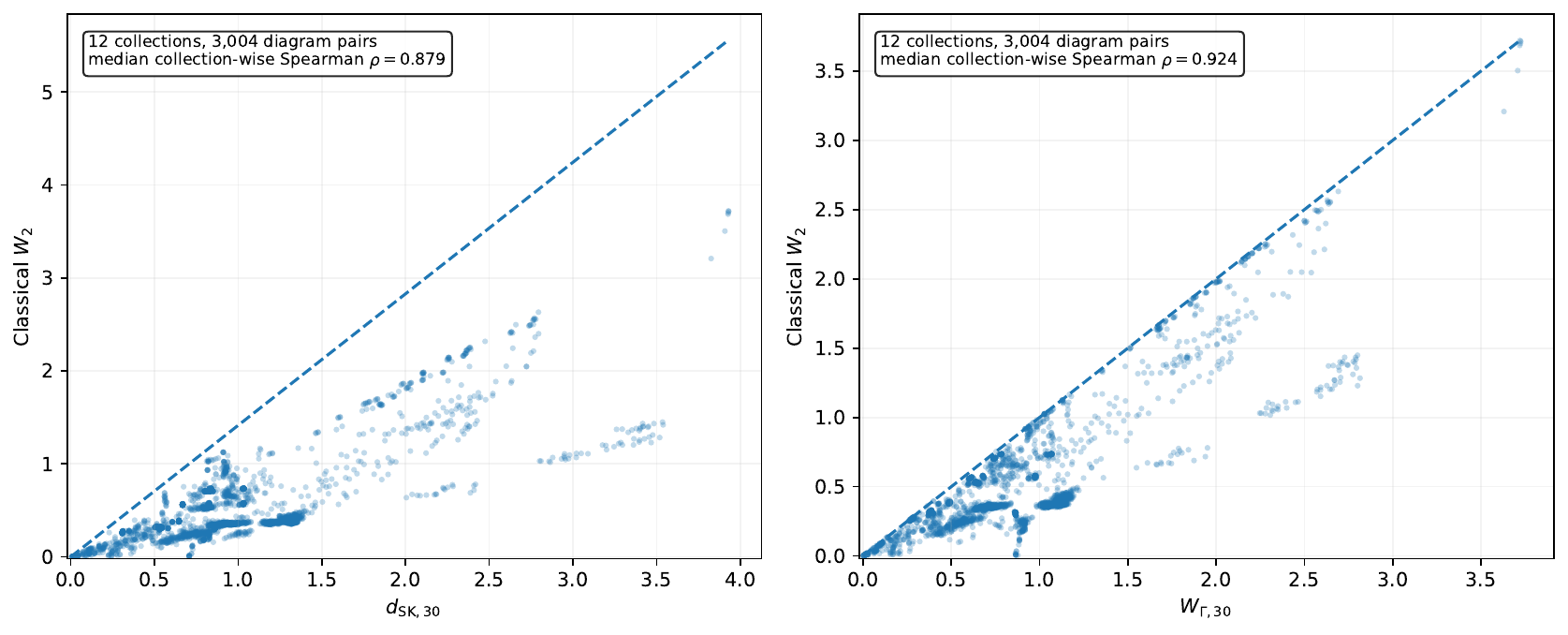}
  \caption{
Pairwise comparison with \(W_2\) on all 12 collections, comprising
\(3{,}004\) distinct pairwise diagram
evaluations. Left:
\(d_{\mathrm{SK},30}\) versus \(W_2\), together with the theoretical
upper-bound line \(W_2=\sqrt{2}\,d_{\mathrm{SK},30}\). Right:
\(W_{\Gamma,30}\) versus \(W_2\), together with the identity line.
The annotation in each panel reports the median, over the 12
collections, of the collection-wise Spearman rank correlation with
\(W_2\).
}\label{fig:w2_scatter}
\end{figure}

\input{tables/table_w2_comparison.tex}

\paragraph{Computational performance}

The computational results are reported per collection in
\autoref{tab:w2_comparison} and summarized graphically in
\autoref{fig:w2_speedup}. The speedup of \(d_{\mathrm{SK},30}\)
over the TTK \(W_2\) computation ranges from \(34.7\times\) to
\(5007\times\), with a median of \(626\times\). For
\(W_{\Gamma,30}\), the speedup ranges from \(40.8\times\) to
\(4561\times\), with a median of \(586\times\).

Summed over the complete benchmark, the recorded filter-computation
times are \(3401\) seconds for \(W_2\) and \(1.62\) seconds for
\(d_{\mathrm{SK},30}\). As a representative large-diagram example,
the Earthquake collection has a median diagram cardinality of
approximately \(49{,}568\) persistence pairs. Its complete \(W_2\)
matrix requires \(659\) seconds, whereas the corresponding
\(d_{\mathrm{SK},30}\) matrix requires only \(0.132\) seconds, as
reported in \autoref{tab:w2_comparison}. These timings illustrate the
practical difference between repeatedly solving planar assignment
problems and evaluating the sorting-based SK construction.

\begin{figure}[t]
  \centering
  \includegraphics[width=.99\linewidth]
  {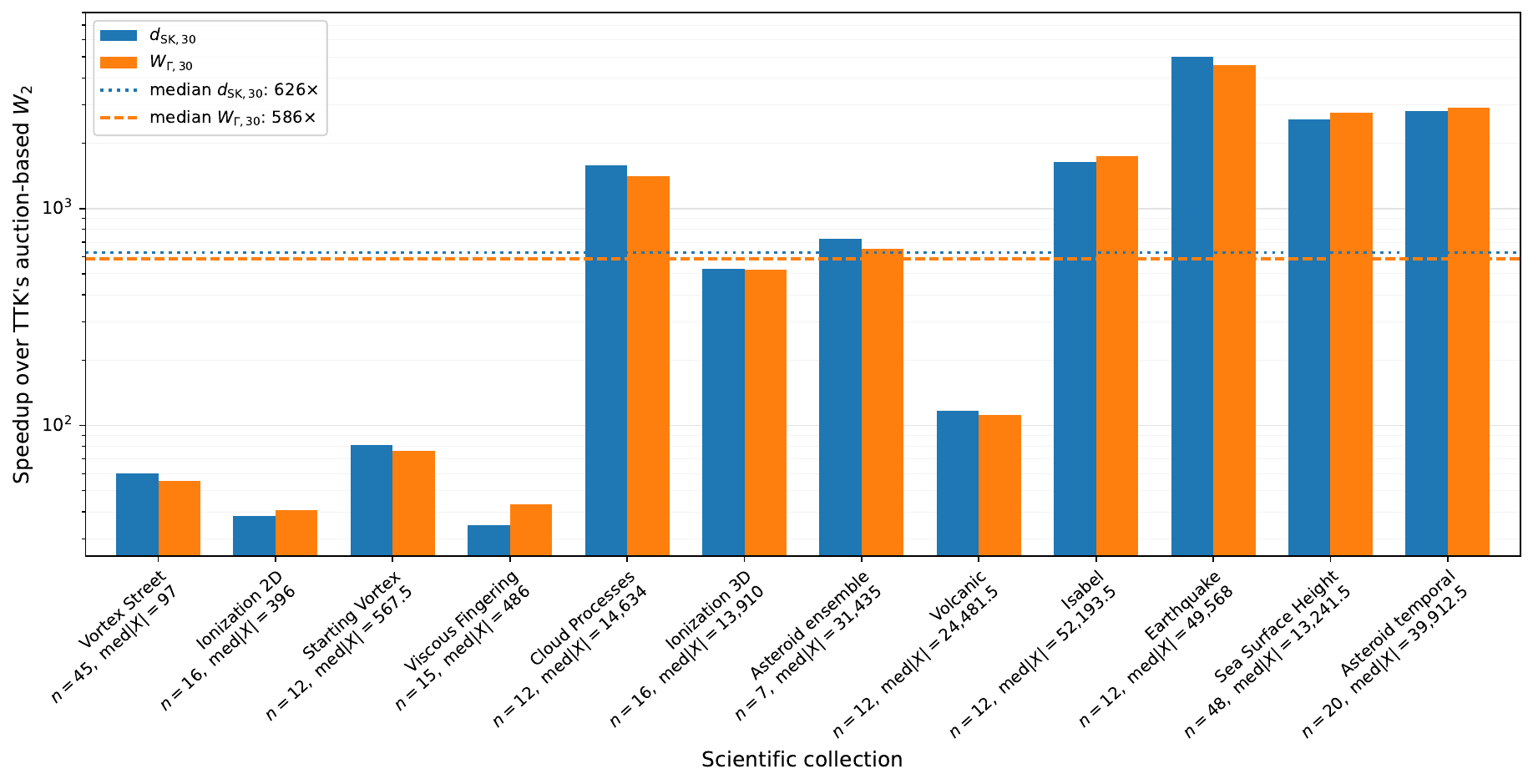}
  \caption{
Filter-computation speedup relative to TTK's auction-based
\(W_2\) computation at \(L=30\). Collections are ordered from left
to right by increasing total number of persistence pairs. For each
collection, the horizontal-axis label reports the number \(n\) of
diagrams and the median diagram cardinality \(\mathrm{med}|X|\). The
vertical axis is logarithmic; horizontal lines indicate median
speedups of \(626\times\) for \(d_{\mathrm{SK},30}\) and
\(586\times\) for \(W_{\Gamma,30}\).
}
  \label{fig:w2_speedup}
\end{figure}

\paragraph{Agreement with \(W_2\) average-linkage partitions}

The detailed average-linkage results are reported in
\autoref{tab:w2_partition_agreement} and displayed graphically in
\autoref{fig:w2_partition_agreement}. The partition obtained from
\(d_{\mathrm{SK},30}\) is exactly identical to the \(W_2\) partition
on 8 of the 12 collections. The same exact-agreement count is obtained
for \(W_{\Gamma,30}\). For both constructions, the median ARI with the
\(W_2\) partition is \(1.000\); the corresponding mean ARI values are
\(0.897\) for \(d_{\mathrm{SK},30}\) and \(0.905\) for
\(W_{\Gamma,30}\).

The four collections for which the \(d_{\mathrm{SK},30}\) partition
is not exactly identical to the \(W_2\) partition are Viscous
Fingering, Asteroid Impact temporal, Sea Surface Height, and Vortex
Street. The largest discrepancies occur on Asteroid Impact temporal,
Sea Surface Height, and Vortex Street. Overall, the \(d_{\mathrm{SK}}\) constructions
therefore preserve the average-linkage partition induced by \(W_2\)
on most collections, with perfect median agreement, but this
preservation is not universal.

\input{tables/table_w2_partition_agreement_supplementary.tex}

\begin{figure}[t]
  \centering
  \includegraphics[width=.86\linewidth]
  {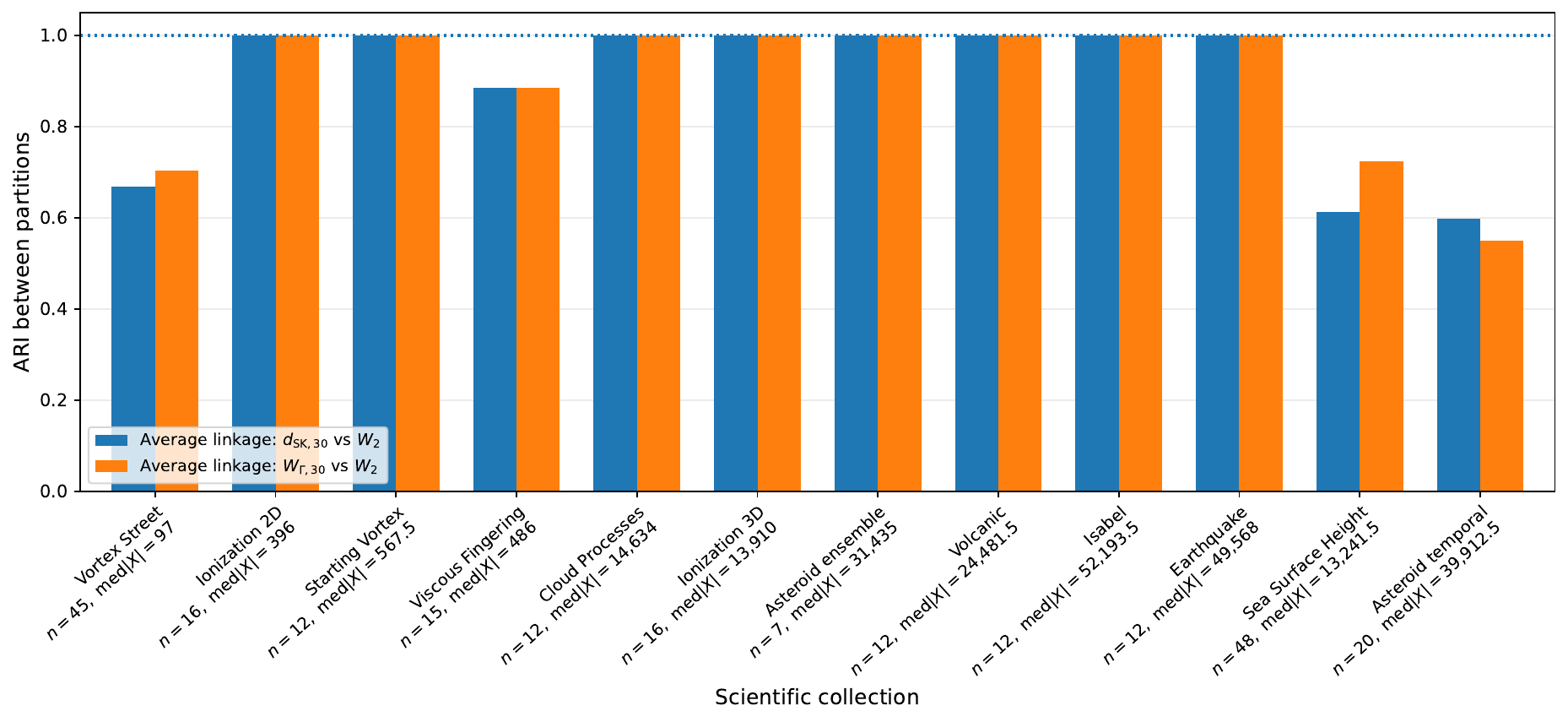}
  \caption{
  Agreement between the average-linkage partitions obtained from
  \(d_{\mathrm{SK},30}\) and \(W_{\Gamma,30}\), and the corresponding
  partition obtained from \(W_2\). Agreement is measured by the
  adjusted Rand index (ARI), with the number of groups fixed from the
  benchmark metadata. An ARI of \(1\) indicates identical partitions.
  }
  \label{fig:w2_partition_agreement}
\end{figure}

\subsection{Application to Hilbertian and kernel-based clustering}
\label{sec:kernel_ensemble_results}

\subsubsection{Protocol}

To assess the practical value of the Hilbertian and kernel structures
established in \autoref{sec:skot_hilbert} and
\autoref{sec:skot_kernel}, we apply them to clustering the 12
scientific collections. We compare Hilbert \(k\)-means and Gaussian spectral clustering, both
based on \(d_{\mathrm{SK}}\), with average linkage on
\(d_{\mathrm{SK}}\) as a distance-based baseline.

Following the convergence study in \autoref{sec:selector_convergence_results}, we set the refinement
level to
\[
L=30
\]
throughout the remainder of this subsection.

For each collection \(c\), containing \(n_c\) persistence diagrams,
let
\[
D_{c,30}(i,j)
=
d_{\mathrm{SK},30}(X_i,X_j).
\]
denote its $\Scorrection{d_{\mathrm{SK},30}}$ distance matrix.

To obtain explicit Euclidean coordinates associated with the
Hilbertian geometry, we construct the centered Gram matrix
\[
G_{c,30}
=
-\frac12
J_cD_{c,30}^{\circ2}J_c,
\qquad
J_c
=
I-\frac1{n_c}\mathbf{1}\mathbf{1}^{\top},
\]
where \(D_{c,30}^{\circ2}\) denotes entrywise squaring. Let
\[
G_{c,30}
=
V\Lambda V^{\top}
\]
be its eigendecomposition. Retaining all coordinates associated with
positive eigenvalues gives the full-dimensional classical-MDS
realization
\[
Z_{c,30}
=
V_{+}\Lambda_{+}^{1/2}.
\]

We also construct the Gaussian kernel associated with
\(d_{\mathrm{SK},30}\), as introduced in
\autoref{sec:skot_kernel}:
\[
K_{c,30}(i,j)
=
\exp\left(
-\frac{
  D_{c,30}(i,j)^2
}{
  2\sigma_{c,30}^2
}
\right),
\]
where
\[
\sigma_{c,30}
=
\operatorname{median}
\left\{
D_{c,30}(i,j):
i<j,\;
D_{c,30}(i,j)>0
\right\}.
\]
The bandwidth is therefore selected independently for each collection
as the median of its positive pairwise
\(d_{\mathrm{SK},30}\) distances.

For each collection, we compare three unsupervised clustering
pipelines:

\begin{itemize}

\item \emph{Average linkage on the \(d_{\mathrm{SK}}\) distance.}
We apply average-linkage clustering directly to the pairwise distance
matrix
\[
D_{c,30}.
\]
This provides a baseline that exploits only the $d_{\mathrm{SK}}$ distance values.

\item \emph{Hilbert \(k\)-means.}
We apply standard Euclidean \(k\)-means to the full-dimensional
classical-MDS coordinates
\[
Z_{c,30}.
\]
All coordinates associated with positive eigenvalues are retained.
Thus, this pipeline uses the complete Hilbertian realization of
\(d_{\mathrm{SK}}\).

\item \emph{Gaussian \(d_{\mathrm{SK}}\) spectral clustering.}
We apply normalized spectral clustering to
\[
K_{c,30}
\]
as a similarity matrix~\cite{ng2002spectral,dhillon2004kernelkm}.
Although spectral clustering can also be applied to a general
symmetric nonnegative affinity matrix, the $\Scorrection{d_{\mathrm{SK}}^{}}$ theory guarantees
that \(K_{c,30}\) is a positive semidefinite Gram matrix and therefore
admits an RKHS interpretation.

\end{itemize}

For all three \(d_{\mathrm{SK}}\)-based pipelines, the number of groups \(k\) is
read from the collection metadata. After clustering, each resulting
partition is compared with the reference partition supplied by the
metadata using ARI. Therefore, the ARI
values reported for average linkage, Hilbert \(k\)-means, and Gaussian
\(d_{\mathrm{SK}}\) spectral clustering measure agreement with the benchmark
reference groups.

For contextual comparison, we also report the ARI of the
average-linkage partition obtained from the numerical \(W_2\) matrix
with respect to the same benchmark reference partition.

\subsubsection{Results}

Averaged over the 12 collections, average linkage on
\(d_{\mathrm{SK},30}\) obtains a mean ARI of \(0.667\), Hilbert
\(k\)-means obtains \(0.756\), and Gaussian
\(d_{\mathrm{SK}}\) spectral clustering obtains \(0.800\), as
reported in \autoref{tab:kernel_clustering} and illustrated in
\autoref{fig:clustering_ari}. Gaussian \(d_{\mathrm{SK}}\)
spectral clustering therefore achieves the highest mean ARI among the
three evaluated pipelines. These three \(d_{\mathrm{SK}}\)-based pipelines achieve perfect agreement
with the reference partitions on 5, 6, and 7 of the 12 collections,
respectively. For comparison, average-linkage clustering
on \(W_2\) obtains a mean ARI of \(0.750\) and perfect agreement on 7
collections, as reported in
\autoref{tab:w2_partition_agreement}.

Compared to direct average linkage on \(d_{\mathrm{SK}}\), Gaussian \(d_{\mathrm{SK}}\)
spectral clustering improves 5 collections and worsens none. The largest gains occur for Viscous Fingering ($0.441\rightarrow1$), Sea Surface Height ($0.613\rightarrow1$), Volcanic Eruptions ($0.408\rightarrow0.737$), and Vortex Street ($0.668\rightarrow0.850$). Thus, the positive-definite Gaussian \(d_{\mathrm{SK}}\) kernel established in \autoref{sec:skot_kernel} is not merely formal: it enables a nonlinear learning pipeline that extracts additional ensemble structure from the \(d_{\mathrm{SK}}\) matrix.

The result is not uniformly perfect. In particular, the seven-member Asteroid Impact clustering collection is not aligned with the supplied two-class metadata for any of the three \(d_{\mathrm{SK}}\)-based pipelines (ARI $=-0.077$). Earthquake also remains only partially aligned with its three metadata groups (ARI $=0.368$). However, as reported in
\autoref{tab:w2_partition_agreement}, average-linkage clustering on
the \(W_2\) matrix also achieves ARI values of \(-0.077\) and
\(0.368\) with respect to the benchmark reference partitions on the
Asteroid Impact clustering and Earthquake collections, respectively.
These are exactly the same ARI values as those obtained by
average-linkage clustering on \(d_{\mathrm{SK},30}\). Therefore, in
this clustering experiment, the limited agreement observed on these
two collections cannot be attributed to replacing \(W_2\) with
\(d_{\mathrm{SK},30}\).

\begin{figure}[!htbp]
  \centering
  \includegraphics[width=.94\linewidth]
  {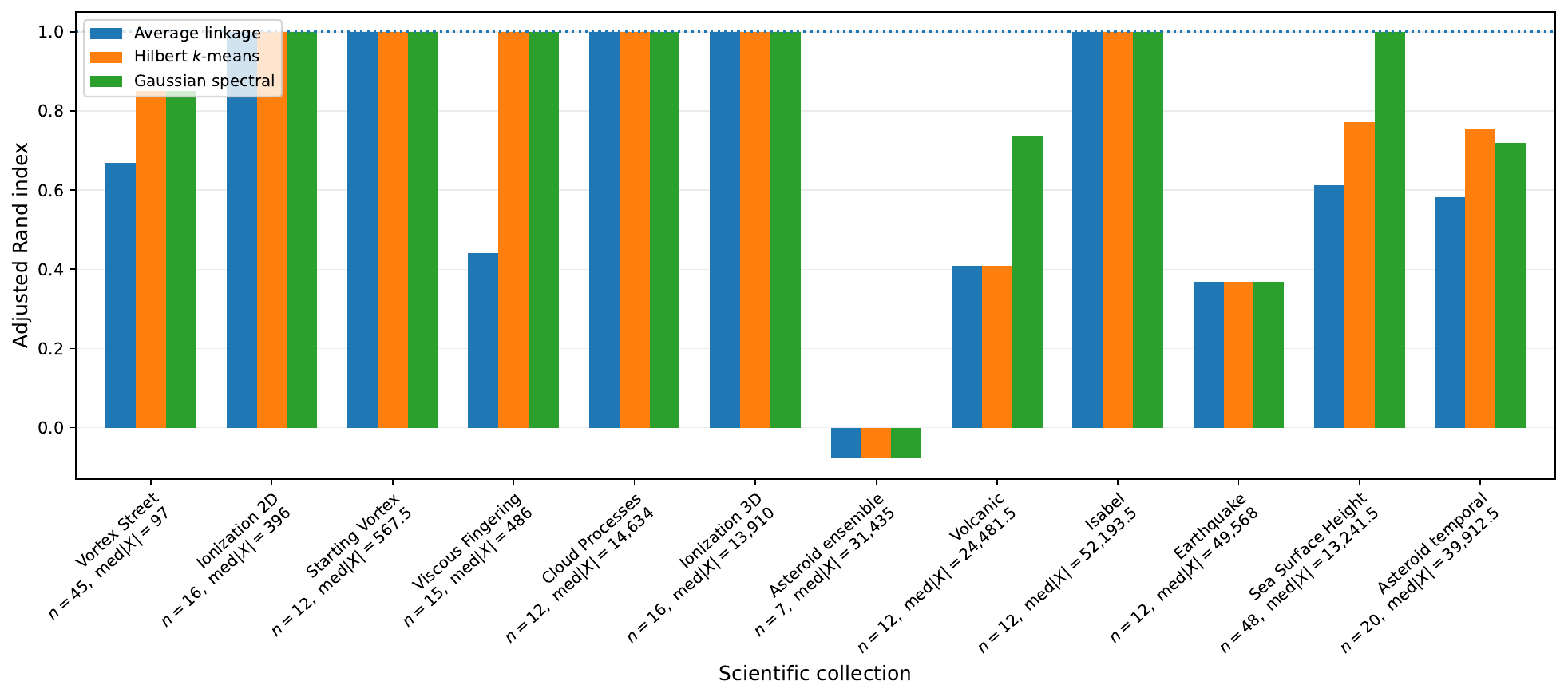}
  \caption{
  ARI with the benchmark reference partitions for the three
  SK-based clustering pipelines at \(L=30\): average linkage on
  \(d_{\mathrm{SK}}\), \(d_{\mathrm{SK}}\) Hilbert \(k\)-means on the
  full-dimensional classical-MDS coordinates, and normalized
  spectral clustering with the Gaussian $\Scorrection{d_{\mathrm{SK}}^{}}$ kernel. Gaussian \(d_{\mathrm{SK}}\)
  spectral clustering achieves the highest mean ARI and does not
  decrease the ARI relative to average linkage on
  \(d_{\mathrm{SK}}\) on any of the 12 collections.
  }
  \label{fig:clustering_ari}
\end{figure}

\input{tables/table_kernel_clustering.tex}

\subsection{Temporal Hilbertian case study}
\label{sec:temporal_case_study}

\subsubsection{Protocol}

We use this temporal case study to illustrate the practical use of the
Hilbertian realization and Gaussian kernel associated with
\(d_{\mathrm{SK}}\) for trajectory visualization and contiguous
segmentation of ordered diagram sequences. We consider the 20-diagram Asteroid
Impact temporal collection as an ordered diagram-valued sequence.

\paragraph{Hilbertian and kernel visualization.}

We use the Hilbertian realization of
\(d_{\mathrm{SK},30}\) as the main temporal representation and the
associated Gaussian kernel as a complementary similarity view.
Specifically, we visualize the sequence through:

\begin{itemize}

\item the first two coordinates of the full-dimensional Hilbertian
realization \(Z_{c,30}\), with consecutive diagrams connected in
temporal order;

\item the Gaussian \(d_{\mathrm{SK}}\) kernel matrix
\(K_{c,30}\), displayed as a similarity heatmap.

\end{itemize}

\paragraph{Contiguous kernel segmentation.}

We apply contiguous kernel
segmentation to the 20-diagram Asteroid Impact temporal sequence,
whose reference metadata labels form four contiguous blocks in
temporal order. The number of segments is therefore fixed to four
from the metadata, and the bandwidth of the Gaussian \(d_{\mathrm{SK}}\) kernel is set to the median of
the positive pairwise \(d_{\mathrm{SK},30}\) distances, without
label-dependent tuning. The segment boundaries are obtained by
dynamic programming using the kernel within-segment dispersion
\cite{arlot2019kernelchangepoint}. This experiment illustrates how
the Gaussian \(d_{\mathrm{SK}}\) kernel can be used directly for
kernel change-point detection and contiguous segmentation of a
diagram-valued temporal sequence.

\subsubsection{Results}

\paragraph{Hilbertian and kernel visualization.}

\autoref{fig:asteroid_hilbert_kernel} illustrates the diagram-valued
trajectory of the 20-step Asteroid Impact temporal sequence. The
two-dimensional Hilbert view, computed from the Hilbert Gram matrix,
reveals a smooth progression between the four annotated phases. The
Gaussian kernel displays the same evolution as blocks of high
within-phase similarity and gradual cross-phase decay. 

\begin{figure}[t]
  \centering
  \subfloat[Two-dimensional projection of the Hilbert embedding. Consecutive samples are connected in temporal order.]{\includegraphics[width=.49\linewidth]{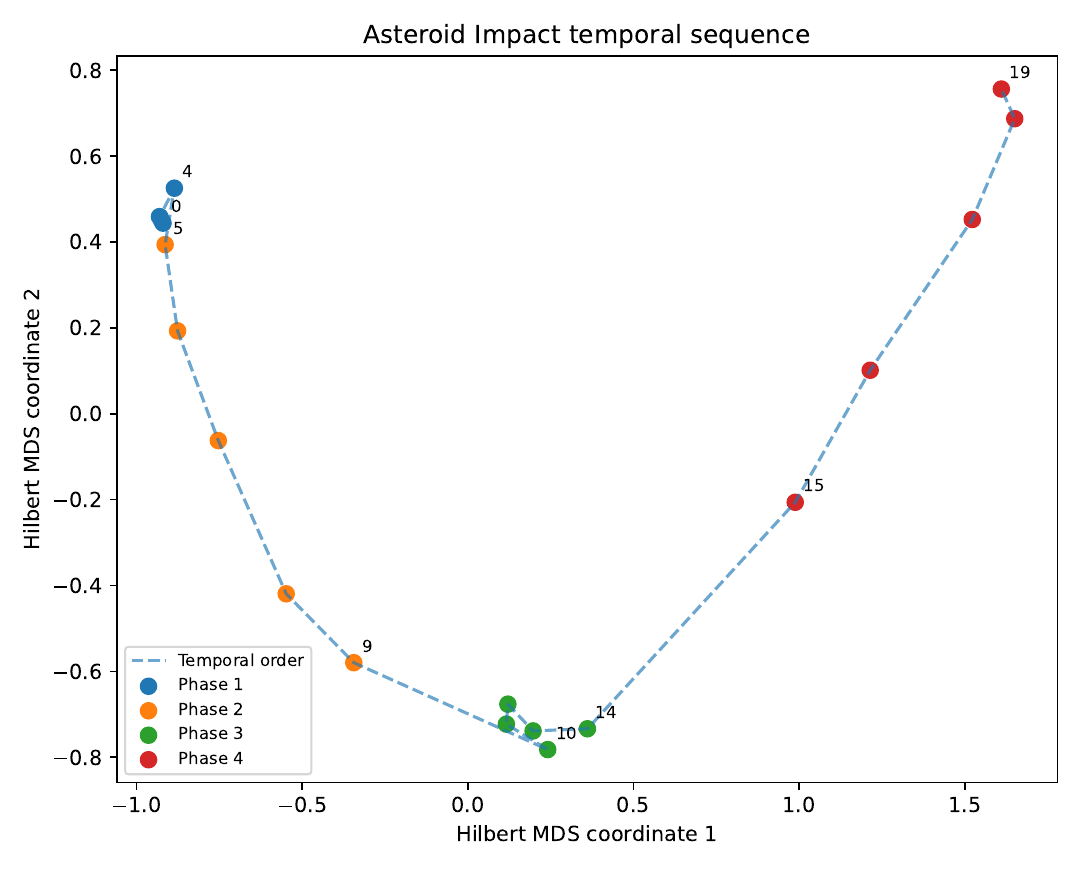}}
  \hfill
  \subfloat[Gaussian \(d_{\mathrm{SK}}\) kernel. Solid lines indicate the annotated phases and dashed lines the kernel segmentation.]{\includegraphics[width=.49\linewidth]{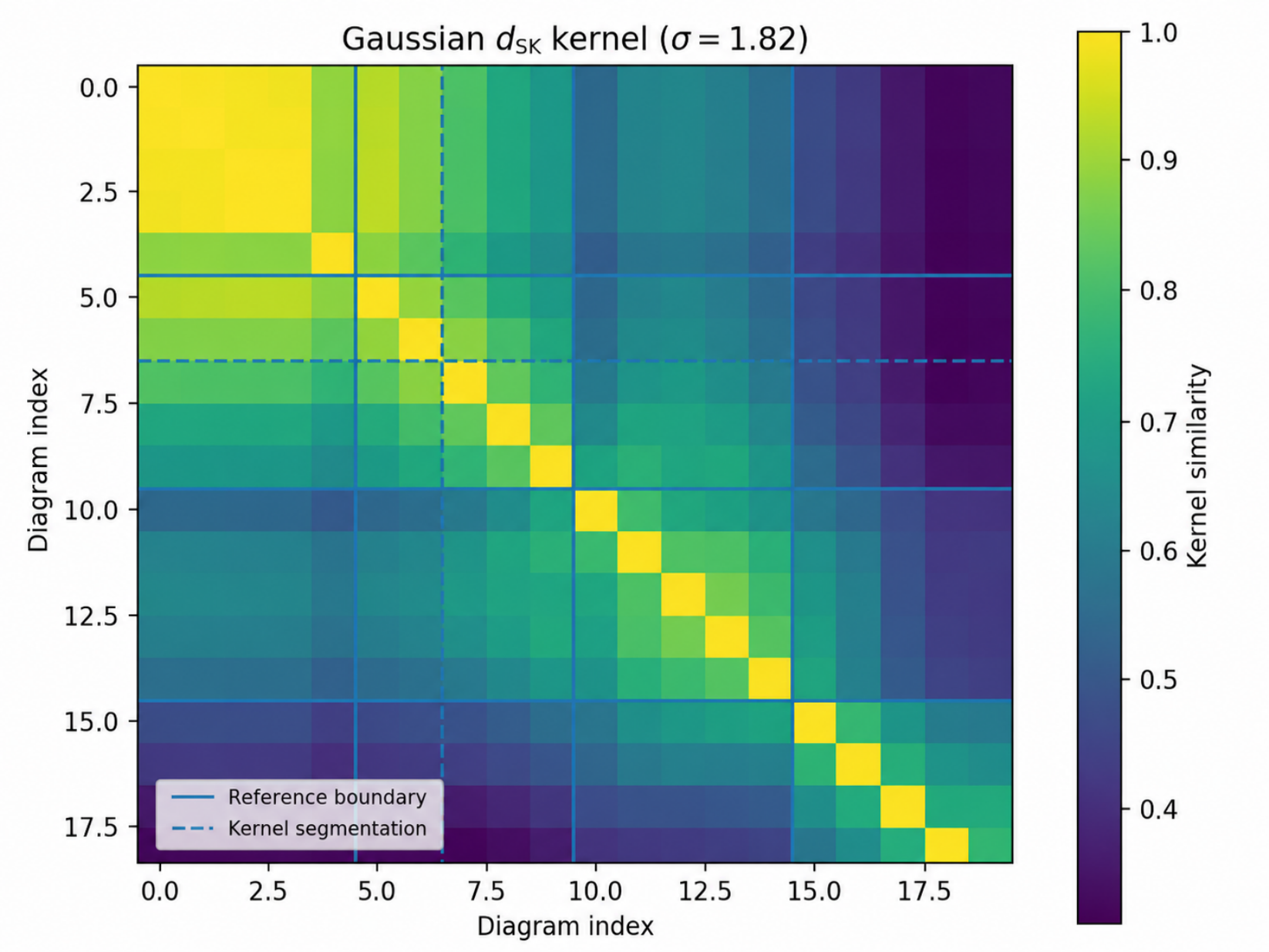}}
  \caption{Hilbertian and kernel views of the Asteroid Impact temporal sequence.}
  \label{fig:asteroid_hilbert_kernel}
\end{figure}

\paragraph{Contiguous kernel segmentation.}

The reference boundaries are
\[
[5,10,15],
\]
whereas the kernel segmentation predicts
\[
[7,10,15].
\]

Kernel segmentation recovers two reference boundaries exactly and
places the first boundary two samples late, yielding a boundary MAE
of \(0.67\) and an ARI of \(0.756\), tying the best ARI achieved on
this collection among the three clustering methods evaluated in
\autoref{sec:kernel_ensemble_results}, namely the \(0.756\) obtained
by Hilbert \(k\)-means. This figure focuses on the Asteroid Impact
temporal case study. The supplementary material
(\autoref{sec:kernel_segmentation_supp}) extends the evaluation to all
eight eligible ordered collections and details the kernel-based
contiguous-segmentation method employed; the complete results are
reported in \autoref{tab:kernel_segmentation_supp}, with exact
recovery of the full reference segmentation on five collections.
Across the eight eligible collections, kernel segmentation ties the
best ARI among the three clustering methods evaluated in
\autoref{sec:kernel_ensemble_results} on six collections and exceeds
all three on Earthquake; on the remaining collection, Sea Surface
Height, it ties the second-best ARI.

%% file: tables/table_dataset_summary.tex
\begin{table}[t]
\centering
\caption{
Scientific collections used in the evaluation.
Here, \(n\) denotes the number of persistence diagrams in the
collection, \(k\) the number of classes supplied by the
benchmark ground-truth, and \(|X|\) the number of off-diagonal
persistence pairs in a diagram \(X\). Diagram cardinalities aggregate
all retained critical-pair types and homological dimensions returned
by the Discrete Morse Sandwich backend. 
}
\label{tab:dataset_summary}
\scriptsize
\begin{tabular}{lrrrrr}
\toprule
Dataset
& \(n\)
& \(k\)
& Min. \(|X|\)
& Median \(|X|\)
& Max. \(|X|\) \\
\midrule
Isabel
& 12 & 3 & 6\,344 & 52\,194 & 54\,763 \\

Earthquake
& 12 & 3 & 36\,904 & 49\,568 & 65\,334 \\

Ionization 2D
& 16 & 4 & 171 & 396 & 463 \\

Ionization 3D
& 16 & 4 & 3\,376 & 13\,910 & 18\,667 \\

Volcanic
& 12 & 3 & 23\,897 & 24\,482 & 24\,759 \\

Viscous Fingering
& 15 & 3 & 409 & 486 & 589 \\

Cloud Processes
& 12 & 3 & 14\,540 & 14\,634 & 14\,872 \\

Asteroid ensemble
& 7 & 2 & 10\,563 & 31\,435 & 102\,735 \\

Asteroid temporal
& 20 & 4 & 1\,132 & 39\,912 & 70\,557 \\

Sea Surface Height
& 48 & 4 & 12\,903 & 13\,242 & 13\,438 \\

Starting Vortex
& 12 & 2 & 466 & 568 & 693 \\

Vortex Street
& 45 & 5 & 22 & 97 & 144 \\
\bottomrule
\end{tabular}
\end{table}

%% file: tables/table_convergence.tex
\begin{table}[t]
\centering
\caption{
Convergence to the high-resolution \(L=40\) numerical reference over
the 12 collections. For each collection \(c\),
\(E_c(L)\) is the relative Frobenius error,
\(\rho_c(L)\) is the Spearman correlation, between the strict
upper-triangular entries of the \(L\), and \(L=40\), level matrices,
and \(\operatorname{NN@3}_c(L)\) is the mean overlap of their
three-nearest-neighbour sets. The table reports the median and
maximum of \(E_c(L)\), and the minimum values of \(\rho_c(L)\) and
\(\operatorname{NN@3}_c(L)\), over the collections. Thus, the last
two columns report the worst collection.
}
\label{tab:selector_convergence}
\small
\begin{tabular}{lccccc}
\toprule
Method
& \(L\)
& Median \(E_c(L)\)
& Max. \(E_c(L)\)
& Min. \(\rho_c(L)\)
& Min. \(\operatorname{NN@3}_c(L)\) \\
\midrule
\(d_{\mathrm{SK},L}\)
& 10
& \(6.74\times10^{-2}\)
& \(8.11\times10^{-1}\)
& 0.604599
& 0.815 \\

\(d_{\mathrm{SK},L}\)
& 20
& \(1.97\times10^{-4}\)
& \(1.44\times10^{-2}\)
& 0.998330
& 0.952 \\

\(d_{\mathrm{SK},L}\)
& 30
& \(6.97\times10^{-8}\)
& \(1.41\times10^{-5}\)
& 1.000000
& 1.000 \\
\midrule
\(W_{\Gamma,L}\)
& 10
& \(1.17\times10^{-1}\)
& \(1.11\)
& 0.492412
& 0.583 \\

\(W_{\Gamma,L}\)
& 20
& \(4.89\times10^{-4}\)
& \(2.11\times10^{-2}\)
& 0.993988
& 0.917 \\

\(W_{\Gamma,L}\)
& 30
& \(4.50\times10^{-6}\)
& \(6.63\times10^{-5}\)
& 1.000000
& 1.000 \\
\bottomrule
\end{tabular}
\end{table}

%% file: tables/table_w2_comparison.tex
\begin{table}[t]
\centering
\caption{Faithfulness and computational performance at $L=30$ on the 12-collection benchmark (\(3{,}004\) distinct pairwise diagram
comparisons for each dissimilarity). $\rho$ is Spearman rank correlation, NN@3 is the mean overlap of the three-nearest-neighbour sets, and speedups use filter compute time.}
\label{tab:w2_comparison}
\scriptsize
\setlength{\tabcolsep}{2.6pt}
\resizebox{\linewidth}{!}{%
\begin{tabular}{lrrrrrrrr}
\toprule
Dataset & $n$ & Med. $|X|$ & $\rho(d_{\mathrm{SK}},W_2)$ & $\rho(W_\Gamma,W_2)$ & NN@3 $d_{\mathrm{SK}}$ & NN@3 $W_\Gamma$ & Speedup $d_{\mathrm{SK}}$ & Speedup $W_\Gamma$ \\
\midrule
Isabel & 12 & 52\,194 & 0.603 & 0.643 & 1.000 & 1.000 & 1640$\times$ & 1745$\times$ \\
Earthquake & 12 & 49\,568 & 0.896 & 0.939 & 0.694 & 0.722 & 5007$\times$ & 4561$\times$ \\
Ionization 2D & 16 & 396 & 0.922 & 0.878 & 1.000 & 0.958 & 38$\times$ & 41$\times$ \\
Ionization 3D & 16 & 13\,910 & 0.625 & 0.835 & 1.000 & 1.000 & 528$\times$ & 523$\times$ \\
Volcanic & 12 & 24\,482 & 0.885 & 0.915 & 0.778 & 0.778 & 117$\times$ & 112$\times$ \\
Viscous Fingering & 15 & 486 & 0.840 & 0.932 & 0.889 & 0.911 & 35$\times$ & 43$\times$ \\
Cloud Processes & 12 & 14\,634 & 0.953 & 0.955 & 1.000 & 1.000 & 1581$\times$ & 1404$\times$ \\
Asteroid ensemble & 7 & 31\,435 & 0.979 & 0.986 & 0.952 & 0.952 & 724$\times$ & 649$\times$ \\
Asteroid temporal & 20 & 39\,912 & 0.875 & 0.941 & 0.883 & 0.917 & 2818$\times$ & 2914$\times$ \\
Sea Surface Height & 48 & 13\,242 & 0.857 & 0.862 & 0.792 & 0.785 & 2578$\times$ & 2765$\times$ \\
Starting Vortex & 12 & 568 & 0.883 & 0.969 & 0.806 & 0.806 & 81$\times$ & 76$\times$ \\
Vortex Street & 45 & 97 & 0.763 & 0.675 & 0.696 & 0.719 & 60$\times$ & 56$\times$ \\
\midrule
Median & -- & -- & 0.879 & 0.924 & 0.886 & 0.914 & 626$\times$ & 586$\times$ \\ 
\bottomrule
\end{tabular}}
\end{table}

%% file: tables/table_w2_partition_agreement_supplementary.tex
\begin{table}[t]
\centering
\caption{
Agreement between average-linkage partitions on the complete
12-collection benchmark. The first two ARI columns compare the
partitions obtained from the SK constructions with the partition
obtained from \(W_2\). The last three columns compare each partition
with the reference groups supplied by the benchmark metadata. The
number of groups \(k\) is fixed from these metadata.
}
\label{tab:w2_partition_agreement}
\scriptsize
\begin{tabular}{lrrrrr}
\toprule
Dataset
& ARI \(d_{\mathrm{SK}}/W_2\)
& ARI \(W_\Gamma/W_2\)
& ARI ref./\(W_2\)
& ARI ref./\(d_{\mathrm{SK}}\)
& ARI ref./\(W_\Gamma\) \\
\midrule
Isabel
& 1.000 & 1.000 & 1.000 & 1.000 & 1.000 \\

Earthquake
& 1.000 & 1.000 & 0.368 & 0.368 & 0.368 \\

Ionization 2D
& 1.000 & 1.000 & 1.000 & 1.000 & 1.000 \\

Ionization 3D
& 1.000 & 1.000 & 1.000 & 1.000 & 1.000 \\

Volcanic
& 1.000 & 1.000 & 0.408 & 0.408 & 0.408 \\

Viscous Fingering
& 0.886 & 0.886 & 0.441 & 0.441 & 0.441 \\

Cloud Processes
& 1.000 & 1.000 & 1.000 & 1.000 & 1.000 \\

Asteroid ensemble
& 1.000 & 1.000 & -0.077 & -0.077 & -0.077 \\

Asteroid temporal
& 0.598 & 0.550 & 0.859 & 0.582 & 0.477 \\

Sea Surface Height
& 0.613 & 0.724 & 1.000 & 0.613 & 0.724 \\

Starting Vortex
& 1.000 & 1.000 & 1.000 & 1.000 & 1.000 \\

Vortex Street
& 0.668 & 0.704 & 1.000 & 0.668 & 0.704 \\
\midrule
Mean
& 0.897 & 0.905 & 0.750 & 0.667 & 0.670 \\
\bottomrule
\end{tabular}
\end{table}

%% file: tables/table_kernel_clustering.tex
\begin{table}[t]
\centering
\caption{Adjusted Rand index (ARI) at $L=30$. The number of clusters $k$ is fixed from the dataset metadata. ``Hilbert $k$-means'' uses the full-dimensional exact Euclidean realization of $d_{\mathrm{SK}}$; ``Gaussian spectral'' uses $k_\sigma(X,Y)=\exp[-d_{\mathrm{SK}}(X,Y)^2/(2\sigma^2)]$ with the median-distance heuristic. The ``Average linkage'' column here
reproduces the ``ARI ref./\(d_{\mathrm{SK}}\)'' column of
\autoref{tab:w2_partition_agreement}; it is repeated here as the
metric-space baseline for comparison with Hilbert \(k\)-means and
Gaussian $\Scorrection{d_{\mathrm{SK}}^{}}$ spectral clustering.}
\label{tab:kernel_clustering}
\scriptsize
\setlength{\tabcolsep}{4pt}
\begin{tabular}{lrrrrr}
\toprule
Dataset & $n$ & $k$ & Average linkage & Hilbert $k$-means & Gaussian spectral \\
\midrule
Isabel & 12 & 3 & \textbf{1.000} & \textbf{1.000} & \textbf{1.000} \\ 
Earthquake & 12 & 3 & \textbf{0.368} & \textbf{0.368} & \textbf{0.368} \\ 
Ionization 2D & 16 & 4 & \textbf{1.000} & \textbf{1.000} & \textbf{1.000} \\ 
Ionization 3D & 16 & 4 & \textbf{1.000} & \textbf{1.000} & \textbf{1.000} \\ 
Volcanic & 12 & 3 & 0.408 & 0.408 & \textbf{0.737} \\ 
Viscous Fingering & 15 & 3 & 0.441 & \textbf{1.000} & \textbf{1.000} \\ 
Cloud Processes & 12 & 3 & \textbf{1.000} & \textbf{1.000} & \textbf{1.000} \\ 
Asteroid ensemble & 7 & 2 & \textbf{-0.077} & \textbf{-0.077} & \textbf{-0.077} \\ 
Asteroid temporal & 20 & 4 & 0.582 & \textbf{0.756} & 0.720 \\ 
Sea Surface Height & 48 & 4 & 0.613 & 0.773 & \textbf{1.000} \\ 
Starting Vortex & 12 & 2 & \textbf{1.000} & \textbf{1.000} & \textbf{1.000} \\ 
Vortex Street & 45 & 5 & 0.668 & \textbf{0.850} & \textbf{0.850} \\ 
\midrule
Mean & -- & -- & 0.667 & 0.756 & \textbf{0.800} \\ 
\bottomrule
\end{tabular}
\end{table}

%% file: limitations_section.tex
\section{Limitations}\label{sec:limitations}

\Scorrection{A practical limitation of the current construction is that all
diagrams to be compared must be supported in the same normalized
persistence triangle \(\Atri\). When this is not already the case,
normalizing the collection requires a common bounding box containing all birth--death pairs,
to be known in advance. This requirement is natural in offline
collection analysis, as considered in our experiments, but may be
restrictive in online or iterative settings, such as persistence
optimization, where future diagrams are not available beforehand.
Expanding the normalization range would modify the scalar encodings of
previously processed diagrams and may therefore require recomputing
their \(d_{\mathrm{SK}}\) distances and derived representations.}

\Scorrection{A second limitation is that the bound
\[
W_2(X,Y)
\leq
\sqrt{2}\,d_{\mathrm{SK}}(X,Y)
\]
provides only one-sided control: it prevents \(d_{\mathrm{SK}}\) from
arbitrarily underestimating \(W_2\), but provides no
dataset-independent relative upper bound on \(d_{\mathrm{SK}}\) in
terms of \(W_2\). This discrepancy has two structural sources. First, the monotone assignment
induced by the SK curve may differ from the
\(W_2\)-optimal planar assignment. Second,
\(d_{\mathrm{SK}}\) evaluates this assignment using scalar
differences on \([0,1]\), rather than the diagonal-aware quadratic
cost in the persistence triangle. In particular, an assignment
between two distinct diagonal projections may contribute a positive
cost to \(d_{\mathrm{SK}}\), whereas any diagonal-to-diagonal
assignment has zero cost in \(W_2\). Consequently,
\(d_{\mathrm{SK}}\) may substantially overestimate \(W_2\).}

\Scorrection{A separate limitation concerns \(W_\Gamma\). Although it re-evaluates
the SK-induced assignment using the same diagonal-aware quadratic
cost as \(W_2\), this assignment is selected independently for each
pair of diagrams according to
the SK ordering, and the resulting pairwise
assignments can be mutually incompatible across triples.
Consequently, the triangle inequality may fail, and \(W_\Gamma\)
must be treated as a dissimilarity rather than a metric. It remains
usable by methods that accept arbitrary dissimilarities, such as
average-linkage clustering. However, it cannot support methods or
guarantees that rely on the triangle inequality, such as
metric-based nearest-neighbour indexing and pruning or approximation
guarantees for metric clustering. Moreover, unlike
\(d_{\mathrm{SK}}\), \(W_\Gamma\) has no Hilbertian guarantee, so a
Gaussian transformation of \(W_\Gamma\) is not guaranteed to be a
positive-definite kernel.}

\Scorrection{Finally, the quadratic \(W_2\) cost assigns disproportionate weight to
large assignment displacements, making \(W_2\)-based aggregation
sensitive to outlier diagrams
\cite{sisouk2026robustbarycenterspersistencediagrams}.
\(W_\Gamma\) inherits this limitation because it evaluates the
SK-induced coupling with the same quadratic planar cost. A natural
direction for future work is therefore to develop
space-filling-curve-based surrogates for \(W_p\), with
\(1\leq p<2\), whose lower-order transport costs may provide greater
robustness to outliers.}

%% file: conclusion_Section8_.tex
\section{Conclusion}\label{sec:conclusion}

We introduced the Sierpi\'nski--Knopp Wasserstein distance
\(d_{\mathrm{SK}}\), a diagonal-aware metric between normalized
persistence diagrams. The construction uses the first-hit selector of
the SK space-filling curve to replace the usual two-dimensional
partial-assignment problem by a one-dimensional \(1\)-Wasserstein
assignment problem on the unit interval. The resulting monotone
optimal assignment can be evaluated by sorting two lists of scalar values,
yielding an \(O(N\log N)\) algorithm for diagrams of total cardinality
\(N\). \Scorrection{By
retaining the diagram point associated with each scalar value, the
same monotone pairing also provides an explicit partial assignment
between the two input diagrams, with unmatched points assigned to the
diagonal. This correspondence can be reused by downstream applications
requiring point correspondences.}

The theoretical analysis establishes that \(d_{\mathrm{SK}}\)
controls the classical diagram \(2\)-Wasserstein distance:
\[
W_2(X,Y)
\leq
\sqrt{2}\,d_{\mathrm{SK}}(X,Y).
\]
Moreover, the cumulative \(L^1\) representation of
\(d_{\mathrm{SK}}^2\) yields an explicit isometric embedding of
\(d_{\mathrm{SK}}\) into a Hilbert space and a positive-definite
Gaussian kernel, this makes
\Scorrection{\(d_{\mathrm{SK}}\)} directly compatible with
Euclidean embedding methods and learning pipelines that require vector
representations, and
kernel-based learning methods on normalized persistence diagrams. We also introduced the assignment-induced planar
surrogate \(W_\Gamma\), which re-evaluates the monotone SK assignment
in the birth--death plane. Although \(W_\Gamma\) does not retain the
metric and Hilbertian guarantees of \(d_{\mathrm{SK}}\), it provides
a tighter numerical surrogate for \(W_2\) in median.

We evaluated both SK-based constructions on 12 scientific collections
containing 227 persistence diagrams and \(3.38\) million persistence
pairs, using complete pairwise numerical \(W_2\) reference matrices
computed with TTK's auction-based solver. The finite-level approximation stabilizes at refinement
level \(L=30\). Across \(3{,}004\) distinct pairwise diagram
comparisons for each dissimilarity, the median Spearman correlations with \(W_2\) are
\(0.879\) for \(d_{\mathrm{SK},30}\) and \(0.924\) for
\(W_{\Gamma,30}\), while the median NN@3 overlaps are \(0.886\) and
\(0.914\), respectively. No violation of
\[
W_2
\leq
W_{\Gamma,30}
\leq
\sqrt{2}\,d_{\mathrm{SK},30}
\]
is observed. The median per-collection speedup of
\(d_{\mathrm{SK},30}\) over \(W_2\) is \(626\times\), and the ratio
of total computation times over the full benchmark is approximately
\(2100\times\). Average-linkage partitions obtained from
\(d_{\mathrm{SK},30}\) and \(W_{\Gamma,30}\) each reproduce the
corresponding \(W_2\) partition exactly on 8 of the 12 collections.

The Hilbertian and kernel structures also lead to effective analysis
pipelines. Average linkage on \(d_{\mathrm{SK},30}\), \(k\)-means in
its full-dimensional Hilbertian realization, and spectral clustering
with its Gaussian kernel obtain mean ARI values of \(0.667\),
\(0.756\), and \(0.800\), respectively, with respect to the benchmark
reference partitions. These three pipelines achieve perfect agreement
with the reference partitions on 5, 6, and 7 of the 12 collections,
respectively. For comparison, average-linkage clustering on \(W_2\)
obtains a mean ARI of \(0.750\) and perfect agreement on 7
collections. The Gaussian \(d_{\mathrm{SK}}\) kernel does not decrease the ARI relative to average linkage on any
collection, and additionally
supports contiguous segmentation of ordered diagram collections,
with exact recovery of all reference boundaries on 5 of the 8
eligible collections.

Overall, \(d_{\mathrm{SK}}\) combines computational efficiency,
control of the \(W_2\) geometry, and compatibility with Euclidean and
kernel methods, whereas \(W_\Gamma\) favors closer numerical agreement
with \(W_2\). \Scorrection{Future work will investigate normalization protocols for online and
iterative settings, space-filling-curve-based surrogates for the
potentially more outlier-robust distances \(W_p\), \(1\leq p<2\),} investigate efficient computation and
reconstruction of \(d_{\mathrm{SK}}\)-barycenters, together with their
use as computationally efficient approximations of
\(W_2\)-barycenters.

%% file: appendix_skot_from_ieee.tex
\input{Supplementary_material}

\section{Proofs}\label{app:sec:skot}

\paragraph{Ambient triangle and persistence diagrams.}
Let
\[
\Atri:=\{(x,y)\in[0,1]^2:\ x\le y\},
\qquad
\Delta:=\{(u,u):u\in[0,1]\}\subset \Atri
\]
be the diagonal.
A (finite) persistence diagram supported in $\Atri$ is represented as an \emph{integer atomic measure}
\[
D=\sum_{r=1}^m a_r\,\delta_{z_r},
\qquad z_r\in \Atri\setminus\Delta,\ \ a_r\in\mathbb{N}.
\]
We write $|D|:=\sum_{r=1}^m a_r$ for the total mass (total number of points counting multiplicities).

\paragraph{Diagonal projection.}
Define the Euclidean orthogonal projection onto the diagonal by
\[
\Pi:\Atri\to\Delta,\qquad \Pi(x,y):=\Big(\frac{x+y}{2},\frac{x+y}{2}\Big).
\]

\paragraph{A space-filling SK curve and a canonical selector.}
Fix a continuous surjection (e.g.\ the standard Sierpi\'nski--Knopp curve on $\Atri$)
\[
S:[0,1]\to \Atri.
\]
For each $z\in\Atri$, define the \emph{first-hit selector}
\begin{equation}\label{app:eq:def_iota}
\iota(z):=\min S^{-1}(\{z\})\in[0,1].
\end{equation}

\begin{lemma}[Well-defined section and injectivity of $\iota$]\label{app:lem:iota_section}
The map $\iota:\Atri\to[0,1]$ in \eqref{app:eq:def_iota} is well-defined and satisfies
\[
S(\iota(z))=z\quad\text{for all }z\in\Atri,
\qquad\text{hence }\iota\text{ is injective.}
\]
Moreover, the two sets $\iota(\Delta)$ and $\iota(\Atri\setminus\Delta)$ are disjoint.
\end{lemma}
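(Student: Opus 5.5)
The argument uses only that $S:[0,1]\to\Atri$ is a continuous surjection and that $[0,1]$ is compact. No property specific to the Sierpi\'nski--Knopp construction is needed. I will establish three points in turn: the minimum defining $\iota$ exists, $\iota$ is a section of $S$, and the disjointness claim holds.

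\emph{Well-definedness.} Fix $z\in\Atri$.
\begin{itemize}
\item Surjectivity of $S$ gives $S^{-1}(\{z\})\neq\varnothing$.
\item Singletons are closed in $\Atri$ (a metric space), so $\{z\}$ is closed. By continuity of $S$, the fibre $S^{-1}(\{z\})$ is closed in $[0,1]$.
\item A nonempty closed subset of the compact interval $[0,1]$ is compact and bounded below. Its infimum is a limit of points of the set, hence belongs to it by closedness, so it is a minimum.
\end{itemize}
Thus $\iota(z)=\min S^{-1}(\{z\})$ exists and lies in $[0,1]$.

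\emph{Section property and injectivity.} By construction $\iota(z)\in S^{-1}(\{z\})$, which is exactly $S(\iota(z))=z$. This gives $S\circ\iota=\mathrm{id}_{\Atri}$. Injectivity is the standard consequence of having a left inverse: if $\iota(z)=\iota(z')$, then $z=S(\iota(z))=S(\iota(z'))=z'$.

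\emph{Disjointness.} Suppose $t\in\iota(\Delta)\cap\iota(\Atri\setminus\Delta)$. Then $t=\iota(z)=\iota(z')$ with $z\in\Delta$ and $z'\notin\Delta$. Injectivity forces $z=z'$, contradicting $\Delta\cap(\Atri\setminus\Delta)=\varnothing$. More generally, an injective map sends disjoint sets to disjoint images.

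I do not expect a genuine obstacle. The only step requiring care is that the infimum of the fibre is actually attained, which is where continuity of $S$ (closed fibres) and compactness of $[0,1]$ are used. Everything else follows formally from $S\circ\iota=\mathrm{id}_{\Atri}$.
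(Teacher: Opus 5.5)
Your proof is correct and matches the paper's argument step for step. Both obtain a nonempty closed fibre from surjectivity and continuity, take its minimum by compactness of $[0,1]$, derive $S\circ\iota=\mathrm{id}_{\Atri}$ and hence injectivity, and conclude disjointness of $\iota(\Delta)$ and $\iota(\Atri\setminus\Delta)$ directly from injectivity.
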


\begin{proof}
Fix $z\in\Atri$. Since $S$ is continuous and $[0,1]$ is compact, the fiber $S^{-1}(\{z\})$ is closed in $[0,1]$,
hence compact. As $S$ is surjective, this fiber is nonempty, so its minimum exists: $\iota(z)$ is well-defined.
By definition, $\iota(z)\in S^{-1}(\{z\})$, hence $S(\iota(z))=z$.

If $\iota(z)=\iota(z')$, then applying $S$ gives $z=S(\iota(z))=S(\iota(z'))=z'$, so $\iota$ is injective.
Finally, if $t\in \iota(\Delta)\cap \iota(\Atri\setminus\Delta)$, then $t=\iota(z)=\iota(z')$ for some
$z\in\Delta$ and $z'\in\Atri\setminus\Delta$, contradicting injectivity. Thus the two images are disjoint.
\end{proof}

\begin{lemma}[Borel measurability of the selector]\label{app:lem:iota_borel}
The selector $\iota:\Atri\to[0,1]$ defined by $\iota(z):=\min S^{-1}(\{z\})$ is Borel measurable.
More precisely, for every $t\in[0,1]$ one has
\[
\{z\in\Atri:\ \iota(z)\le t\}=S([0,t]).
\]
\end{lemma}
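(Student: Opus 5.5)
The plan is to prove the set identity $\{z\in\Atri:\iota(z)\le t\}=S([0,t])$ by double inclusion. Measurability of $\iota$ will then follow from the fact that the sets $[0,t]$ generate the Borel $\sigma$-algebra of $[0,1]$. Throughout I use Lemma~\ref{app:lem:iota_section}: $\iota$ is well defined and $S(\iota(z))=z$.

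\emph{Inclusion $\subseteq$.} Let $z$ satisfy $\iota(z)\le t$. Put $s:=\iota(z)$. Then $s\in[0,t]$, and Lemma~\ref{app:lem:iota_section} gives $S(s)=z$, so $z\in S([0,t])$.

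\emph{Inclusion $\supseteq$.} Let $z=S(s)$ with $s\in[0,t]$. Then $s\in S^{-1}(\{z\})$. Since $\iota(z)$ is the minimum of this fiber, $\iota(z)\le s\le t$.

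\emph{From the identity to Borel measurability.} The interval $[0,t]$ is compact and $S$ is continuous, so $S([0,t])$ is compact in $\Atri$. It is therefore closed, and in particular Borel. Thus $\iota^{-1}([0,t])$ is Borel for every $t\in[0,1]$.

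The family $\{[0,t]:t\in[0,1]\}$ generates the Borel $\sigma$-algebra of $[0,1]$. For instance, $(s,t]=[0,t]\setminus[0,s]$, and half-open intervals together with $\{0\}=[0,0]$ generate all open subsets of $[0,1]$ by countable unions.

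Finally, the collection $\{B\subseteq[0,1]:\iota^{-1}(B)\text{ Borel}\}$ is a $\sigma$-algebra, since preimages commute with complements and countable unions. It contains this generating family, so it contains every Borel set. Hence $\iota$ is Borel measurable.

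None of these steps is a real obstacle. The only point needing care is the generator argument in the last step, which is standard.
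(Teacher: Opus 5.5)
Your proposal is correct and follows the paper's route: the double inclusion $\{z\in\Atri:\iota(z)\le t\}=S([0,t])$ via attainment of the minimum of the fiber, then compactness (hence closedness) of $S([0,t])$. The only difference is that you spell out the generating-family argument for the intervals $[0,t]$, a step the paper leaves implicit when it concludes that $\iota$ is Borel.
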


\begin{proof}
Let $t\in[0,1]$. If $\iota(z)\le t$, then there exists $s\in[0,t]$ such that $S(s)=z$, hence
$z\in S([0,t])$. Conversely, if $z\in S([0,t])$, then $S(s)=z$ for some $s\le t$, so
$\iota(z)=\min S^{-1}(\{z\})\le s\le t$. Thus $\{z:\iota(z)\le t\}=S([0,t])$.
Since $S$ is continuous and $[0,t]$ is compact, $S([0,t])$ is compact, hence Borel. Therefore $\iota$ is Borel.
\end{proof}

\paragraph{1D transport on \texorpdfstring{$[0,1]$}{[0,1]}.}
Let $\mathcal{M}_+([0,1])$ be the cone of finite nonnegative Borel measures on $[0,1]$.
For $\mu,\nu\in\mathcal{M}_+([0,1])$ with equal total mass $\mu([0,1])=\nu([0,1])$, define the 1-Wasserstein distance
(with ground cost $|s-t|$) by
\[
W_1(\mu,\nu)
:=\inf_{\pi\in\Pi(\mu,\nu)}\ \int_{[0,1]^2}|s-t|\,d\pi(s,t),
\]
where $\Pi(\mu,\nu)$ denotes the set of couplings of $\mu$ and $\nu$.

\paragraph{SK encodings of diagrams.}
For a diagram $D$ on $\Atri\setminus\Delta$, define two measures on $[0,1]$:
\[
\mu_D:=\iota_{\#}D\in\mathcal{M}_+([0,1]),
\qquad
\nu_D:=\iota_{\#}(\Pi_{\#}D)\in\mathcal{M}_+([0,1]).
\]
Note that $\mu_D([0,1])=\nu_D([0,1])=|D|$.
Define the associated \emph{signed} measure
\[
\sigma_D:=\mu_D-\nu_D,
\qquad\text{so that }\sigma_D([0,1])=0.
\]

\begin{definition}[$d_{\mathrm{SK}}$ distance]\label{app:def:SKOT}
For two diagrams $D_1,D_2$ on $\Atri\setminus\Delta$, define the \emph{$d_{\mathrm{SK}}$ distance} by
\begin{equation}\label{app:eq:def_SKOT}
d_{\mathrm{SK}}(D_1,D_2)^2
:=W_1\big(\mu_{D_1}+\nu_{D_2},\ \mu_{D_2}+\nu_{D_1}\big).
\end{equation}
Equivalently, in multiset notation, the left marginal consists of the times
$\{\iota(x):x\in D_1\}\cup\{\iota(\Pi(y)):y\in D_2\}$ (with multiplicities),
and the right marginal consists of $\{\iota(y):y\in D_2\}\cup\{\iota(\Pi(x)):x\in D_1\}$.
\end{definition}

\begin{lemma}\label{app:lem:diff_identity}
For any diagrams $D_1,D_2$ on $\Atri\setminus\Delta$,
\[
(\mu_{D_1}+\nu_{D_2})-(\mu_{D_2}+\nu_{D_1})=\sigma_{D_1}-\sigma_{D_2}.
\]
\end{lemma}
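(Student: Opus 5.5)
This is a direct algebraic identity between finite signed measures on $[0,1]$, so the plan is to expand both sides using the definitions of $\mu_D$, $\nu_D$ and $\sigma_D$ and compare. All four measures $\mu_{D_1},\nu_{D_1},\mu_{D_2},\nu_{D_2}$ are finite nonnegative Borel measures on $[0,1]$. They are well defined as pushforwards by \autoref{app:lem:iota_borel}, and $\Pi$ is continuous. Each has total mass $|D_1|$ or $|D_2|$, so every expression below is a finite signed measure. In particular, subtraction never produces an undefined $\infty-\infty$.

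The key step is to evaluate both sides on an arbitrary Borel set $B\subseteq[0,1]$. For the left-hand side,
\[
\bigl((\mu_{D_1}+\nu_{D_2})-(\mu_{D_2}+\nu_{D_1})\bigr)(B)
=\mu_{D_1}(B)+\nu_{D_2}(B)-\mu_{D_2}(B)-\nu_{D_1}(B).
\]
Regrouping the four real numbers gives
\[
\bigl(\mu_{D_1}(B)-\nu_{D_1}(B)\bigr)-\bigl(\mu_{D_2}(B)-\nu_{D_2}(B)\bigr),
\]
which equals $\sigma_{D_1}(B)-\sigma_{D_2}(B)$ by the definition $\sigma_D:=\mu_D-\nu_D$. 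Since $B$ is arbitrary, the two signed measures coincide.

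Equivalently, one can work at the level of atoms. Write $D_i=\sum_r a^i_r\delta_{z^i_r}$. The left-hand side is then
\[
\sum_r a^1_r\delta_{\iota(z^1_r)}+\sum_r a^2_r\delta_{\iota(\Pi(z^2_r))}-\sum_r a^2_r\delta_{\iota(z^2_r)}-\sum_r a^1_r\delta_{\iota(\Pi(z^1_r))},
\]
and regrouping by diagram gives the same conclusion.

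No real obstacle arises. The only point requiring care is that the subtraction takes place in the vector space of finite signed measures, and this is guaranteed by the finiteness of the masses noted above.
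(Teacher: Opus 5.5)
Your proposal is correct and follows the paper's own proof, which simply expands the left-hand side and regroups the four terms using $\sigma_D=\mu_D-\nu_D$. Your evaluation on an arbitrary Borel set and your remark that all masses are finite make that regrouping explicit, but the argument is the same.
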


\begin{proof}
Expand and regroup:
\[
(\mu_{D_1}+\nu_{D_2})-(\mu_{D_2}+\nu_{D_1})
=(\mu_{D_1}-\nu_{D_1})-(\mu_{D_2}-\nu_{D_2})
=\sigma_{D_1}-\sigma_{D_2}.
\]
\end{proof}

\begin{proposition}[1D earthmover formula and monotone matching]\label{app:prop:1D_EMD}
Let $\mu,\nu\in\mathcal{M}_+([0,1])$ be \emph{integer atomic} measures with equal total mass:
\[
\mu=\sum_{i=1}^M \delta_{a_i},\qquad \nu=\sum_{i=1}^M \delta_{b_i}
\quad\text{(multisets, repetitions allowed).}
\]
Define the cumulative discrepancy
\[
H(t):=(\mu-\nu)([0,t])\qquad (t\in[0,1]).
\]
Then
\begin{equation}\label{app:eq:emd_cumulative}
W_1(\mu,\nu)=\int_0^1 |H(t)|\,dt.
\end{equation}
Moreover, if $(a_{(1)}\le\cdots\le a_{(M)})$ and $(b_{(1)}\le\cdots\le b_{(M)})$ are the sorted lists, then
\begin{equation}\label{app:eq:emd_sorting}
W_1(\mu,\nu)=\sum_{k=1}^M |a_{(k)}-b_{(k)}|.
\end{equation}
\end{proposition}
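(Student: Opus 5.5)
We prove the sorting formula \eqref{app:eq:emd_sorting} first and then obtain the cumulative formula \eqref{app:eq:emd_cumulative} by identifying the monotone cost with $\int_0^1|H|$.

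\textbf{Step 1: reduction to permutations.} For integer atomic measures with $M$ unit atoms, every coupling $\pi\in\Pi(\mu,\nu)$ can be lifted to a coupling of the uniform measures on the index sets $\{1,\dots,M\}$. That is, $\pi$ is the image of a doubly stochastic matrix $P$ via $(i,j)\mapsto(a_i,b_j)$, with cost $\sum_{i,j}P_{ij}|a_i-b_j|$. This objective is linear in $P$. By Birkhoff--von Neumann, its minimum over the Birkhoff polytope is attained at a permutation matrix. Hence
\[
W_1(\mu,\nu)=\min_{\psi\in\mathfrak S_M}\sum_{i=1}^M|a_i-b_{\psi(i)}|.
\]

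\textbf{Step 2: optimality of the monotone matching.} Index the atoms so that $a_1\le\cdots\le a_M$. Suppose $\psi$ has an inversion, i.e.\ $i<i'$ with $b_{\psi(i)}>b_{\psi(i')}$. The cost $c(x,y)=|x-y|$ satisfies the Monge (submodularity) inequality: for $x\le x'$ and $y\le y'$,
\[
|x-y|+|x'-y'|\le|x-y'|+|x'-y|.
\]
We check this by a short case analysis on the relative order of the four reals; alternatively, write $|x-y|=\int\bigl|\mathbf 1_{\{x\le t\}}-\mathbf 1_{\{y\le t\}}\bigr|\,dt$ and verify the inequality pointwise in $t$. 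Swapping the two images therefore does not increase the cost, and it strictly decreases the number of inversions. Induction on the number of inversions shows that the identity on sorted lists is optimal, which gives \eqref{app:eq:emd_sorting}.

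\textbf{Step 3: the cumulative formula.} For each $k$ we use the identity
\[
|a_{(k)}-b_{(k)}|=\int_0^1\bigl|\mathbf 1_{\{a_{(k)}\le t\}}-\mathbf 1_{\{b_{(k)}\le t\}}\bigr|\,dt.
\]
Sum over $k$. The main point is to show that, for each fixed $t$, the differences $\mathbf 1_{\{a_{(k)}\le t\}}-\mathbf 1_{\{b_{(k)}\le t\}}$ all have the same sign as $k$ varies. Set
\[
F_\mu(t)=\#\{k:a_{(k)}\le t\},\qquad F_\nu(t)=\#\{k:b_{(k)}\le t\}.
\]
Because the lists are sorted, $\{k:a_{(k)}\le t\}=\{1,\dots,F_\mu(t)\}$, and similarly for $\nu$. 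The nonzero terms are therefore exactly the indices strictly between $\min(F_\mu,F_\nu)$ and $\max(F_\mu,F_\nu)$, and they share one sign. Hence the sum of absolute values equals
\[
|F_\mu(t)-F_\nu(t)|=|H(t)|.
\]
Integrating in $t$ and using Fubini for the finite sum yields \eqref{app:eq:emd_cumulative}.

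The step needing the most care is Step 1: making the reduction from general couplings to permutations rigorous when atoms coincide (repeated values). We handle this by working with the index-level doubly stochastic matrix rather than with the measures directly. Ties in the sorting are harmless, since any tie-breaking gives the same cost.
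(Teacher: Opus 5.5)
Your proof is correct, but it establishes optimality by a different route than the paper.

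\textbf{The paper's route.} The paper never reduces to permutations. For an arbitrary coupling $\pi$, it uses the marginal constraints to write $H(t)=\pi(L_t\times R_t)-\pi(R_t\times L_t)$. Hence $|H(t)|\le\pi(C_t)$, where $C_t$ is the set of pairs separated by the cut at $t$. Integrating the identity $|s-u|=\int_0^1\mathbf{1}_{C_t}(s,u)\,dt$ then gives $W_1(\mu,\nu)\ge\int_0^1|H(t)|\,dt$ directly. Equality is shown for the sorted coupling by exactly the counting you do in your Step 3. So the paper proves the cumulative formula first and obtains the sorting formula as a by-product. You do the reverse: you prove the sorting formula first, via Birkhoff--von Neumann plus a Monge exchange argument, and only afterwards identify its value with $\int_0^1|H|$.

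\textbf{What each route buys.} The paper's argument is shorter. It needs neither Birkhoff--von Neumann nor the index-level lifting you flag for repeated atoms. Its lower bound also holds verbatim for arbitrary equal-mass finite measures on $[0,1]$, atomic or not. Your exchange step uses only the submodularity of the cost. It therefore shows that the monotone matching is optimal for every cost $h(x-y)$ with $h$ convex, e.g.\ $|x-y|^p$. The paper's argument, built on the $p=1$ identity $|s-u|=\int\mathbf{1}_{C_t}$, does not directly give this.

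\textbf{Two small fixes.} In Step 1, state the lift explicitly: set $P_{ij}:=\pi(\{(a_i,b_j)\})/(m_i n_j)$, with $m_i:=\#\{i':a_{i'}=a_i\}$ and $n_j:=\#\{j':b_{j'}=b_j\}$. This matrix is doubly stochastic and pushes forward to $\pi$. In Step 3, the nonzero indices are those with $\min(F_\mu,F_\nu)<k\le\max(F_\mu,F_\nu)$, not those strictly between the two. Your resulting count $|F_\mu(t)-F_\nu(t)|=|H(t)|$ is nevertheless correct.
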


\begin{proof}
\emph{Step 1.}
Let $\pi\in\Pi(\mu,\nu)$ be any coupling.
For $t\in[0,1]$, set $L_t:=[0,t]$ and $R_t:=(t,1]$, and define
\[
A_t:=\pi(L_t\times R_t),\qquad B_t:=\pi(R_t\times L_t).
\]
Using the marginal constraints,
\[
\mu(L_t)=\pi(L_t\times[0,1])=\pi(L_t\times L_t)+A_t,
\qquad
\nu(L_t)=\pi([0,1]\times L_t)=\pi(L_t\times L_t)+B_t,
\]
hence $H(t)=\mu(L_t)-\nu(L_t)=A_t-B_t$ and therefore $|H(t)|\le A_t+B_t$.

Now define the ``crossing event'' at level $t$:
\[
C_t:=\big(L_t\times R_t\big)\ \cup\ \big(R_t\times L_t\big),
\quad\text{so that }\pi(C_t)=A_t+B_t.
\]
Thus $|H(t)|\le \pi(C_t)$ for all $t$.

Next, note the identity (valid for all $s,u\in[0,1]$):
\[
|s-u|=\int_0^1 \mathbf{1}_{C_t}(s,u)\,dt,
\]
because $\mathbf{1}_{C_t}(s,u)=1$ iff $t$ lies strictly between $s$ and $u$.
By Fubini,
\[
\int_{[0,1]^2}|s-u|\,d\pi(s,u)
=\int_0^1 \pi(C_t)\,dt
\ \ge\ \int_0^1 |H(t)|\,dt.
\]
Taking the infimum over $\pi\in\Pi(\mu,\nu)$ gives
\begin{equation}\label{app:eq:lower_bound_emd}
W_1(\mu,\nu)\ \ge\ \int_0^1 |H(t)|\,dt.
\end{equation}

\emph{Step 2.}
Let $(a_{(1)}\le\cdots\le a_{(M)})$ and $(b_{(1)}\le\cdots\le b_{(M)})$ be sorted.
Consider the monotone (quantile) coupling
\[
\pi^\star:=\sum_{k=1}^M \delta_{(a_{(k)},\,b_{(k)})}\ \in\ \Pi(\mu,\nu).
\]
Its transport cost is $\int |s-u|\,d\pi^\star=\sum_{k=1}^M |a_{(k)}-b_{(k)}|$, which proves \eqref{app:eq:emd_sorting}
once we show it equals the right-hand side of \eqref{app:eq:emd_cumulative}.

Fix $t\in[0,1]$ and set $c_a(t):=\#\{k:\ a_{(k)}\le t\}$ and $c_b(t):=\#\{k:\ b_{(k)}\le t\}$.
Then $H(t)=c_a(t)-c_b(t)$.
A pair $(a_{(k)},b_{(k)})$ crosses the cut at $t$ (i.e.\ belongs to $C_t$) iff exactly one of $a_{(k)},b_{(k)}$
is $\le t$. If $c_a(t)\ge c_b(t)$, then this happens precisely for indices $k\in\{c_b(t)+1,\dots,c_a(t)\}$,
so $\pi^\star(C_t)=c_a(t)-c_b(t)=|H(t)|$. The case $c_b(t)\ge c_a(t)$ is symmetric, hence
\[
\pi^\star(C_t)=|H(t)|\qquad\forall\,t\in[0,1].
\]
Therefore,
\[
\int_{[0,1]^2}|s-u|\,d\pi^\star(s,u)
=\int_0^1 \pi^\star(C_t)\,dt
=\int_0^1 |H(t)|\,dt.
\]
Combining with \eqref{app:eq:lower_bound_emd} yields \eqref{app:eq:emd_cumulative}, and \eqref{app:eq:emd_sorting} follows as well.
\end{proof}

\begin{theorem}[$d_{\mathrm{SK}}^2$ is a metric]\label{app:thm:SKOT_metric}
Fix the curve $S$ and the selector $\iota$ as above.
Then $d_{\mathrm{SK}}^2$ in \eqref{app:eq:def_SKOT} defines a metric on the set of finite diagrams supported in $\Atri\setminus\Delta$.
\end{theorem}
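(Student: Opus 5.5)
The plan is to express $d_{\mathrm{SK}}^2$ as an $L^1$ norm of a difference of cumulative functions, so that all metric axioms follow from properties of the $L^1$ norm, plus one injectivity argument for definiteness.

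\emph{Reduction to an $L^1$ distance.} For a diagram $D$, set $H_D(t):=\sigma_D([0,t])$ for $t\in[0,1]$. Since $\mu_{D_1}+\nu_{D_2}$ and $\mu_{D_2}+\nu_{D_1}$ are integer atomic measures of equal mass $|D_1|+|D_2|$, \autoref{app:prop:1D_EMD} applies. Its cumulative discrepancy is, by \autoref{app:lem:diff_identity}, equal to $(\sigma_{D_1}-\sigma_{D_2})([0,t])=H_{D_1}(t)-H_{D_2}(t)$. Therefore
\[
d_{\mathrm{SK}}(D_1,D_2)^2=\int_0^1 |H_{D_1}(t)-H_{D_2}(t)|\,dt=\|H_{D_1}-H_{D_2}\|_{L^1([0,1])}.
\]
Each $H_D$ is a bounded step function, hence lies in $L^1$. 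The triangle inequality, symmetry, nonnegativity and $d_{\mathrm{SK}}(D,D)^2=0$ then follow immediately from the corresponding properties of the $L^1$ norm. Note that the triangle inequality is obtained directly for $d_{\mathrm{SK}}^2$ rather than for $d_{\mathrm{SK}}$; the square root of a metric is again a metric, which gives the companion corollary.

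\emph{Definiteness (main step).} Suppose $\|H_{D_1}-H_{D_2}\|_{L^1}=0$. The function $H_{D_1}-H_{D_2}$ is right-continuous and piecewise constant with finitely many jumps. A right-continuous step function vanishing almost everywhere vanishes everywhere on $[0,1]$: every point $t<1$ has a right neighbourhood on which the function is constant, and that neighbourhood has positive measure. At $t=1$ the value is $\sigma_{D_1}([0,1])-\sigma_{D_2}([0,1])=0$. Hence $\sigma_{D_1}([0,t])=\sigma_{D_2}([0,t])$ for all $t$. Since a finite signed atomic measure on $[0,1]$ is determined by its cumulative function (its atoms are the jumps), we get $\sigma_{D_1}=\sigma_{D_2}$.

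By \autoref{app:lem:iota_section}, $\mu_D$ is supported on $\iota(\Atri\setminus\Delta)$, because the atoms of $D$ lie off the diagonal. Likewise $\nu_D$ is supported on $\iota(\Delta)$, because $\Pi$ maps into $\Delta$. These two sets are disjoint, so $\mu_D$ and $\nu_D$ are mutually singular and $\mu_D$ is exactly the positive part of $\sigma_D$ (Jordan decomposition). Thus $\mu_{D_1}=\mu_{D_2}$.

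Finally, $\mu_D=\sum_r a_r\delta_{\iota(z_r)}$ with $\iota$ injective, so the atoms and multiplicities of $D$ are recovered as $z=S(t)$ for each atom $t$ of $\mu_D$ (using $S\circ\iota=\mathrm{id}$). This gives $D_1=D_2$.

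I expect this definiteness step to be the only delicate point. It requires two things: the step-function argument that turns almost-everywhere equality into pointwise equality, and the disjoint-support argument that separates $\mu_D$ from $\nu_D$ inside $\sigma_D$. Both rest on \autoref{app:lem:iota_section}.
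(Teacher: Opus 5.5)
Your proof is correct and follows essentially the same route as the paper. The triangle inequality comes from the cumulative $L^1$ formula, obtained via Proposition~\ref{app:prop:1D_EMD} and Lemma~\ref{app:lem:diff_identity}, and definiteness comes from the disjointness of $\iota(\Delta)$ and $\iota(\Atri\setminus\Delta)$ together with the injectivity of $\iota$. The only small difference is in the first step of definiteness: the paper goes from $d_{\mathrm{SK}}(D_1,D_2)^2=0$ to $\mu_{D_1}+\nu_{D_2}=\mu_{D_2}+\nu_{D_1}$ by implicitly using that $W_1$ separates measures, whereas your right-continuous step-function argument proves this step explicitly from the $L^1$ representation.
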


\begin{proof}
\emph{Nonnegativity and symmetry.}
By definition, $d_{\mathrm{SK}}(D_1,D_2)^2$ is a 1-Wasserstein distance on $[0,1]$, hence $d_{\mathrm{SK}}^2\ge 0$ and
$d_{\mathrm{SK}}(D_1,D_2)^2=d_{\mathrm{SK}}(D_2,D_1)^2$.

\emph{Identity of indiscernibles.}
Clearly $d_{\mathrm{SK}}(D,D)^2=W_1(\mu_D+\nu_D,\mu_D+\nu_D)=0$.
Conversely, assume $d_{\mathrm{SK}}(D_1,D_2)^2=0$. Then
\[
\mu_{D_1}+\nu_{D_2}=\mu_{D_2}+\nu_{D_1}.
\]
Rearranging gives $\sigma_{D_1}=\sigma_{D_2}$ (equivalently, use Lemma~\ref{app:lem:diff_identity}).
By Lemma~\ref{app:lem:iota_section}, $\iota(\Delta)$ and $\iota(\Atri\setminus\Delta)$ are disjoint; moreover
$\mu_{D_i}$ is supported in $\iota(\Atri\setminus\Delta)$ while $\nu_{D_i}$ is supported in $\iota(\Delta)$.
Restricting the identity $\sigma_{D_1}=\sigma_{D_2}$ to $\iota(\Atri\setminus\Delta)$ yields $\mu_{D_1}=\mu_{D_2}$.
Since $\iota$ is injective, equality of the pushforwards $\iota_\# D_1=\iota_\# D_2$ implies $D_1=D_2$
as integer atomic measures on $\Atri\setminus\Delta$.

\emph{Triangle inequality.}
Let $D_1,D_2,D_3$ be diagrams and set
\[
\alpha_{ij}:=\mu_{D_i}+\nu_{D_j},\qquad \beta_{ij}:=\mu_{D_j}+\nu_{D_i}.
\]
By Lemma~\ref{app:lem:diff_identity}, $\alpha_{ij}-\beta_{ij}=\sigma_{D_i}-\sigma_{D_j}$.
Since the measures $\alpha_{ij},\beta_{ij}$ are integer atomic on $[0,1]$ with equal total mass $|D_i|+|D_j|$,
Proposition~\ref{app:prop:1D_EMD} gives
\[
d_{\mathrm{SK}}(D_i,D_j)^2=W_1(\alpha_{ij},\beta_{ij})
=\int_0^1 \big|(\sigma_{D_i}-\sigma_{D_j})([0,t])\big|\,dt.
\]
Hence, using $(\sigma_{D_1}-\sigma_{D_3})=(\sigma_{D_1}-\sigma_{D_2})+(\sigma_{D_2}-\sigma_{D_3})$ and $|u+v|\le |u|+|v|$,
\[
\begin{aligned}
d_{\mathrm{SK}}(D_1,D_3)^2
&=\int_0^1 \big|(\sigma_{D_1}-\sigma_{D_3})([0,t])\big|\,dt\\
&\le \int_0^1 \big|(\sigma_{D_1}-\sigma_{D_2})([0,t])\big|\,dt\\
&\quad +\int_0^1 \big|(\sigma_{D_2}-\sigma_{D_3})([0,t])\big|\,dt\\
&=d_{\mathrm{SK}}(D_1,D_2)^2+d_{\mathrm{SK}}(D_2,D_3)^2.
\end{aligned}
\]
This proves the triangle inequality.
\end{proof}

\begin{theorem}
[\(d_{\mathrm{SK}}\) is a metric]
\label{app:cor:dSK_metric}
The function \(d_{\mathrm{SK}}\) defined in
\eqref{app:eq:def_SKOT} is a metric on the set of finite persistence
diagrams supported in \(\Atri\setminus\Delta\).
\end{theorem}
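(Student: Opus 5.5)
The plan is to deduce the statement directly from Theorem~\ref{app:thm:SKOT_metric}, using the standard fact that the square root of a metric is again a metric. Set \(\rho:=d_{\mathrm{SK}}^2\). By Theorem~\ref{app:thm:SKOT_metric}, \(\rho\) is a metric on finite diagrams supported in \(\Atri\setminus\Delta\). Since \(d_{\mathrm{SK}}=\sqrt{\rho}\) with the nonnegative root, three of the axioms transfer immediately:
\begin{itemize}
\item nonnegativity and symmetry follow because \(\sqrt{\cdot}\) is well defined on \([0,\infty)\);
\item identity of indiscernibles follows because \(\sqrt{\rho}=0\) if and only if \(\rho=0\), and \(\rho(D_1,D_2)=0\) if and only if \(D_1=D_2\).
\end{itemize}

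The only step needing an argument is the triangle inequality. Fix \(D_1,D_2,D_3\) and write \(a:=d_{\mathrm{SK}}(D_1,D_2)\), \(b:=d_{\mathrm{SK}}(D_2,D_3)\), \(c:=d_{\mathrm{SK}}(D_1,D_3)\), all nonnegative. Theorem~\ref{app:thm:SKOT_metric} gives \(c^2\le a^2+b^2\). Since \(ab\ge0\), we have \(a^2+b^2\le (a+b)^2\). Hence \(c^2\le(a+b)^2\), and because \(x\mapsto\sqrt{x}\) is increasing on \([0,\infty)\) and \(c,a+b\ge 0\), this yields \(c\le a+b\).

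I expect no real obstacle. The only care points are to use the nonnegative square root throughout, and to note that the subadditivity \(\sqrt{x+y}\le\sqrt{x}+\sqrt{y}\) is exactly what makes \(\sqrt{\rho}\) a metric whenever \(\rho\) is.

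Alternatively, one could bypass Theorem~\ref{app:thm:SKOT_metric} and use the explicit embedding \(\Phi\) of Section~\ref{sec:skot_hilbert}, where \(d_{\mathrm{SK}}(X_1,X_2)=\|\Phi(X_1)-\Phi(X_2)\|_{L^2}\). The triangle inequality then follows from Minkowski's inequality in \(L^2\), and definiteness follows from the injectivity of \(X\mapsto H_X\). I would present the first argument, since it relies only on results already proved at this point in the appendix.
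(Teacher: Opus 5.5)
Your proposal is correct and follows the paper's proof essentially verbatim. Nonnegativity, symmetry and identity of indiscernibles are inherited from Theorem~\ref{app:thm:SKOT_metric}, and the triangle inequality for \(d_{\mathrm{SK}}\) follows from the one for \(d_{\mathrm{SK}}^2\) combined with \(\sqrt{a+b}\le\sqrt a+\sqrt b\). Your version \(c^2\le a^2+b^2\le(a+b)^2\) is the same step.
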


\begin{proof}
Nonnegativity, symmetry, and identity of indiscernibles follow
immediately from the corresponding properties of
\(d_{\mathrm{SK}}^2\) established in
Theorem~\ref{app:thm:SKOT_metric}. For any diagrams
\(D_1,D_2,D_3\), the triangle inequality for
\(d_{\mathrm{SK}}^2\) gives
\[
\begin{aligned}
d_{\mathrm{SK}}(D_1,D_3)
&=
\sqrt{d_{\mathrm{SK}}(D_1,D_3)^2}\\
&\leq
\sqrt{
d_{\mathrm{SK}}(D_1,D_2)^2
+
d_{\mathrm{SK}}(D_2,D_3)^2
}\\
&\leq
d_{\mathrm{SK}}(D_1,D_2)
+
d_{\mathrm{SK}}(D_2,D_3),
\end{aligned}
\]
where the last inequality follows from
\(\sqrt{a+b}\leq\sqrt a+\sqrt b\) for \(a,b\geq0\).
Thus, \(d_{\mathrm{SK}}\) is a metric.
\end{proof}

\begin{corollary}[Sorting formula and complexity for $d_{\mathrm{SK}}^2$]\label{app:cor:SKOT_complexity}
Let $D_1,D_2$ be finite persistance diagrams on $\Atri\setminus\Delta$ and set $N:=|D_1|+|D_2|$.
Form the two multisets of times (with multiplicities)
\[
A:=\{\iota(x):x\in D_1\}\ \cup\ \{\iota(\Pi(y)):\ y\in D_2\},
\qquad
B:=\{\iota(y):y\in D_2\}\ \cup\ \{\iota(\Pi(x)):\ x\in D_1\}.
\]
Let $(a_{(1)}\le\cdots\le a_{(N)})$ and $(b_{(1)}\le\cdots\le b_{(N)})$ be the sorted lists of $A$ and $B$.
Then
\[
d_{\mathrm{SK}}(D_1,D_2)^2=\sum_{k=1}^N |a_{(k)}-b_{(k)}|.
\]
Consequently, \emph{given the values of $\iota$ on the relevant points},
$d_{\mathrm{SK}}(D_1,D_2)^2$ can be computed in $O(N\log N)$ time by sorting and a linear pass.
\end{corollary}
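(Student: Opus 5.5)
The plan is to read the corollary off Proposition~\ref{app:prop:1D_EMD}, applied to the two augmented encodings that appear in Definition~\ref{app:def:SKOT}. Write
\[
\xi_1:=\mu_{D_1}+\nu_{D_2}=\sum_{a\in A}\delta_a,
\qquad
\xi_2:=\mu_{D_2}+\nu_{D_1}=\sum_{b\in B}\delta_b,
\]
with multiplicities counted. The first step checks that these are integer atomic measures on $[0,1]$ of equal total mass $N$. This follows directly from the definitions $\mu_D=\iota_\# D$ and $\nu_D=\iota_\#(\Pi_\# D)$. If $D=\sum_r a_r\delta_{z_r}$, then $\mu_D=\sum_r a_r\delta_{\iota(z_r)}$ and $\nu_D=\sum_r a_r\delta_{\iota(\Pi(z_r))}$. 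Each of these has mass $|D|$, so $\xi_1([0,1])=\xi_2([0,1])=|D_1|+|D_2|=N$. Listing the atoms of $\xi_1$ and $\xi_2$ with multiplicity gives exactly the multisets $A$ and $B$. Well-definedness of all the values $\iota(\cdot)$ is guaranteed by Lemma~\ref{app:lem:iota_section}.

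The second step applies the sorting identity \eqref{app:eq:emd_sorting} of Proposition~\ref{app:prop:1D_EMD} with $M=N$, $\mu=\xi_1$ and $\nu=\xi_2$. This yields $W_1(\xi_1,\xi_2)=\sum_{k=1}^N|a_{(k)}-b_{(k)}|$. By \eqref{app:eq:def_SKOT}, the left-hand side is $d_{\mathrm{SK}}(D_1,D_2)^2$, which gives the formula.

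The third step is the complexity count, under the stated assumption that the values of $\iota$ on the $N$ points of $D_1\cup D_2$ and on their $N$ diagonal projections are available. Assembling $A$ and $B$ then takes $O(N)$ time, sorting each list takes $O(N\log N)$ comparisons, and the final sum is one linear pass. The total is therefore $O(N\log N)$.

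I expect no genuine obstacle here. The only point needing care is bookkeeping: repeated points and coincident $\iota$-values must be retained as multiplicities, so that the multisets $A$ and $B$ really match the atomic decompositions of $\xi_1$ and $\xi_2$. This is precisely the multiset form that Proposition~\ref{app:prop:1D_EMD} is stated for, with repetitions allowed.
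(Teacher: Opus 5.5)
Your proposal is correct and follows the paper's proof. The paper also obtains the formula by applying Proposition~\ref{app:prop:1D_EMD} to $\mu_{D_1}+\nu_{D_2}=\sum_{a\in A}\delta_a$ and $\mu_{D_2}+\nu_{D_1}=\sum_{b\in B}\delta_b$, with the $O(N\log N)$ cost coming from sorting plus a linear pass; your explicit checks of equal mass $N$ and of multiplicities just spell out what the paper leaves implicit.
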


\begin{proof}
This is exactly Proposition~\ref{app:prop:1D_EMD} applied to the two measures
$\alpha_{12}=\mu_{D_1}+\nu_{D_2}=\sum_{a\in A}\delta_a$ and
$\beta_{12}=\mu_{D_2}+\nu_{D_1}=\sum_{b\in B}\delta_b$ appearing in Definition~\ref{app:def:SKOT}.
The complexity follows because sorting two length-$N$ lists costs $O(N\log N)$ and the final sum costs $O(N)$.
\end{proof}

\begin{remark}[Dependence on the chosen space-filling curve]
The distance $d_{\mathrm{SK}}$ depends on the fixed choice of the surjection $S$ and the selector $\iota$.
For any fixed choice, Theorem~\ref{app:cor:dSK_metric} guarantees that $d_{\mathrm{SK}}$ is a metric.
\end{remark}

\begin{remark}[Diagonal points]
If one allows diagram points \emph{on} the diagonal $\Delta$, then they contribute
$\delta_{\iota(z)}-\delta_{\iota(\Pi(z))}=0$ to $\sigma_D$, so $d_{\mathrm{SK}}$ becomes a pseudo-metric unless one
identifies diagrams up to diagonal points. This is consistent with the usual convention in persistence-diagram theory.
\end{remark}

\paragraph{Diagonal-aware quadratic cost.}
Let $\Delta:=\{(u,u):u\in[0,1]\}$ and $d_\Delta(z):=\inf_{u\in[0,1]}\|z-(u,u)\|_2$.
Define the diagonal-aware squared cost $c_\Delta:\Atri\times\Atri\to[0,\infty)$ by
\[
c_\Delta(x,y):=
\begin{cases}
\|x-y\|_2^2, & x\notin\Delta,\ y\notin\Delta,\\[2pt]
d_\Delta(x)^2, & x\notin\Delta,\ y\in\Delta,\\[2pt]
d_\Delta(y)^2, & x\in\Delta,\ y\notin\Delta,\\[2pt]
0, & x\in\Delta,\ y\in\Delta.
\end{cases}
\]
For two persistance diagrams $D_1,D_2$ (integer atomic measures on $\Atri\setminus\Delta$), set
\[
\bar D_1:=D_1+\Pi_{\#}D_2,
\qquad
\bar D_2:=D_2+\Pi_{\#}D_1,
\]
so that $|\bar D_1|=|\bar D_2|=|D_1|+|D_2|$, and define
\[
W_{2,\Delta}^2(D_1,D_2)
:=\inf_{\gamma\in\Pi(\bar D_1,\bar D_2)}\ \int_{\Atri\times\Atri} c_\Delta(x,y)\,d\gamma(x,y).
\]

\begin{remark}
[SK $1/2$-H\"older]\label{app:ass:SK_holder}
The (fixed) SK curve $S:[0,1]\to\Atri$ satisfies
\[
\|S(t)-S(s)\|_2^2 \le C_{\mathrm{SK}}\,|t-s|
\qquad\forall\,t,s\in[0,1].
\]
\end{remark}

\begin{theorem}[Quadratic Wasserstein with diagonal is controlled by $d_{\mathrm{SK}}^2$]\label{app:thm:W2_le_SKOT}
Under Assumption~\ref{app:ass:SK_holder}, for any two diagrams $D_1,D_2$ on $\Atri\setminus\Delta$,
\[
W_{2,\Delta}^2(D_1,D_2)\ \le\ C_{\mathrm{SK}}\ d_{\mathrm{SK}}(D_1,D_2)^2.
\]
In particular, if $C_{\mathrm{SK}}=2$, then $W_{2,\Delta}^2(D_1,D_2)\le 2\,d_{\mathrm{SK}}(D_1,D_2)^2$.
\end{theorem}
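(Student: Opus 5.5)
The plan is to exhibit one admissible coupling for $W_{2,\Delta}^2(D_1,D_2)$ whose $c_\Delta$-cost is at most $C_{\mathrm{SK}}\,d_{\mathrm{SK}}(D_1,D_2)^2$, and then take the infimum. Write $\bar D_1=\sum_{k=1}^N\delta_{x_k}$ and $\bar D_2=\sum_{k=1}^N\delta_{y_k}$ with $N=|D_1|+|D_2|$. By construction, $\iota_\#\bar D_1=\mu_{D_1}+\nu_{D_2}$ and $\iota_\#\bar D_2=\mu_{D_2}+\nu_{D_1}$. Let $\pi^\star=\sum_k\delta_{(a_{(k)},b_{(k)})}$ be the monotone coupling of these two integer atomic measures. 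Proposition~\ref{app:prop:1D_EMD} (equivalently Corollary~\ref{app:cor:SKOT_complexity}) then gives
\[
\int|t-s|\,d\pi^\star(t,s)=d_{\mathrm{SK}}(D_1,D_2)^2 .
\]

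Next I set $\Gamma_{12}:=(S,S)_\#\pi^\star$ and check admissibility. Its first marginal is $S_\#(\iota_\#\bar D_1)=(S\circ\iota)_\#\bar D_1=\bar D_1$, by Lemma~\ref{app:lem:iota_section} ($S\circ\iota=\mathrm{id}_{\Atri}$). The same argument gives second marginal $\bar D_2$. Hence $\Gamma_{12}\in\Pi(\bar D_1,\bar D_2)$. Measurability is not an issue here, since every measure involved is finite atomic. In any case, Lemma~\ref{app:lem:iota_borel} already ensures that the pushforwards are legitimate.

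The key pointwise step is to show $c_\Delta(x,y)\le\|x-y\|_2^2$ for all $x,y\in\Atri$. I split into the four cases of the definition. If both points are off-diagonal, this is an equality. If $y\in\Delta$ and $x\notin\Delta$, then $d_\Delta(x)^2=\inf_u\|x-(u,u)\|^2\le\|x-y\|^2$, because $y=(u_0,u_0)$ for some $u_0\in[0,1]$. The case $x\in\Delta$, $y\notin\Delta$ is symmetric. If both points lie on $\Delta$, the cost is $0$. Applying this with $x=S(t)$, $y=S(s)$ and combining with Remark~\ref{app:ass:SK_holder} gives
\[
c_\Delta(S(t),S(s))\le C_{\mathrm{SK}}|t-s| .
\]

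Finally I integrate against $\pi^\star$ using the change-of-variables formula for pushforwards:
\[
\int c_\Delta\,d\Gamma_{12}=\int c_\Delta(S(t),S(s))\,d\pi^\star\le C_{\mathrm{SK}}\,d_{\mathrm{SK}}(D_1,D_2)^2 ,
\]
and then take the infimum over couplings. The case $C_{\mathrm{SK}}=2$ is the constant derived in Section~\ref{sec:skot_encoding} from \cite{Shchepin2020}. There is no real obstacle. The only points requiring care are the marginal identity, which relies on $\iota$ being a section of $S$ rather than on any continuity of $\iota$, and the diagonal case of the cost comparison, which uses that $d_\Delta$ is an infimum over diagonal points that includes $y$ itself.
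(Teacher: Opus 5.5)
Your proposal is correct and follows the paper's proof essentially step for step: you push the optimal one-dimensional coupling forward through $(S,S)$, use $S\circ\iota=\mathrm{id}_{\Atri}$ for the marginals, bound $c_\Delta(S(t),S(s))\le\|S(t)-S(s)\|_2^2\le C_{\mathrm{SK}}|t-s|$ case by case, integrate, and take the infimum. The only cosmetic difference is that you fix the monotone coupling explicitly and invoke Proposition~\ref{app:prop:1D_EMD} to identify its cost with $d_{\mathrm{SK}}(D_1,D_2)^2$, whereas the paper simply takes an arbitrary optimal coupling $\pi^\star$.
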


\begin{proof}
Recall that $d_{\mathrm{SK}}(D_1,D_2)^2=W_1(\iota_{\#}\bar D_1,\iota_{\#}\bar D_2)$ with ground cost $|t-s|$ on $[0,1]$.
Let $\pi^\star\in\Pi(\iota_{\#}\bar D_1,\iota_{\#}\bar D_2)$ be an optimal coupling, so that
\[
d_{\mathrm{SK}}(D_1,D_2)^2=\int_{[0,1]^2}|t-s|\,d\pi^\star(t,s).
\]
Define the pushforward coupling on $\Atri\times\Atri$
\[
\gamma:=(S,S)_{\#}\pi^\star.
\]
Since $S\circ\iota=\mathrm{id}_{\Atri}$, we have $S_{\#}(\iota_{\#}\bar D_i)=\bar D_i$ for $i=1,2$,
hence $\gamma\in\Pi(\bar D_1,\bar D_2)$.

Now fix $(t,s)\in[0,1]^2$ and set $x:=S(t)$, $y:=S(s)$.
If $x,y\notin\Delta$, then $c_\Delta(x,y)=\|x-y\|_2^2$.
If (say) $y\in\Delta$, then $c_\Delta(x,y)=d_\Delta(x)^2\le \|x-y\|_2^2$ because $y$ is a diagonal point.
If $x,y\in\Delta$, then $c_\Delta(x,y)=0\le \|x-y\|_2^2$.
Thus in all cases,
\[
c_\Delta(S(t),S(s))\ \le\ \|S(t)-S(s)\|_2^2\ \le\ C_{\mathrm{SK}}\,|t-s|
\]
by Assumption~\ref{app:ass:SK_holder}.
Integrating against $\pi^\star$ gives
\[
\int c_\Delta(x,y)\,d\gamma(x,y)
=\int c_\Delta(S(t),S(s))\,d\pi^\star(t,s)
\le C_{\mathrm{SK}}\int |t-s|\,d\pi^\star(t,s)
= C_{\mathrm{SK}}\,d_{\mathrm{SK}}(D_1,D_2)^2.
\]
Taking the infimum over all $\gamma\in\Pi(\bar D_1,\bar D_2)$ yields
$W_{2,\Delta}^2(D_1,D_2)\le C_{\mathrm{SK}}\,d_{\mathrm{SK}}(D_1,D_2)^2$.
\end{proof}

\subsubsection{Kernelization: conditional negative definiteness of \texorpdfstring{$d_{\mathrm{SK}}^2$}{dSK squared}}

\begin{definition}[Conditionally negative definite function (CND)]
Let $X$ be a set. A symmetric function $f:X\times X\to\mathbb{R}$ with $f(x,x)=0$ for all $x\in X$
is called \emph{conditionally negative definite} if for every $n\ge 1$, every $x_1,\dots,x_n\in X$
and every $c_1,\dots,c_n\in\mathbb{R}$ such that $\sum_{i=1}^n c_i=0$, one has
\[
\sum_{i=1}^n\sum_{j=1}^n c_i c_j\, f(x_i,x_j)\le 0.
\]
\end{definition}

\begin{lemma}[The absolute value is CND on $\mathbb{R}$]\label{app:lem:abs_CND}
For all $x_1,\dots,x_n\in\mathbb{R}$ and all $c_1,\dots,c_n\in\mathbb{R}$ with $\sum_{i=1}^n c_i=0$,
\[
\sum_{i=1}^n\sum_{j=1}^n c_i c_j\,|x_i-x_j|\le 0.
\]
\end{lemma}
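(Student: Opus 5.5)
The plan is to reuse the crossing-cut (layer-cake) identity from the proof of Proposition~\ref{app:prop:1D_EMD}, namely
\[
|s-u|=\int_0^1 \mathbf{1}_{C_t}(s,u)\,dt,
\]
where $\mathbf{1}_{C_t}(s,u)=1$ exactly when $t$ separates $s$ and $u$. Without loss of generality we may assume all $x_i\in[0,1]$. The quadratic form is invariant under translation of the points, and it scales positively under dilation, so an affine map sending the points into $[0,1]$ does not change the sign of the sum. More simply, we can integrate over all of $\mathbb{R}$, using $|s-u|=\int_{\mathbb{R}}\big(\mathbf{1}_{\{t<s\}}-\mathbf{1}_{\{t<u\}}\big)^2\,dt$. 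This is the same identity already used for the Hilbert embedding $\Phi$ in \autoref{sec:skot_hilbert}.

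Setting $\phi_i(t):=\mathbf{1}_{\{t<x_i\}}$, the key step is to expand the square:
\[
|x_i-x_j|=\int_{\mathbb{R}}\big(\phi_i(t)^2+\phi_j(t)^2-2\phi_i(t)\phi_j(t)\big)\,dt .
\]
Each term must be handled so that every integral is finite. Individually, $\int\phi_i^2$ diverges, because $\phi_i=1$ on $(-\infty,x_i)$. To avoid this, I will first replace $\phi_i$ by $\psi_i:=\mathbf{1}_{\{t<x_i\}}-\mathbf{1}_{\{t<0\}}$, exactly as in $\Phi$. This function is supported between $0$ and $x_i$, so it lies in $L^2(\mathbb{R})$, and it satisfies $\psi_i-\psi_j=\phi_i-\phi_j$. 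Hence $|x_i-x_j|=\|\psi_i-\psi_j\|_{L^2}^2$.

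Next I expand $\|\psi_i-\psi_j\|^2=\|\psi_i\|^2+\|\psi_j\|^2-2\langle\psi_i,\psi_j\rangle$ and multiply by $c_ic_j$. The terms $\sum_{i,j}c_ic_j\|\psi_i\|^2=\big(\sum_j c_j\big)\big(\sum_i c_i\|\psi_i\|^2\big)$ vanish because $\sum_j c_j=0$, and the symmetric term vanishes likewise. This leaves
\[
\sum_{i,j}c_ic_j|x_i-x_j|=-2\Big\|\sum_i c_i\psi_i\Big\|_{L^2}^2\le 0.
\]

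The only delicate point, which I expect to be the main obstacle, is ensuring finiteness before expanding, since infinite integrals would make the cancellation meaningless. The recentering by $\mathbf{1}_{\{t<0\}}$ resolves this. Verifying the identity $\int(\mathbf{1}_{\{t<a\}}-\mathbf{1}_{\{t<b\}})^2dt=|a-b|$ is immediate, because the integrand is the indicator of the interval between $a$ and $b$.
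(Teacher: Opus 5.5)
Your proof is correct and is essentially the paper's argument. Both write $|x_i-x_j|=\int_{\mathbb{R}}\bigl(\mathbf{1}_{\{t<x_i\}}-\mathbf{1}_{\{t<x_j\}}\bigr)^2\,dt$ and use $\sum_i c_i=0$ to reduce the form to $-2\int_{\mathbb{R}}\bigl(\sum_i c_i\mathbf{1}_{\{t<x_i\}}\bigr)^2dt\le 0$; the only difference is that the paper expands pointwise in $t$, where all values lie in $\{0,1\}$ and no integrability issue arises, while you expand in $L^2$ after recentering by $\mathbf{1}_{\{t<0\}}$ (as in $\Phi$), which is equally valid.
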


\begin{proof}
For $u\in\mathbb{R}$ set $b_i(u):=\mathbf{1}_{(u,\infty)}(x_i)\in\{0,1\}$. For any $x,y\in\mathbb{R}$,
\[
|x-y|=\int_{\mathbb{R}}\big|\mathbf{1}_{(u,\infty)}(x)-\mathbf{1}_{(u,\infty)}(y)\big|\,du,
\]
because the integrand equals $1$ exactly when $u$ lies strictly between $x$ and $y$ (a set of Lebesgue measure $|x-y|$),
and equals $0$ otherwise. Hence, exchanging a finite sum with the integral,
\[
\sum_{i,j} c_i c_j\,|x_i-x_j|
=\int_{\mathbb{R}} \sum_{i,j} c_i c_j\,|b_i(u)-b_j(u)|\,du.
\]
Since $b_i(u)\in\{0,1\}$, we have $|b_i(u)-b_j(u)|=(b_i(u)-b_j(u))^2$. Expanding and using $\sum_i c_i=0$ yields,
for every $u\in\mathbb{R}$,
\[
\sum_{i,j} c_i c_j\,(b_i(u)-b_j(u))^2
=-2\Big(\sum_{i=1}^n c_i\, b_i(u)\Big)^2\le 0.
\]
Integrating over $u$ proves the claim.
\end{proof}

\begin{proposition}[$d_{\mathrm{SK}}^2$ is conditionally negative definite]\label{app:prop:SKOT_CND}
The function $(D,E)\mapsto d_{\mathrm{SK}}(D,E)^2$ is conditionally negative definite
on the set of finite diagrams supported in $\Atri\setminus\Delta$.
\end{proposition}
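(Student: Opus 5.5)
The plan is to reduce the claim to Lemma~\ref{app:lem:abs_CND} through the cumulative representation. By Lemma~\ref{app:lem:diff_identity} and Proposition~\ref{app:prop:1D_EMD} (exactly as in the proof of Theorem~\ref{app:thm:SKOT_metric}), for any two diagrams $D,E$ we have
\[
d_{\mathrm{SK}}(D,E)^2=\int_0^1 \big|H_D(t)-H_E(t)\big|\,dt,
\qquad H_D(t):=\sigma_D([0,t]).
\]
Here $H_D$ is a bounded, piecewise-constant function with finitely many jumps (integer valued, in fact), so the integral is well defined and finite. Symmetry and $d_{\mathrm{SK}}(D,D)^2=0$ are already known from Theorem~\ref{app:thm:SKOT_metric}, so only the quadratic-form inequality needs proof.

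Fix diagrams $D_1,\dots,D_n$ and reals $c_1,\dots,c_n$ with $\sum_i c_i=0$. Since the sum is finite and each integrand is bounded and measurable, I will exchange sum and integral:
\[
\sum_{i,j}c_ic_j\,d_{\mathrm{SK}}(D_i,D_j)^2=\int_0^1 \sum_{i,j}c_ic_j\,\big|H_{D_i}(t)-H_{D_j}(t)\big|\,dt.
\]
For each fixed $t$, set $x_i:=H_{D_i}(t)\in\mathbb{R}$. Lemma~\ref{app:lem:abs_CND} then gives that the inner double sum is $\le 0$. Integrating a nonpositive function over $[0,1]$ yields the claim.

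There is no serious obstacle. The only point requiring care is justifying the cumulative formula for every pair $(D_i,D_j)$, which needs the augmented measures $\mu_{D_i}+\nu_{D_j}$ and $\mu_{D_j}+\nu_{D_i}$ to be integer atomic with equal mass; this holds by construction. Alternatively, I could present the proof via the explicit embedding $\Phi$, using the fact that squared Hilbert distances are CND. However, the route above avoids relying on the later embedding result and is self-contained given the lemmas already proved.
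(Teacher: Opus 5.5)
Your proposal is correct and follows the paper's proof: you write $d_{\mathrm{SK}}(D,E)^2=\int_0^1|H_D(t)-H_E(t)|\,dt$ via Lemma~\ref{app:lem:diff_identity} and Proposition~\ref{app:prop:1D_EMD}, exchange the finite double sum with the integral, and apply Lemma~\ref{app:lem:abs_CND} pointwise in $t$. You also explicitly check symmetry and the vanishing diagonal via Theorem~\ref{app:thm:SKOT_metric}, which the paper's definition of conditional negative definiteness requires but its proof leaves implicit.
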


\begin{proof}
For a diagram $D$, define the cumulative signature
\[
H_D(t):=\sigma_D([0,t])\qquad (t\in[0,1]),
\]
where $\sigma_D=\mu_D-\nu_D$. By Lemma~\ref{app:lem:diff_identity} and Proposition~\ref{app:prop:1D_EMD}, for any $D,E$,
\[
d_{\mathrm{SK}}(D,E)^2
=W_1(\mu_D+\nu_E,\mu_E+\nu_D)
=\int_0^1 \big|(\sigma_D-\sigma_E)([0,t])\big|\,dt
=\int_0^1 |H_D(t)-H_E(t)|\,dt.
\]
Moreover $H_D$ is a bounded step function, and $|H_D(t)|\le |D|$ for all $t\in[0,1]$; in particular the integrand
$|H_D(t)-H_E(t)|$ is integrable and we may exchange finite sums and the integral.

Let $D_1,\dots,D_n$ be diagrams and let $c_1,\dots,c_n\in\mathbb{R}$ with $\sum_i c_i=0$.
Using Lemma~\ref{app:lem:abs_CND} pointwise (with $x_i=H_{D_i}(t)$) and integrating over $t\in[0,1]$ gives
\[
\sum_{i,j} c_i c_j\,d_{\mathrm{SK}}(D_i,D_j)^2
=\int_0^1 \sum_{i,j} c_i c_j\,|H_{D_i}(t)-H_{D_j}(t)|\,dt
\le 0.
\]
Thus $d_{\mathrm{SK}}^2$ is CND.
\end{proof}

\begin{corollary}[A positive definite Gaussian $d_{\mathrm{SK}}$ kernel]\label{app:cor:SKOT_kernel_PD}
For every $\sigma>0$, the function
\[
k_\sigma(D,E):=\exp\!\Big(-\frac{d_{\mathrm{SK}}(D,E)^2}{2\sigma^2}\Big)
\]
is a positive definite kernel on the set of finite diagrams supported in $\Atri\setminus\Delta$.
\end{corollary}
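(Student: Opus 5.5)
The plan is to combine Proposition~\ref{app:prop:SKOT_CND} with Schoenberg's theorem. Schoenberg's theorem states that a symmetric function \(f\) on a set, vanishing on the diagonal, is conditionally negative definite if and only if \(\exp(-\lambda f)\) is positive definite for every \(\lambda>0\). Taking \(f=d_{\mathrm{SK}}^2\) and \(\lambda=1/(2\sigma^2)\) then gives the result directly.

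First I check the hypotheses. Symmetry of \(d_{\mathrm{SK}}^2\) and \(d_{\mathrm{SK}}(D,D)^2=0\) come from Theorem~\ref{app:thm:SKOT_metric}. Conditional negative definiteness is Proposition~\ref{app:prop:SKOT_CND}.

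To keep the argument self-contained, I would also sketch the direction of Schoenberg's theorem that is needed. Fix diagrams \(D_1,\dots,D_n\) and set \(F_{ij}:=d_{\mathrm{SK}}(D_i,D_j)^2\). The matrix
\[
K_{ij}:=F_{i1}+F_{1j}-F_{ij}
\]
is positive semidefinite. To see this, take any \(c\in\mathbb R^n\) and apply the CND inequality to the \(n+1\) points \(D_1,D_1,\dots,D_n\) with coefficients \(-\sum_i c_i,\ c_1,\dots,c_n\), which sum to zero; this yields \(c^\top K c\ge 0\). Scaling by \(\lambda>0\) preserves this, so \(\lambda K\) is positive semidefinite. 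By the Schur product theorem, each entrywise power of \(\lambda K\) is positive semidefinite, and hence the entrywise exponential \(\exp(\lambda K)\) is positive semidefinite, being a limit of nonnegative combinations of such powers.

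Then, entrywise,
\[
\exp(-\lambda F_{ij})=e^{-\lambda F_{i1}}\,\exp(\lambda K)_{ij}\,e^{-\lambda F_{1j}}.
\]
This is a congruence \(\mathrm{diag}(v)\,\exp(\lambda K)\,\mathrm{diag}(v)\) with \(v_i=e^{-\lambda F_{i1}}\), using symmetry \(F_{1j}=F_{j1}\). A congruence preserves positive semidefiniteness, which gives the result.

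An alternative route avoids Schoenberg's theorem altogether. The Hilbert embedding \(\Phi\) gives \(k_\sigma(D,E)=\exp(-\|\Phi(D)-\Phi(E)\|^2/(2\sigma^2))\), and the Gaussian kernel on any Hilbert space is positive definite. However, the embedding is only formally established later, so I would rely on the CND route above.

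The main obstacle is only bookkeeping in the Schoenberg step: choosing the base point and the augmented coefficient vector correctly so that the zero-sum condition holds. Everything else follows from standard facts about positive semidefinite matrices.
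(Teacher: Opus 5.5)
Your proposal is correct and follows the paper's proof: you combine the conditional negative definiteness of $d_{\mathrm{SK}}^2$ (Proposition~\ref{app:prop:SKOT_CND}) with Schoenberg's theorem at $\lambda=1/(2\sigma^2)$. Your self-contained derivation of the needed direction of Schoenberg's theorem (the base-point matrix $K$, the Schur product theorem for the entrywise exponential, and the diagonal congruence) is also sound, whereas the paper simply cites the theorem.
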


\begin{proof}
By Proposition~\ref{app:prop:SKOT_CND}, $d_{\mathrm{SK}}^2$ is conditionally negative definite. By Schoenberg's theorem,
for every $\lambda>0$ the kernel $(D,E)\mapsto \exp(-\lambda\,d_{\mathrm{SK}}(D,E)^2)$ is positive definite.
Taking $\lambda=\frac{1}{2\sigma^2}$ yields the claim.
\end{proof}

\begin{remark}[Computation]
Evaluating $k_\sigma(D_1,D_2)$ requires one computation of $d_{\mathrm{SK}}(D_1,D_2)^2$ followed by one exponential.
Hence, given the values of $\iota$ on the relevant points, the cost is
$O(N\log N)$ with $N=|D_1|+|D_2|$ (see Corollary~\ref{app:cor:SKOT_complexity}).
\end{remark}

\begin{proposition}[Isometric $L^1$ representation of $d_{\mathrm{SK}}^2$]\label{app:prop:SKOT_L1}
For a diagram $D$, recall the signed measure $\sigma_D=\mu_D-\nu_D$ on $[0,1]$ and define its cumulative
\[
H_D(t):=\sigma_D([0,t])\qquad (t\in[0,1]).
\]
Then for any persistence diagrams $D,E$ supported in $\Atri\setminus\Delta$,
\begin{equation}\label{app:eq:SKOT_L1}
d_{\mathrm{SK}}(D,E)^2=\int_0^1 |H_D(t)-H_E(t)|\,dt=\|H_D-H_E\|_{L^1([0,1])}.
\end{equation}
In particular, the map $D\mapsto H_D$ is an injective embedding of diagrams into $L^1([0,1])$,
and $d_{\mathrm{SK}}^2$ is the pullback of the $L^1$ distance through this embedding.
\end{proposition}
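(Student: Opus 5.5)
The identity \eqref{app:eq:SKOT_L1} follows almost directly from earlier results. I will start from Definition~\ref{app:def:SKOT}: $d_{\mathrm{SK}}(D,E)^2=W_1(\mu_D+\nu_E,\mu_E+\nu_D)$. Both arguments are integer atomic measures on $[0,1]$ with equal total mass $|D|+|E|$, so Proposition~\ref{app:prop:1D_EMD} applies. It gives $W_1=\int_0^1|H(t)|\,dt$ with $H(t)=\big((\mu_D+\nu_E)-(\mu_E+\nu_D)\big)([0,t])$.

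By Lemma~\ref{app:lem:diff_identity} this signed measure equals $\sigma_D-\sigma_E$. Evaluating on $[0,t]$ and using linearity then gives $H(t)=H_D(t)-H_E(t)$. Since $H_D$ and $H_E$ are bounded step functions (with $|H_D|\le|D|$), they lie in $L^1([0,1])$, and the integral is exactly $\|H_D-H_E\|_{L^1}$. This proves \eqref{app:eq:SKOT_L1} and the pullback statement.

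Injectivity of $D\mapsto H_D$ is the only step needing care. Suppose $H_D=H_E$ as elements of $L^1$, that is, almost everywhere. Both functions are right-continuous, being distribution functions $t\mapsto\sigma([0,t])$ of finite signed atomic measures. Two right-continuous functions that agree a.e. agree everywhere: approximate any $t$ from the right by points of the full-measure agreement set.

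The cumulative function on closed initial segments $[0,t]$ determines a finite signed Borel measure on $[0,1]$, since these intervals generate the Borel $\sigma$-algebra and form a $\pi$-system. Applying this to the Jordan decompositions of $\sigma_D$ and $\sigma_E$, or more simply to the atoms as jumps of $H$, gives $\sigma_D=\sigma_E$.

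I then reuse the argument from Theorem~\ref{app:thm:SKOT_metric}. By Lemma~\ref{app:lem:iota_section}, $\mu_D$ is supported in $\iota(\Atri\setminus\Delta)$ and $\nu_D$ in the disjoint set $\iota(\Delta)$. Restricting $\sigma_D=\sigma_E$ to $\iota(\Atri\setminus\Delta)$ therefore gives $\mu_D=\mu_E$. Injectivity of $\iota$ then yields $D=E$ as integer atomic measures. Alternatively, injectivity follows at once from \eqref{app:eq:SKOT_L1} combined with the identity of indiscernibles in Theorem~\ref{app:thm:SKOT_metric}: $\|H_D-H_E\|_{L^1}=0$ forces $d_{\mathrm{SK}}(D,E)=0$ and hence $D=E$.

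The main obstacle is only the a.e.-versus-everywhere subtlety in the injectivity claim, and the metric-theorem shortcut removes it.
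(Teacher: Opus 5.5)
Your proof is correct and follows the paper's route. The identity comes from Definition~\ref{app:def:SKOT}, Proposition~\ref{app:prop:1D_EMD} and Lemma~\ref{app:lem:diff_identity}, and injectivity from the disjointness of $\iota(\Delta)$ and $\iota(\Atri\setminus\Delta)$ together with injectivity of $\iota$. The one addition is that you justify the passage from $H_D=H_E$ in $L^1$ to $\sigma_D=\sigma_E$ (right-continuity, and the fact that the cumulative function determines the measure), which the paper treats as immediate; your shortcut through the identity of indiscernibles in Theorem~\ref{app:thm:SKOT_metric} is also valid and not circular.
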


\begin{proof}
By Definition~\ref{app:def:SKOT} we have
\[
d_{\mathrm{SK}}(D,E)^2=W_1(\mu_D+\nu_E,\ \mu_E+\nu_D).
\]
By Proposition~\ref{app:prop:1D_EMD},
\[
W_1(\mu_D+\nu_E,\mu_E+\nu_D)=\int_0^1 \big|(\mu_D+\nu_E-\mu_E-\nu_D)([0,t])\big|\,dt.
\]
Using Lemma~\ref{app:lem:diff_identity}, $(\mu_D+\nu_E)-(\mu_E+\nu_D)=\sigma_D-\sigma_E$.
Hence the integrand equals $|H_D(t)-H_E(t)|$, proving \eqref{app:eq:SKOT_L1}.

Injectivity of $D\mapsto H_D$ follows from injectivity of $D\mapsto \sigma_D$,
which is immediate since $\mu_D$ and $\nu_D$ live on the disjoint sets
$\iota(\Atri\setminus\Delta)$ and $\iota(\Delta)$ (Lemma~\ref{app:lem:iota_section}).
\end{proof}

\begin{corollary}[Algebraic properties and trivial bounds]\label{app:cor:SKOT_algebra}
Let $D,E,F$ be diagrams on $\Atri\setminus\Delta$ and let $m\in\mathbb{N}$.
Then:
\begin{enumerate}
\item[(i)] (\emph{Common-addition invariance}) $d_{\mathrm{SK}}(D+F,E+F)^2=d_{\mathrm{SK}}(D,E)^2$.
\item[(ii)] (\emph{Integer homogeneity}) $d_{\mathrm{SK}}(mD,mE)^2=m\,d_{\mathrm{SK}}(D,E)^2$.
\item[(iii)] (\emph{Uniform upper bound}) $d_{\mathrm{SK}}(D,E)^2\le |D|+|E|$.
\end{enumerate}
\end{corollary}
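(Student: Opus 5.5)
All three properties will be derived from the cumulative $L^1$ representation of Proposition~\ref{app:prop:SKOT_L1},
\[
d_{\mathrm{SK}}(D,E)^2=\int_0^1 |H_D(t)-H_E(t)|\,dt,\qquad H_D(t)=\sigma_D([0,t]).
\]
The key preliminary observation is that the signature map $D\mapsto\sigma_D$ is additive on integer atomic measures. Pushforwards are linear, so $\iota_{\#}(D+F)=\iota_{\#}D+\iota_{\#}F$ and $\iota_{\#}\Pi_{\#}(D+F)=\iota_{\#}\Pi_{\#}D+\iota_{\#}\Pi_{\#}F$. This gives $\mu_{D+F}=\mu_D+\mu_F$ and $\nu_{D+F}=\nu_D+\nu_F$, hence $\sigma_{D+F}=\sigma_D+\sigma_F$. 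Evaluating on $[0,t]$ yields $H_{D+F}=H_D+H_F$. Iterating, $H_{mD}=m\,H_D$ for every $m\in\mathbb{N}$.

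For (i), I substitute this additivity into the $L^1$ formula. Since $H_{D+F}-H_{E+F}=H_D-H_E$ pointwise, the two integrals coincide. For (ii), the same substitution gives $H_{mD}-H_{mE}=m(H_D-H_E)$, and pulling the factor $m\ge 0$ out of the absolute value and the integral yields $m\,d_{\mathrm{SK}}(D,E)^2$. The case $m=0$ is consistent, since both sides vanish with $0\cdot D=D_{\varnothing}$. Alternatively, (i) follows directly from Lemma~\ref{app:lem:diff_identity}: the augmented-measure difference for the pair $(D+F,E+F)$ is $\sigma_{D+F}-\sigma_{E+F}=\sigma_D-\sigma_E$, and Proposition~\ref{app:prop:1D_EMD} shows that $W_1$ depends only on this difference.

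For (iii), I bound the integrand pointwise. For each $t$,
\[
|H_D(t)-H_E(t)|=\bigl|(\mu_D+\nu_E)([0,t])-(\mu_E+\nu_D)([0,t])\bigr|.
\]
Both terms inside the absolute value are nonnegative and at most the total mass $|D|+|E|$. The difference of two numbers in $[0,|D|+|E|]$ has absolute value at most $|D|+|E|$. Integrating over $[0,1]$, an interval of length one, gives the bound.

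A sharper route to (iii) is also available: use the coupling characterization with any coupling, since $|s-t|\le 1$ on $[0,1]$ and the total mass is $|D|+|E|$. I expect no genuine obstacle here. The only point needing care is the linearity of $D\mapsto\sigma_D$, which is the step to state explicitly, together with checking that $D+F$ and $mD$ are again diagrams supported in $\Atri\setminus\Delta$ so that the earlier results apply.
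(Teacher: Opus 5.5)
Your proof is correct and follows essentially the same route as the paper's: additivity of $D\mapsto\sigma_D$ (hence of $D\mapsto H_D$), plugged into the $L^1$ representation, gives (i) and (ii), and a pointwise bound on $|H_D(t)-H_E(t)|$ integrated over $[0,1]$ gives (iii); the only difference is that the paper bounds $|H_D(t)|\le|D|$ and $|H_E(t)|\le|E|$ separately, whereas you bound the two augmented cumulative masses by $|D|+|E|$. One small correction: the coupling argument you call ``sharper'' yields exactly the same constant $|D|+|E|$, so it is an alternative route rather than an improvement.
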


\begin{proof}
All three items follow from the $L^1$ representation \eqref{app:eq:SKOT_L1}.

(i) Linearity of pushforwards gives $\sigma_{D+F}=\sigma_D+\sigma_F$, hence $H_{D+F}=H_D+H_F$ and
$\|H_{D+F}-H_{E+F}\|_{L^1}=\|H_D-H_E\|_{L^1}$.

(ii) Similarly, $\sigma_{mD}=m\sigma_D$ so $H_{mD}=mH_D$ and the $L^1$ distance scales by $m$.

(iii) Since $|H_D(t)|\le |D|$ and $|H_E(t)|\le |E|$ for all $t\in[0,1]$, we have
$|H_D(t)-H_E(t)|\le |D|+|E|$, hence $d_{\mathrm{SK}}(D,E)^2\le |D|+|E|$.
\end{proof}

\begin{proposition}[Dual (Kantorovich--Rubinstein) formulation]\label{app:prop:SKOT_dual}
For any diagrams $D,E$ on $\Atri\setminus\Delta$,
\[
d_{\mathrm{SK}}(D,E)^2
=\sup\Big\{\int_0^1 f(t)\,d(\sigma_D-\sigma_E)(t)\ :\ f:[0,1]\to\mathbb{R}\ \text{$1$-Lipschitz}\Big\}.
\]
Equivalently, $d_{\mathrm{SK}}(D,E)^2$ is the Kantorovich--Rubinstein norm
of the signed measure $\sigma_D-\sigma_E$ (which has total mass $0$).
\end{proposition}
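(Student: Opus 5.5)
Set $\rho:=\sigma_D-\sigma_E$ and $G(t):=\rho([0,t])=H_D(t)-H_E(t)$. By Lemma~\ref{app:lem:diff_identity} and Proposition~\ref{app:prop:SKOT_L1}, the left-hand side equals $\int_0^1|G(t)|\,dt$. So the plan is to show that the supremum on the right also equals $\int_0^1|G|$. I will prove the two inequalities separately, working only with the atomic signed measure $\rho$. Since $\rho$ has finitely many atoms and total mass zero, $G$ is a step function with finitely many jumps and $G(1)=0$.

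\emph{Upper bound.} Let $f$ be $1$-Lipschitz on $[0,1]$. Then $f$ is absolutely continuous with $|f'|\le 1$ almost everywhere, and $f(s)=f(1)-\int_0^1 \mathbf{1}_{\{t\ge s\}}f'(t)\,dt$. Because $\rho([0,1])=0$, integrating this against $\rho$ and applying Fubini (finite sums against a bounded integrand) gives
\[
\int f\,d\rho=-\int_0^1 f'(t)\,\rho([0,t])\,dt=-\int_0^1 f'(t)G(t)\,dt\le\int_0^1|G(t)|\,dt .
\]
One detail needs care: the convention for whether $t$ is included in the cut. Since $f(s)-f(1)=-\int_s^1 f'$, the relevant set is $\{s\le t\}$, which matches $G(t)=\rho([0,t])$ exactly. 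The boundary points are in any case Lebesgue-null.

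\emph{Lower bound.} Take $f(s):=-\int_0^s\operatorname{sign}(G(t))\,dt$. This $f$ is $1$-Lipschitz and satisfies $f'=-\operatorname{sign}(G)$ almost everywhere, so the identity above gives $\int f\,d\rho=\int_0^1|G|$. The supremum is therefore attained. An alternative route is to cite Kantorovich--Rubinstein duality directly: $W_1(\alpha,\beta)=\sup_f\int f\,d(\alpha-\beta)$ with $\alpha=\mu_D+\nu_E$ and $\beta=\mu_E+\nu_D$, combined with $\alpha-\beta=\rho$. The direct computation above avoids invoking that theorem.

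The main obstacle I expect is the integration-by-parts step for a merely Lipschitz $f$ against an atomic measure. I will handle it through the representation $f(s)=f(1)-\int_s^1 f'$ together with Fubini, which is legitimate because $\rho$ is a finite combination of Dirac masses. Finally, the "equivalently" clause follows by definition: the Kantorovich--Rubinstein norm of a zero-mass signed measure is exactly this supremum.
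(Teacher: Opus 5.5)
Your proof is correct, but it takes a different route from the paper. The paper's proof is a one-line appeal to the general Kantorovich--Rubinstein duality theorem on the compact metric space $([0,1],|\cdot|)$. It applies that theorem to $\alpha=\mu_D+\nu_E$ and $\beta=\mu_E+\nu_D$, together with $\alpha-\beta=\sigma_D-\sigma_E$ from Lemma~\ref{app:lem:diff_identity}. This is exactly the ``alternative route'' you mention at the end.

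You instead start from the primal cumulative formula $d_{\mathrm{SK}}(D,E)^2=\int_0^1|G|$ of Proposition~\ref{app:prop:SKOT_L1}. That proposition precedes this one in the paper, so there is no circularity. From there you prove one-dimensional duality by hand:
\begin{itemize}
\item the representation $f(s)=f(1)-\int_s^1 f'$, Fubini against the finitely many atoms of $\rho$, and $\rho([0,1])=0$ give $\int f\,d\rho=-\int_0^1 f'G$, which yields the upper bound;
\item the explicit potential $f(s)=-\int_0^s\operatorname{sign}(G)$ gives equality.
\end{itemize}

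Your route is elementary and self-contained, with no external duality theorem. It also shows that the supremum is attained, and it exhibits a maximizer directly in terms of the sign pattern of the cumulative signature $H_D-H_E$. The paper's route is shorter and does not depend on the cumulative $L^1$ formula. However, it imports a much heavier general theorem, together with its (trivial, by rescaling) extension from probability measures to equal-mass finite measures.
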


\begin{proof}
This is the Kantorovich--Rubinstein duality for $W_1$ on the compact metric space $([0,1],|\cdot|)$,
applied to the pair of measures $(\mu_D+\nu_E,\mu_E+\nu_D)$; their difference is $\sigma_D-\sigma_E$
(Lemma~\ref{app:lem:diff_identity}).
\end{proof}

\begin{proposition}[An explicit Hilbert embedding]\label{app:prop:SKOT_Hilbert_embed}
Define, for any diagram $D$, the function $\Phi(D):[0,1]\times\mathbb{R}\to\mathbb{R}$ by
\[
\Phi(D)(t,u):=\mathbf{1}_{\{u<H_D(t)\}}-\mathbf{1}_{\{u<0\}}.
\]
Then $\Phi(D)\in L^2([0,1]\times\mathbb{R})$, and for any diagrams $D,E$,
\begin{equation}\label{app:eq:SKOT_Hilbert}
d_{\mathrm{SK}}(D,E)^2=\|\Phi(D)-\Phi(E)\|_{L^2([0,1]\times\mathbb{R})}^2.
\end{equation}
In particular, $d_{\mathrm{SK}}$ is a Hilbertian metric.
\end{proposition}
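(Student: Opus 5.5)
The plan is to reduce everything to the cumulative $L^1$ representation of Proposition~\ref{app:prop:SKOT_L1}, combined with the elementary layer-cake identity for indicator differences.

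\textbf{Step 1: membership in $L^2$.} For fixed $t$, the function $u\mapsto\Phi(D)(t,u)$ takes values in $\{-1,0,1\}$. It is nonzero exactly on the half-open interval between $0$ and $H_D(t)$:
\begin{itemize}
\item if $H_D(t)\ge 0$, it equals $1$ on $[0,H_D(t))$;
\item if $H_D(t)<0$, it equals $-1$ on $[H_D(t),0)$.
\end{itemize}
Hence $\int_{\mathbb{R}}\Phi(D)(t,u)^2\,du=|H_D(t)|$. Joint measurability holds because $H_D$ is a right-continuous step function with finitely many jumps, so $\{(t,u):u<H_D(t)\}$ is a finite union of rectangles. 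By Tonelli,
\[
\|\Phi(D)\|_{L^2}^2=\int_0^1|H_D(t)|\,dt\le |D|<\infty,
\]
where the last bound uses $|H_D|\le|D|$, already noted in the proof of Proposition~\ref{app:prop:SKOT_CND}.

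\textbf{Step 2: the isometry.} The $\mathbf{1}_{\{u<0\}}$ terms cancel, so
\[
\Phi(D)-\Phi(E)=\mathbf{1}_{\{u<H_D(t)\}}-\mathbf{1}_{\{u<H_E(t)\}}.
\]
For reals $a,b$, the function $u\mapsto\mathbf{1}_{\{u<a\}}-\mathbf{1}_{\{u<b\}}$ is $\pm1$ exactly on the interval between $\min(a,b)$ and $\max(a,b)$, and $0$ elsewhere. Its square therefore integrates to $|a-b|$. Applying this with $a=H_D(t)$, $b=H_E(t)$ and integrating over $t$ via Tonelli gives
\[
\|\Phi(D)-\Phi(E)\|_{L^2}^2=\int_0^1|H_D(t)-H_E(t)|\,dt.
\]
By \eqref{app:eq:SKOT_L1}, the right-hand side equals $d_{\mathrm{SK}}(D,E)^2$. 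This proves \eqref{app:eq:SKOT_Hilbert}.

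\textbf{Step 3: Hilbertian conclusion.} Taking square roots gives $d_{\mathrm{SK}}(D,E)=\|\Phi(D)-\Phi(E)\|$. So $d_{\mathrm{SK}}$ is the pullback of a Hilbert norm distance, i.e.\ a Hilbertian metric. The metric axioms, in particular definiteness, are already given by Theorem~\ref{app:cor:dSK_metric}; alternatively, definiteness follows from the injectivity in Proposition~\ref{app:prop:SKOT_L1}.

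No step is a real obstacle. The only point needing care is the measurability and Tonelli justification on $[0,1]\times\mathbb{R}$, which the step-function structure of $H_D$ settles directly.
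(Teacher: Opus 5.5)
Your proposal is correct and follows essentially the same route as the paper: you first show $\int_{\mathbb{R}}\Phi(D)(t,u)^2\,du=|H_D(t)|$ to get $L^2$ membership, then apply the identity $\int_{\mathbb{R}}(\mathbf{1}_{\{u<a\}}-\mathbf{1}_{\{u<b\}})^2\,du=|a-b|$ pointwise in $t$, and conclude with the cumulative $L^1$ representation of Proposition~\ref{app:prop:SKOT_L1}. Your justification of joint measurability and of Tonelli via the step-function structure of $H_D$ is a detail the paper leaves implicit.
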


\begin{proof}
For fixed $t$, the function $u\mapsto \Phi(D)(t,u)$ is supported on the interval between $0$ and $H_D(t)$,
hence $\int_{\mathbb{R}}|\Phi(D)(t,u)|^2\,du=|H_D(t)|$ and $\Phi(D)\in L^2$ because $H_D$ is bounded.

Moreover, for any reals $a,b$,
\[
\int_{\mathbb{R}}\Big(\mathbf{1}_{\{u<a\}}-\mathbf{1}_{\{u<b\}}\Big)^2\,du = |a-b|,
\]
since the integrand equals $1$ exactly when $u$ lies strictly between $a$ and $b$.
Applying this with $a=H_D(t)$ and $b=H_E(t)$ and integrating over $t\in[0,1]$ yields
\[
\|\Phi(D)-\Phi(E)\|_{L^2}^2
=\int_0^1 |H_D(t)-H_E(t)|\,dt
=d_{\mathrm{SK}}(D,E)^2
\]
by Proposition~\ref{app:prop:SKOT_L1}.
\end{proof}

\begin{remark}[Kernel distance is a monotone transform of $d_{\mathrm{SK}}^2$]\label{app:rem:SKOT_kernel_distance}
For $\sigma>0$, the kernel $k_\sigma(D,E)=\exp\!\big(-d_{\mathrm{SK}}(D,E)^2/(2\sigma^2)\big)$ satisfies
$k_\sigma(D,D)=1$ for all $D$, and $k_\sigma(D,E)=1$ if and only if $D=E$
(since $d_{\mathrm{SK}}^2$ is a metric).
The induced RKHS distance is therefore
\[
d_{k_\sigma}(D,E)^2
:=\|k_\sigma(D,\cdot)-k_\sigma(E,\cdot)\|_{\mathcal{H}}^2
=k_\sigma(D,D)+k_\sigma(E,E)-2k_\sigma(D,E)
=2-2\exp\!\Big(-\frac{d_{\mathrm{SK}}(D,E)^2}{2\sigma^2}\Big).
\]
Equivalently, with $g(r):=\sqrt{2-2e^{-r/(2\sigma^2)}}$, one has
$d_{k_\sigma}(D,E)=g(d_{\mathrm{SK}}(D,E)^2)$.
The map $g$ is continuous and strictly increasing on $[0,\infty)$, with inverse
$g^{-1}(s)=-2\sigma^2\log(1-s^2/2)$ for $s\in[0,\sqrt{2})$; hence $d_{k_\sigma}$ and $d_{\mathrm{SK}}^2$
induce the same topology, which is also the topology induced by $d_{\mathrm{SK}}$.
\end{remark}

\begin{proposition}[Point-separating property of the Gaussian $d_{\mathrm{SK}}$ kernel]
\label{app:prop:SKOT_kernel_point_separating}
Fix $\sigma>0$ and define $k_\sigma(D,E):=\exp\!\big(-d_{\mathrm{SK}}(D,E)^2/(2\sigma^2)\big)$.
Then $k_\sigma(D,E)=1$ if and only if $D=E$.
In particular, the canonical feature map $D\mapsto k_\sigma(D,\cdot)$ is injective.
\end{proposition}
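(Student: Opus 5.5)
The plan is to reduce both claims to the metric property of $d_{\mathrm{SK}}^2$ (Theorem~\ref{app:thm:SKOT_metric}) and to the RKHS distance identity recorded in Remark~\ref{app:rem:SKOT_kernel_distance}.

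\emph{Step 1: $k_\sigma(D,E)=1$ if and only if $D=E$.} For the direction $D=E\Rightarrow k_\sigma(D,E)=1$, note that $d_{\mathrm{SK}}(D,D)^2=W_1(\mu_D+\nu_D,\mu_D+\nu_D)=0$, so $k_\sigma(D,D)=e^0=1$. For the converse, suppose $k_\sigma(D,E)=1$. Since $\sigma>0$ and $r\mapsto e^{-r/(2\sigma^2)}$ is strictly decreasing on $[0,\infty)$ with value $1$ only at $r=0$, we get $d_{\mathrm{SK}}(D,E)^2=0$. The identity-of-indiscernibles part of Theorem~\ref{app:thm:SKOT_metric} then gives $D=E$. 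That part of the theorem rests on two facts: the supports $\iota(\Delta)$ and $\iota(\Atri\setminus\Delta)$ are disjoint, and $\iota$ is injective (Lemma~\ref{app:lem:iota_section}).

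\emph{Step 2: injectivity of the feature map.} Corollary~\ref{app:cor:SKOT_kernel_PD} makes $k_\sigma$ positive definite, so it has an RKHS $\mathcal H$ with feature map $D\mapsto k_\sigma(D,\cdot)$. By the reproducing property,
\[
\|k_\sigma(D,\cdot)-k_\sigma(E,\cdot)\|_{\mathcal H}^2=k_\sigma(D,D)+k_\sigma(E,E)-2k_\sigma(D,E)=2-2k_\sigma(D,E).
\]
Now suppose $k_\sigma(D,\cdot)=k_\sigma(E,\cdot)$. Then the left-hand side vanishes, so $k_\sigma(D,E)=1$, and Step~1 gives $D=E$. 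A more elementary route avoids the RKHS entirely: evaluate both functions at $D$ to get $k_\sigma(E,D)=k_\sigma(D,D)=1$, and conclude by Step~1 and symmetry.

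The argument is short, and no step is a real obstacle. The only point needing care is to make sure the converse in Step~1 uses the \emph{exact} selector $\iota$ and not the finite-level $\iota_L$. The injectivity that Theorem~\ref{app:thm:SKOT_metric} relies on holds only for $\iota$; with $\iota_L$ two distinct diagrams could be encoded identically, so $d_{\mathrm{SK}}$ would no longer separate points.
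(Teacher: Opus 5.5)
Your proof is correct and follows the paper's argument: $k_\sigma(D,E)=1$ is reduced to $d_{\mathrm{SK}}(D,E)^2=0$ by strict monotonicity of $r\mapsto e^{-r/(2\sigma^2)}$, then the identity of indiscernibles from Theorem~\ref{app:thm:SKOT_metric} gives $D=E$. Injectivity of the feature map comes from evaluating $k_\sigma(D,\cdot)=k_\sigma(E,\cdot)$ at $D$, which is exactly your ``elementary route''; your RKHS-norm variant via Corollary~\ref{app:cor:SKOT_kernel_PD} is an equally valid alternative.
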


\begin{proof}
Since $d_{\mathrm{SK}}^2$ is a metric (Theorem~\ref{app:thm:SKOT_metric}), we have
$d_{\mathrm{SK}}(D,E)^2=0$ if and only if $D=E$.
As $r\mapsto \exp(-r/(2\sigma^2))$ is strictly decreasing on $[0,\infty)$,
$k_\sigma(D,E)=1$ iff $d_{\mathrm{SK}}(D,E)^2=0$ iff $D=E$.

If $k_\sigma(D,\cdot)=k_\sigma(E,\cdot)$, evaluating at $D$ yields
$1=k_\sigma(D,D)=k_\sigma(E,D)$, hence $E=D$ by the first part.
\end{proof}

\begin{theorem}[Hilbert--Gaussian form, ISPD, and characteristicness]
\label{app:thm:SKOT_kernel_characteristic}
Let $\sigma>0$ and let $k_\sigma(D,E)=\exp\!\big(-d_{\mathrm{SK}}(D,E)^2/(2\sigma^2)\big)$ on
$\mathcal{D}$ (finite diagrams in $\Atri\setminus\Delta$).
Then there exist a real Hilbert space $\mathcal{H}_0$ and a continuous injective map
$\Psi:\mathcal{D}\to\mathcal{H}_0$ such that
\[
k_\sigma(D,E)=\exp\!\Big(-\frac{\|\Psi(D)-\Psi(E)\|_{\mathcal{H}_0}^2}{2\sigma^2}\Big)
\qquad\forall\,D,E\in\mathcal{D}.
\]
Moreover, $k_\sigma$ is integrally strictly positive definite (ISPD) on $\mathcal{D}$ and therefore
characteristic.
\end{theorem}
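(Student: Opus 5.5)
The plan is to take $\mathcal{H}_0:=L^2([0,1]\times\mathbb{R})$ and $\Psi:=\Phi$, the explicit embedding of Proposition~\ref{app:prop:SKOT_Hilbert_embed}. The Hilbert--Gaussian form then follows at once from \eqref{app:eq:SKOT_Hilbert}, since $d_{\mathrm{SK}}(D,E)^2=\|\Phi(D)-\Phi(E)\|^2_{L^2}$. Injectivity of $\Phi$ comes from the identity of indiscernibles in Theorem~\ref{app:cor:dSK_metric}: $\Phi(D)=\Phi(E)$ forces $d_{\mathrm{SK}}(D,E)=0$, hence $D=E$. Continuity needs a topology on $\mathcal{D}$. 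The natural choice is the metric topology of $d_{\mathrm{SK}}$ itself, and then $\Phi$ is an isometry, so it is trivially continuous. I would state explicitly that this is the topology used, and that by Remark~\ref{app:rem:SKOT_kernel_distance} it coincides with the topology of $d_{\mathrm{SK}}^2$ and of the kernel distance.

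For the ISPD property I would reduce to the known result for Gaussian kernels on Hilbert spaces \cite{Guella2020GaussianHilbert}. That result says the Gaussian kernel $K(h,h')=\exp(-\|h-h'\|^2/(2\sigma^2))$ on a real separable Hilbert space is integrally strictly positive definite with respect to finite signed Borel measures. Note that $L^2([0,1]\times\mathbb{R})$ is separable. Given a finite nonzero signed Borel measure $m$ on $\mathcal{D}$, I would push it forward to $\Psi_\#m$ on $\mathcal{H}_0$; this is Borel measurable because $\Psi$ is continuous. Then
\[
\iint_{\mathcal{D}\times\mathcal{D}} k_\sigma\,dm\,dm=\iint_{\mathcal{H}_0\times\mathcal{H}_0}K\,d(\Psi_\#m)\,d(\Psi_\#m).
\]
Since $\Psi$ is injective, $\Psi_\#m\neq0$ as long as $m\neq 0$, and the right-hand side is then strictly positive. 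Characteristicness follows from ISPD by the standard implication \cite{Sriperumbudur2010HilbertEmbedding}. Indeed, if two probability measures $P\neq Q$ had equal mean embeddings, then $m=P-Q\neq 0$ would satisfy $\iint k_\sigma\,dm\,dm=\|\int k_\sigma(\cdot,D)\,dm(D)\|^2_{\mathcal{H}}=0$, which contradicts ISPD.

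The main obstacle is the measure-theoretic step asserting $\Psi_\#m\neq0$ whenever $m\neq0$. This needs $\Psi(B)$ to be Borel for Borel sets $B\subset\mathcal{D}$, so that $m(B)=\Psi_\#m(\Psi(B))$. I would obtain this from the Lusin--Souslin theorem, which requires $(\mathcal{D},d_{\mathrm{SK}})$ to be a standard Borel space. The reason it is one is that $\mathcal{D}$ is a countable union, over the cardinality $n$, of continuous images of the spaces $(\Atri\setminus\Delta)^n$, and these images are Borel by injectivity modulo permutations. A cleaner fallback is to work directly on the image $\Psi(\mathcal{D})\subset\mathcal{H}_0$ equipped with the subspace topology. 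Because $\Psi$ is an isometry onto its image, Borel sets correspond exactly, and ISPD on the subset is inherited from ISPD on $\mathcal{H}_0$. I would choose this route, since it removes the need for Lusin--Souslin altogether.
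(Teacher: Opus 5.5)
Your proposal is correct and follows the paper's proof: $\mathcal{H}_0=L^2([0,1]\times\mathbb{R})$ and $\Psi=\Phi$, then push a nonzero signed measure forward to invoke the ISPD property of the Gaussian kernel on a Hilbert space \cite{Guella2020GaussianHilbert}, and conclude characteristicness from ISPD via \cite{Sriperumbudur2010HilbertEmbedding}. Your trace-$\sigma$-algebra fallback is exactly the justification behind the paper's terse claim that $\Psi_{\#}\lambda\neq0$ because $\Psi$ is an isometric embedding, so you are right to use it; the discarded Lusin--Souslin aside would not work as written, because $\iota$ is discontinuous and hence $z\mapsto\delta_z$ is not continuous from Euclidean $\Atri\setminus\Delta$ into $(\mathcal{D},d_{\mathrm{SK}})$.
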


\begin{proof}
By Proposition~\ref{app:prop:SKOT_Hilbert_embed}, there exists an explicit map
$\Phi:\mathcal{D}\to L^2([0,1]\times\mathbb{R})$ such that
$d_{\mathrm{SK}}(D,E)^2=\|\Phi(D)-\Phi(E)\|_{L^2}^2$ for all $D,E\in\mathcal{D}$.
Set $\mathcal{H}_0:=L^2([0,1]\times\mathbb{R})$ and $\Psi:=\Phi$.
Then, with the Gaussian kernel on $\mathcal{H}_0$,
\[
\tilde{k}_\sigma(x,y):=\exp\!\Big(-\frac{\|x-y\|_{\mathcal{H}_0}^2}{2\sigma^2}\Big),
\qquad x,y\in\mathcal{H}_0,
\]
we have $k_\sigma=\tilde{k}_\sigma\circ(\Psi,\Psi)$.

By \cite[Thm.~3.2]{Guella2020GaussianHilbert}, $\tilde{k}_\sigma$ is ISPD on $\mathcal{H}_0$.
Let $\lambda$ be a nonzero finite signed Borel measure on $\mathcal{D}$ and set
$\tilde{\lambda}:=\Psi_{\#}\lambda$. Since $\Psi$ is an isometric embedding, $\tilde{\lambda}\neq 0$ and
\[
\iint_{\mathcal{D}\times\mathcal{D}} k_\sigma(D,E)\,d\lambda(D)\,d\lambda(E)
=
\iint_{\mathcal{H}_0\times\mathcal{H}_0} \tilde{k}_\sigma(x,y)\,d\tilde{\lambda}(x)\,d\tilde{\lambda}(y)
>0,
\]
so $k_\sigma$ is ISPD on $\mathcal{D}$.

Finally, integrally strictly positive definite kernels are characteristic
(see, e.g., \cite[Thm.~7]{Sriperumbudur2010HilbertEmbedding}),
hence $k_\sigma$ is characteristic on $\mathcal{D}$.
\end{proof}

\begin{corollary}[Universality on compacts]
\label{app:cor:SKOT_kernel_universal}
Let $K\subset\mathcal{D}$ be such that $\Psi(K)$ is compact in $\mathcal{H}_0$
(equivalently, $K$ is compact for the metric $d_{\mathrm{SK}}$).
Then the restriction of $k_\sigma$ to $K\times K$ is universal on $K$:
its RKHS is dense in $C(K)$ with respect to the uniform norm.
Consequently, $k_\sigma$ is characteristic on $K$.
\end{corollary}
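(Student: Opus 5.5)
The plan is to transfer universality from the Gaussian kernel on the Hilbert space $\mathcal H_0=L^2([0,1]\times\mathbb{R})$ to $K$ through the isometric embedding $\Psi=\Phi$ of Theorem~\ref{app:thm:SKOT_kernel_characteristic}. First I would justify the stated equivalence. Proposition~\ref{app:prop:SKOT_Hilbert_embed} says that $\Psi$ is an isometry from $(\mathcal D,d_{\mathrm{SK}})$ onto its image in $\mathcal H_0$. Hence $\Psi|_K:K\to\Psi(K)$ is a bijective isometry, and so it is a homeomorphism. It follows that $K$ is $d_{\mathrm{SK}}$-compact if and only if $\Psi(K)$ is norm-compact. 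Throughout I equip $K$ with the $d_{\mathrm{SK}}$ topology, which by Remark~\ref{app:rem:SKOT_kernel_distance} is also the topology of the kernel distance. Thus $k_\sigma$ is a continuous kernel on the compact space $K$, and $C(K)$ makes sense.

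Next, I would invoke universality of the Hilbert Gaussian kernel on compacts. By \cite{Guella2020GaussianHilbert}, $\tilde k_\sigma(x,y)=\exp(-\|x-y\|^2/(2\sigma^2))$ is ISPD on $\mathcal H_0$. For a continuous bounded kernel on a compact metric space $C$, ISPD restricted to finite signed Radon measures on $C$ is equivalent to $c$-universality on $C$. The argument is the standard Riesz duality. If the RKHS $\mathcal H_C$ were not dense in $C(C)$, Hahn--Banach and Riesz would give a nonzero signed measure $\lambda$ on $C$ annihilating every $\tilde k_\sigma(x,\cdot)$. Integrating against $\lambda$ then gives $\iint \tilde k_\sigma\,d\lambda\,d\lambda=0$, which contradicts ISPD. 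I would apply this with $C=\Psi(K)$; ISPD restricts to subsets because a measure on $\Psi(K)$ extends by zero to $\mathcal H_0$. The conclusion is that the RKHS of $\tilde k_\sigma|_{\Psi(K)}$ is dense in $C(\Psi(K))$.

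Then I would pull this back to $K$. The RKHS of $k_\sigma|_K=\tilde k_\sigma\circ(\Psi,\Psi)$ is exactly $\{g\circ\Psi: g\in\mathcal H_{\Psi(K)}\}$. This follows because its kernel sections are $\tilde k_\sigma(\Psi(D),\cdot)\circ\Psi$, and the composition map preserves the pre-Hilbert structure on finite spans and hence their completions. Since $\Psi|_K$ is a homeomorphism, $f\mapsto f\circ\Psi$ is an isometric isomorphism $C(\Psi(K))\to C(K)$ for the uniform norm. So density transfers directly and gives universality on $K$.

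Finally, characteristicness follows from universality by the classical result that universal kernels on compact metric spaces are characteristic \cite{Sriperumbudur2010HilbertEmbedding}. It can also be seen directly: if two probability measures on $K$ had equal mean embeddings, then they would integrate all RKHS functions equally, hence by density all of $C(K)$, and so they would coincide. The main obstacle I expect is the second step, namely stating cleanly the equivalence between ISPD and universality on a compact subset, and in particular checking that the restriction of $\tilde k_\sigma$ to $\Psi(K)$ inherits ISPD. If that citation route becomes awkward, the fallback is to prove density directly via the Hahn--Banach argument sketched above.
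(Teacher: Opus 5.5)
Your proposal is correct and follows essentially the same route as the paper: you transfer through the isometric embedding $\Psi$, use universality of the Gaussian kernel on the compact set $\Psi(K)\subset\mathcal H_0$, pull back to $K$, and then deduce characteristicness. The differences are that you derive universality on $\Psi(K)$ from the ISPD property by the Hahn--Banach/Riesz duality argument and prove the pullback directly using that $\Psi|_K$ is an isometric homeomorphism (which also justifies the stated compactness equivalence), whereas the paper cites Guella's universality theorem and pullback lemma and obtains characteristicness from ISPD rather than from density in $C(K)$.
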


\begin{proof}
Let $\tilde{k}_\sigma$ be the Gaussian kernel on $\mathcal{H}_0$ as in
Theorem~\ref{app:thm:SKOT_kernel_characteristic}.
By \cite[Thm.~3.1]{Guella2020GaussianHilbert}, $\tilde{k}_\sigma$ is universal on $\mathcal{H}_0$,
hence its restriction to the compact set $\Psi(K)$ is universal on $\Psi(K)$.
Since $\Psi:K\to\Psi(K)$ is continuous and injective, the pullback kernel
$k_\sigma=\tilde{k}_\sigma\circ(\Psi,\Psi)$ is universal on $K$
(see, e.g., \cite[Lem.~7.4]{Guella2020GaussianHilbert}).

Characteristicness on $K$ follows either from Theorem~\ref{app:thm:SKOT_kernel_characteristic}
(restriction of a characteristic kernel remains characteristic on measures supported on $K$),
or directly from ISPD $\Rightarrow$ characteristic \cite[Thm.~7]{Sriperumbudur2010HilbertEmbedding}.
\end{proof}

\begin{remark}[Practical implication]
Theorems~\ref{app:thm:SKOT_kernel_characteristic}--\ref{app:cor:SKOT_kernel_universal} justify using $k_\sigma$
for kernel machines (SVM, kernel ridge, GP) and for distributional tasks (MMD two-sample tests)
on persistence diagrams endowed with $d_{\mathrm{SK}}$.
\end{remark}

\begin{proposition}[Robustness to an approximate selector]\label{app:prop:SKOT_selector_approx}
Let $\tilde\iota:\Atri\to[0,1]$ be any map and define $\widetilde d_{\mathrm{SK}}$ by repeating
Definition~\ref{app:def:SKOT} with $\iota$ replaced by $\tilde\iota$.
Assume that for all atoms $z$ appearing in $D$ or $E$ and for their diagonal projections,
\[
|\tilde\iota(z)-\iota(z)|\le \varepsilon,\qquad
|\tilde\iota(\Pi(z))-\iota(\Pi(z))|\le \varepsilon.
\]
Then
\[
\big|\widetilde d_{\mathrm{SK}}(D,E)^2-d_{\mathrm{SK}}(D,E)^2\big|
\le 2\,(|D|+|E|)\,\varepsilon.
\]
\end{proposition}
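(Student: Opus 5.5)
The plan is to use the sorting formula (Corollary~\ref{app:cor:SKOT_complexity}, i.e.\ Proposition~\ref{app:prop:1D_EMD}) together with a stability property of sorted lists under coordinatewise perturbations. First, we form the two multisets $A,B$ of $N:=|D|+|E|$ selector values built from $\iota$, and the corresponding multisets $\tilde A,\tilde B$ built from $\tilde\iota$. Both are indexed by the same underlying atoms: the points of $D$, of $E$, and their diagonal projections. By hypothesis, each element of $\tilde A$ is within $\varepsilon$ of its counterpart in $A$, and likewise for $\tilde B$ and $B$.

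We write $d_{\mathrm{SK}}(D,E)^2=W_1(\alpha,\beta)$ with $\alpha=\sum_{a\in A}\delta_a$ and $\beta=\sum_{b\in B}\delta_b$, and similarly $\widetilde d_{\mathrm{SK}}(D,E)^2=W_1(\tilde\alpha,\tilde\beta)$. Note that Proposition~\ref{app:prop:1D_EMD} applies to any integer atomic measures, so the fact that $\tilde\iota$ may fail to be injective is irrelevant here. The key intermediate bound is
\[
W_1(\alpha,\tilde\alpha)\le \sum_{a}|\tilde a-a|\le N\varepsilon,
\]
which we obtain by using the coupling that pairs each atom with its own perturbed version; the same bound holds for $W_1(\beta,\tilde\beta)$. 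The triangle inequality for $W_1$ on equal-mass measures over $[0,1]$ then gives
\[
|W_1(\tilde\alpha,\tilde\beta)-W_1(\alpha,\beta)|\le W_1(\alpha,\tilde\alpha)+W_1(\beta,\tilde\beta)\le 2N\varepsilon,
\]
which is exactly the claimed bound $2(|D|+|E|)\varepsilon$.

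The only step needing some care is the triangle inequality for $W_1$ between finite equal-mass measures (here of mass $N$, not probabilities). I would handle it either by rescaling to probability measures, or, more self-containedly, through the cumulative formula of Proposition~\ref{app:prop:1D_EMD}: $W_1(\mu,\nu)=\|F_\mu-F_\nu\|_{L^1}$ with $F_\mu(t)=\mu([0,t])$, from which the triangle inequality follows immediately from that of the $L^1$ norm. An alternative route avoiding $W_1$ altogether uses the fact that sorting is $1$-Lipschitz in $\ell^1$ (the rearrangement inequality $\sum_k|a_{(k)}-\tilde a_{(k)}|\le\sum_k|a_k-\tilde a_k|$), combined with $\bigl||x-y|-|x'-y'|\bigr|\le|x-x'|+|y-y'|$ applied termwise to the sorted sums. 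I expect no real obstacle; the main point is to make sure the perturbation is accounted for once per entry of $A$ and once per entry of $B$, each list having length $N$, which yields the factor $2N$.
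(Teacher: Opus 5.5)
Your proposal is correct and follows the paper's proof essentially verbatim: the paper likewise applies the triangle inequality for $W_1$, bounds $W_1(\alpha,\tilde\alpha)$ and $W_1(\beta,\tilde\beta)$ by $(|D|+|E|)\varepsilon$ using the coupling of each atom with its perturbed version, and sums the two bounds. Your extra justification of the triangle inequality through the cumulative formula of Proposition~\ref{app:prop:1D_EMD} is valid, since all four measures are integer atomic on $[0,1]$ with equal mass $N$; the paper simply asserts that step.
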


\begin{proof}
Let $\alpha:=\mu_D+\nu_E$ and $\beta:=\mu_E+\nu_D$, and let $\tilde\alpha,\tilde\beta$ be the corresponding measures
built with $\tilde\iota$. Since $W_1$ is a metric,
\[
|W_1(\alpha,\beta)-W_1(\tilde\alpha,\tilde\beta)|
\le W_1(\alpha,\tilde\alpha)+W_1(\beta,\tilde\beta).
\]
Under the assumption, each atom of $\alpha$ is moved by at most $\varepsilon$ to obtain $\tilde\alpha$,
so coupling each atom with its moved version yields $W_1(\alpha,\tilde\alpha)\le (|D|+|E|)\varepsilon$.
The same bound holds for $W_1(\beta,\tilde\beta)$, giving the claim.
\end{proof}

\begin{proposition}[Dataset-specific reverse bound for $d_{\mathrm{SK}}^2$]\label{app:prop:SKOT_reverse_finite_dataset}
Let $\mathcal F$ be a finite family of finite persistence diagrams supported in
$\Atri\setminus\Delta$, and define the finite set of relevant atoms
\[
\mathcal Z_{\mathcal F}
:=
\bigcup_{D\in\mathcal F}
\Bigl(\operatorname{spt}(D)\cup \Pi(\operatorname{spt}(D))\Bigr)
\subset \Atri.
\]
Since $\mathcal Z_{\mathcal F}$ is finite and $\iota$ is injective, the constant
\[
L_{\mathcal F}
:=
\max_{\substack{x,y\in\mathcal Z_{\mathcal F}\\ x\neq y}}
\frac{|\iota(x)-\iota(y)|}{\|x-y\|_2}
\]
is finite. Then, for every $D,E\in\mathcal F$,
\[
d_{\mathrm{SK}}(D,E)^2
\le
L_{\mathcal F}\,\sqrt{2(|D|+|E|)}\;W_{2,\Delta}(D,E).
\]
In particular, if all diagrams in $\mathcal F$ satisfy $|D|\le M$, then
\[
d_{\mathrm{SK}}(D,E)^2
\le
2L_{\mathcal F}\sqrt{M}\;W_{2,\Delta}(D,E)
\qquad\forall\,D,E\in\mathcal F.
\]
\end{proposition}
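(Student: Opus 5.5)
The plan is to bound the one-dimensional transport cost defining $d_{\mathrm{SK}}(D,E)^2$ by the cost of a specific coupling. That coupling is obtained by pulling an optimal $W_{2,\Delta}$ assignment back through $\iota$. Then the empirical Lipschitz constant $L_{\mathcal F}$ converts scalar displacements into Euclidean ones, and Cauchy--Schwarz converts the resulting sum of Euclidean lengths into a root-sum-of-squares.

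\emph{Step 1.} Write $N:=|D|+|E|$ and fix an optimal bijection $\psi$ between the atoms of $\bar D=D+\Pi_\#E$ and $\bar E=E+\Pi_\#D$ realizing $W_{2,\Delta}(D,E)$. The infimum is attained because we have finitely many bijections, and an optimal coupling can be taken to be a permutation by Birkhoff. The difficulty is that $c_\Delta$ assigns zero cost to diagonal--diagonal pairs and cost $d_\Delta(x)^2$ to an off-diagonal point matched to an arbitrary diagonal point. I would therefore first normalize $\psi$ so that every off-diagonal point $x$ assigned to the diagonal is assigned exactly to $\Pi(x)$.

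This normalization is the main obstacle. Consider the standard rearrangement. If $x\in D$ is matched to some diagonal atom $\Pi(y)$ with $y\in D$, then $\Pi(x)\in\bar E$ is matched to some $w$. Swap so that $x\leftrightarrow\Pi(x)$ and $w\leftrightarrow\Pi(y)$. This never increases the $c_\Delta$-cost: the new cost of $x$ is $d_\Delta(x)^2$, which equals the old cost. If $w$ is diagonal, the new pair $w\leftrightarrow\Pi(y)$ costs $0$. If $w$ is off-diagonal, its new cost is $d_\Delta(w)^2$, which is at most the old cost $c_\Delta(w,\Pi(x))$. Iterating terminates because the number of off-diagonal points matched to their own projection strictly increases. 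After this step, $\psi$ consists of three kinds of pairs: off-diagonal pairs $(x,y)$ with $x\in D$, $y\in E$; pairs $(x,\Pi(x))$ or $(\Pi(y),y)$; and diagonal--diagonal pairs $(\Pi(y),\Pi(x))$. The last kind arises exactly when $x$ and $y$ are off-diagonal points that are both matched to their own projections, so these pairs can be bundled with them.

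\emph{Step 2.} Define $\pi:=\sum_k\delta_{(\iota(p_k),\iota(\psi(p_k)))}$, which lies in $\Pi(\iota_\#\bar D,\iota_\#\bar E)$. Then $d_{\mathrm{SK}}(D,E)^2\le\sum_k|\iota(p_k)-\iota(\psi(p_k))|\le L_{\mathcal F}\sum_k\|p_k-\psi(p_k)\|_2$, since all points involved lie in $\mathcal Z_{\mathcal F}$; equal points contribute $0$. For off-diagonal pairs and projection pairs, $\|p_k-\psi(p_k)\|_2^2=c_\Delta(p_k,\psi(p_k))$. For a diagonal--diagonal pair $(\Pi(y),\Pi(x))$, the triangle inequality through $x$ and $y$ gives
\[
\|\Pi(y)-\Pi(x)\|_2\le d_\Delta(x)+d_\Delta(y),
\]
and $d_\Delta(x)$ and $d_\Delta(y)$ are already paid by the pairs $(x,\Pi(x))$ and $(\Pi(y),y)$. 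To avoid such cases entirely, I would instead re-pair: match $x$ with $\Pi(x)$ and $\Pi(y)$ with $y$ in one consistent way, noting that the bookkeeping is symmetric. Either way, the total Euclidean length is at most twice a sum of at most $N$ terms $\ell_k$ with $\sum_k\ell_k^2\le W_{2,\Delta}^2$.

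\emph{Step 3.} Cauchy--Schwarz over at most $N$ terms gives $\sum_k\ell_k\le\sqrt N\,W_{2,\Delta}$. Combined with Step 2, this yields a bound of the form $C\,L_{\mathcal F}\sqrt{N}\,W_{2,\Delta}$.

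The target constant is $\sqrt{2N}$. To reach it, I would track the factor carefully. Each diagonal--diagonal term contributes a doubled length $d_\Delta(x)+d_\Delta(y)$, and I would bound it by $\sqrt2\,(d_\Delta(x)^2+d_\Delta(y)^2)^{1/2}$. Grouping terms into at most $N$ blocks and applying Cauchy--Schwarz once then gives $\sum_k\|p_k-\psi(p_k)\|_2\le\sqrt{2N}\,W_{2,\Delta}$. The final statement follows by substituting $|D|+|E|\le 2M$.
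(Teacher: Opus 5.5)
Your Step 1 is sound. The swap in fact preserves the $c_\Delta$-cost exactly, since $c_\Delta(p,\cdot)$ takes the same value at every diagonal point. It also supplies the partial-matching form (matched pairs $\mathcal M$, unmatched sets $U$, $V$) that the paper simply takes for granted.

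The gap is in Steps 2--3, in how you treat the diagonal--diagonal pairs. There are three problems.

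\emph{The bookkeeping is inverted.} Once $x\in D$ is matched to $\Pi(x)$, that atom of $\bar E$ is used up. So the diagonal atoms left over for diagonal--diagonal pairs are exactly the projections of points $x\in D$, $y\in E$ that are matched \emph{off-diagonally}. Their values $d_\Delta(x)$ and $d_\Delta(y)$ are paid nowhere in $W_{2,\Delta}(D,E)$.

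\emph{The inequality $\|\Pi(y)-\Pi(x)\|_2\le d_\Delta(x)+d_\Delta(y)$ is false.} Take $x=(0,\varepsilon)$ and $y=(1-\varepsilon,1)$. The triangle inequality through $x$ and $y$ only gives $d_\Delta(x)+\|x-y\|_2+d_\Delta(y)$.

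\emph{The arbitrary diagonal pairing cannot be controlled.} This is the most important point. Because $c_\Delta$ vanishes on diagonal--diagonal pairs, an optimal $\psi$ is free to pair $\Pi(y_2)$ with $\Pi(x_1)$ for two unrelated matched pairs $(x_1,y_1)$ and $(x_2,y_2)$. Suppose $x_1\approx y_1$ sit near one end of the diagonal and $x_2\approx y_2$ near the other. Then $\sum_k\|p_k-\psi(p_k)\|_2$ contains a term of order one, while $W_{2,\Delta}(D,E)$ is as small as you like. So the Euclidean estimate of Step 2 cannot hold for the pullback of an arbitrary optimal $\psi$. The re-pairing you suggest, sending $x$ to $\Pi(x)$ and $\Pi(y)$ to $y$, does not fix this. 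It breaks an off-diagonal match and replaces $\|x-y\|_2$ by $d_\Delta(x)+d_\Delta(y)$, which $W_{2,\Delta}$ does not control either.

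The missing idea, which is what the paper does, is to re-pair the leftover diagonal atoms along the off-diagonal matching. For each $(x,y)\in\mathcal M$, keep $\iota(x)\leftrightarrow\iota(y)$ and couple $\iota(\Pi(y))$ with $\iota(\Pi(x))$. Couple each unmatched point with its own projection.

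This coupling is admissible because the diagonal atoms of $\bar D$ and $\bar E$ not consumed by $V$ and $U$ are exactly $\{\Pi(y)\}$ and $\{\Pi(x)\}$ indexed by $\mathcal M$. Since $\Pi$ is an orthogonal projection, $\|\Pi(y)-\Pi(x)\|_2\le\|x-y\|_2$. Put differently, among the many $W_{2,\Delta}$-optimal bijections, which differ only in their cost-free diagonal part, you must pull back the one whose diagonal part follows $\mathcal M$.

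The resulting sum has exactly $|D|+|E|$ terms: two copies of $\|x-y\|_2$ per matched pair and one $d_\Delta$ per unmatched point. Their squares sum to at most $2W_{2,\Delta}^2(D,E)$, and a single Cauchy--Schwarz gives the factor $\sqrt{2(|D|+|E|)}$. So the $\sqrt2$ comes from doubling the matched pairs, not from bounding $d_\Delta(x)+d_\Delta(y)$.
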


\begin{proof}
Fix $D,E\in\mathcal F$, and let $\mathcal M$ be an optimal diagonal-aware matching
between $D$ and $E$ for $W_{2,\Delta}$, with $U\subset D$ the unmatched points
of $D$ and $V\subset E$ the unmatched points of $E$. Then
\[
W_{2,\Delta}^2(D,E)
=
\sum_{(x,y)\in \mathcal M}\|x-y\|_2^2
+
\sum_{x\in U} d_\Delta(x)^2
+
\sum_{y\in V} d_\Delta(y)^2.
\]

We construct a coupling between
\[
\alpha:=\mu_D+\nu_E,
\qquad
\beta:=\mu_E+\nu_D
\]
as follows:
for each matched pair $(x,y)\in \mathcal M$, couple $\iota(x)$ with $\iota(y)$ and
$\iota(\Pi(y))$ with $\iota(\Pi(x))$; for each unmatched $x\in U$, couple
$\iota(x)$ with $\iota(\Pi(x))$; for each unmatched $y\in V$, couple
$\iota(\Pi(y))$ with $\iota(y)$. This gives an admissible coupling for
$W_1(\alpha,\beta)=d_{\mathrm{SK}}(D,E)^2$, hence
\[
d_{\mathrm{SK}}(D,E)^2
\le
\sum_{(x,y)\in \mathcal M}
\Bigl(
|\iota(x)-\iota(y)|
+
|\iota(\Pi(y))-\iota(\Pi(x))|
\Bigr)
+
\sum_{x\in U}|\iota(x)-\iota(\Pi(x))|
+
\sum_{y\in V}|\iota(y)-\iota(\Pi(y))|.
\]

All points involved belong to $\mathcal Z_{\mathcal F}$, so by definition of
$L_{\mathcal F}$,
\[
|\iota(x)-\iota(y)|\le L_{\mathcal F}\|x-y\|_2,
\qquad
|\iota(\Pi(y))-\iota(\Pi(x))|
\le
L_{\mathcal F}\|\Pi(y)-\Pi(x)\|_2
\le
L_{\mathcal F}\|x-y\|_2,
\]
because $\Pi$ is $1$-Lipschitz. Likewise,
\[
|\iota(x)-\iota(\Pi(x))|\le L_{\mathcal F}d_\Delta(x),
\qquad
|\iota(y)-\iota(\Pi(y))|\le L_{\mathcal F}d_\Delta(y).
\]
Therefore
\[
d_{\mathrm{SK}}(D,E)^2
\le
L_{\mathcal F}
\Bigl(
2\sum_{(x,y)\in \mathcal M}\|x-y\|_2
+
\sum_{x\in U} d_\Delta(x)
+
\sum_{y\in V} d_\Delta(y)
\Bigr).
\]

Now write the right-hand side as a sum of exactly $|D|+|E|$ nonnegative terms:
for each $(x,y)\in \mathcal M$, two copies of $\|x-y\|_2$, and for each $x\in U$,
resp.\ $y\in V$, one copy of $d_\Delta(x)$, resp.\ $d_\Delta(y)$. By
Cauchy--Schwarz,
\[
2\sum_{(x,y)\in \mathcal M}\|x-y\|_2
+
\sum_{x\in U} d_\Delta(x)
+
\sum_{y\in V} d_\Delta(y)
\le
\sqrt{|D|+|E|}
\Bigl(
2\sum_{(x,y)\in \mathcal M}\|x-y\|_2^2
+
\sum_{x\in U} d_\Delta(x)^2
+
\sum_{y\in V} d_\Delta(y)^2
\Bigr)^{1/2}.
\]
Since
\[
2\sum_{(x,y)\in \mathcal M}\|x-y\|_2^2
+
\sum_{x\in U} d_\Delta(x)^2
+
\sum_{y\in V} d_\Delta(y)^2
\le
2\,W_{2,\Delta}^2(D,E),
\]
we obtain
\[
d_{\mathrm{SK}}(D,E)^2
\le
L_{\mathcal F}\sqrt{2(|D|+|E|)}\,W_{2,\Delta}(D,E).
\]
If $|D|,|E|\le M$, then $|D|+|E|\le 2M$, hence
\[
d_{\mathrm{SK}}(D,E)^2\le 2L_{\mathcal F}\sqrt{M}\,W_{2,\Delta}(D,E).
\]
\end{proof}

\begin{proposition}[A simpler separation-based reverse bound]
\label{app:prop:SKOT_reverse_simple}
Let $D,E$ be finite diagrams on $\Atri\setminus\Delta$. Define
\[
\delta^{\times}_{D,E}
:=
\min\{\|x-y\|_2:\ x\in \operatorname{spt}(D),\ y\in \operatorname{spt}(E),\ x\neq y\},
\]
with the convention $\delta^{\times}_{D,E}=+\infty$ if no such distinct pair exists, and
\[
\eta_{D,E}:=
\min_{z\in \operatorname{spt}(D)\cup\operatorname{spt}(E)} d_\Delta(z),
\]
with the convention $\eta_{D,E}=+\infty$ when $D=E=\varnothing$.
Then
\[
d_{\mathrm{SK}}(D,E)^2
\le
\frac{\sqrt{2(|D|+|E|)}}{\min(\delta^{\times}_{D,E},\eta_{D,E})}
\,W_{2,\Delta}(D,E).
\]
In particular, if $|D|,|E|\le M$, then
\[
d_{\mathrm{SK}}(D,E)^2
\le
\frac{2\sqrt M}{\min(\delta^{\times}_{D,E},\eta_{D,E})}
\,W_{2,\Delta}(D,E).
\]
\end{proposition}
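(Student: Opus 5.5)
The plan is to rerun the proof of Proposition~\ref{app:prop:SKOT_reverse_finite_dataset} with the same coupling, but to replace the empirical Lipschitz constant $L_{\mathcal F}$ by the cruder factor $1/m$, where $m:=\min(\delta^{\times}_{D,E},\eta_{D,E})$. The only input needed is that $\iota$ takes values in $[0,1]$, so $|\iota(a)-\iota(b)|\le 1$ for any two points $a,b$.

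\emph{Degenerate cases.} If $D=E=\varnothing$, both sides vanish. Otherwise $\eta_{D,E}>0$, because all atoms lie off the diagonal. Also $\delta^{\times}_{D,E}>0$, being a minimum of positive distances over a finite set, or $+\infty$. Hence $0<m<\infty$ and the right-hand side is well defined.

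\emph{Coupling.} Fix an optimal diagonal-aware matching $\mathcal M$ for $W_{2,\Delta}(D,E)$, with unmatched sets $U\subset D$ and $V\subset E$. Couple $\alpha=\mu_D+\nu_E$ and $\beta=\mu_E+\nu_D$ exactly as before:
\begin{itemize}
\item for $(x,y)\in\mathcal M$, couple $\iota(x)$ with $\iota(y)$, and $\iota(\Pi(y))$ with $\iota(\Pi(x))$;
\item for $x\in U$, couple $\iota(x)$ with $\iota(\Pi(x))$;
\item for $y\in V$, couple $\iota(\Pi(y))$ with $\iota(y)$.
\end{itemize}
This gives $d_{\mathrm{SK}}(D,E)^2$ at most the sum of these $|D|+|E|$ one-dimensional costs.

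\emph{Bounding each term.} This is the only genuinely new step.
\begin{itemize}
\item For a matched pair with $x=y$, both costs are zero, matching $\|x-y\|_2=0$.
\item For a matched pair with $x\neq y$, we have $x\in\operatorname{spt}(D)$, $y\in\operatorname{spt}(E)$ and $x\neq y$, so $\|x-y\|_2\ge\delta^{\times}_{D,E}\ge m$. Each of the two costs is then at most $1\le \|x-y\|_2/m$. The projected pair $\Pi(x),\Pi(y)$ needs no Lipschitz argument, since its cost is also at most $1$.
\item For $x\in U$, we have $d_\Delta(x)\ge\eta_{D,E}\ge m$, so $|\iota(x)-\iota(\Pi(x))|\le 1\le d_\Delta(x)/m$. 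The same holds for $y\in V$.
\end{itemize}
Therefore
\[
d_{\mathrm{SK}}(D,E)^2\le \frac{1}{m}\Bigl(2\sum_{(x,y)\in\mathcal M}\|x-y\|_2+\sum_{x\in U}d_\Delta(x)+\sum_{y\in V}d_\Delta(y)\Bigr).
\]

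\emph{Conclusion.} The bracket is a sum of exactly $|D|+|E|$ nonnegative terms. Apply Cauchy--Schwarz as in Proposition~\ref{app:prop:SKOT_reverse_finite_dataset}: the sum of squares is at most $2W_{2,\Delta}^2(D,E)$. This yields the bound $\sqrt{2(|D|+|E|)}\,W_{2,\Delta}(D,E)/m$. The bounded-cardinality version then follows from $|D|+|E|\le 2M$.

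The main care point is the per-term bound. The case $x=y$ must be separated, since $\delta^{\times}_{D,E}$ only controls distinct pairs. Beyond that, this choice of $m$ is exactly what guarantees that every nonzero cost comes with a geometric quantity of size at least $m$.
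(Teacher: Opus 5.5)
Your proposal is correct and follows the paper's proof essentially step for step. It uses the same coupling from Proposition~\ref{app:prop:SKOT_reverse_finite_dataset}, the same crude bound $|\iota(a)-\iota(b)|\le 1\le \|x-y\|_2/\min(\delta^{\times}_{D,E},\eta_{D,E})$ with the case $x=y$ split off (and likewise with $d_\Delta$ for unmatched points), and the same Cauchy--Schwarz step over $|D|+|E|$ terms. Your explicit check that the minimum is positive and finite, together with the case $D=E=\varnothing$, is a small addition that the paper leaves implicit.
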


\begin{proof}
Let $\mathcal M$ be an optimal diagonal-aware matching between $D$ and $E$, and let
$U\subset D$, $V\subset E$ be the unmatched points. Set
\[
\lambda_{D,E}:=\min(\delta^{\times}_{D,E},\eta_{D,E}).
\]
As in Proposition~\ref{app:prop:SKOT_reverse_finite_dataset}, we construct a coupling between
\[
\alpha:=\mu_D+\nu_E,
\qquad
\beta:=\mu_E+\nu_D
\]
by coupling, for each matched pair $(x,y)\in \mathcal M$, the atoms $\iota(x)$ with $\iota(y)$ and
$\iota(\Pi(y))$ with $\iota(\Pi(x))$; for each $x\in U$, the atoms $\iota(x)$ and $\iota(\Pi(x))$;
and for each $y\in V$, the atoms $\iota(\Pi(y))$ and $\iota(y)$. Hence
\[
d_{\mathrm{SK}}(D,E)^2
\le
\sum_{(x,y)\in \mathcal M}
\Bigl(
|\iota(x)-\iota(y)|
+
|\iota(\Pi(y))-\iota(\Pi(x))|
\Bigr)
+
\sum_{x\in U}|\iota(x)-\iota(\Pi(x))|
+
\sum_{y\in V}|\iota(y)-\iota(\Pi(y))|.
\]

If $(x,y)\in \mathcal M$ and $x=y$, both matched contributions vanish. If $x\neq y$, then
$\|x-y\|_2\ge \delta^{\times}_{D,E}\ge \lambda_{D,E}$, so
\[
|\iota(x)-\iota(y)|\le 1\le \frac{\|x-y\|_2}{\lambda_{D,E}},
\qquad
|\iota(\Pi(y))-\iota(\Pi(x))|
\le 1\le \frac{\|x-y\|_2}{\lambda_{D,E}}.
\]
Likewise, if $x\in U$, then $d_\Delta(x)\ge \eta_{D,E}\ge \lambda_{D,E}$, hence
\[
|\iota(x)-\iota(\Pi(x))|
\le 1\le \frac{d_\Delta(x)}{\lambda_{D,E}},
\]
and similarly for $y\in V$. Therefore
\[
d_{\mathrm{SK}}(D,E)^2
\le
\frac{1}{\lambda_{D,E}}
\Bigl(
2\sum_{(x,y)\in \mathcal M}\|x-y\|_2
+
\sum_{x\in U} d_\Delta(x)
+
\sum_{y\in V} d_\Delta(y)
\Bigr).
\]

Now write the right-hand side as a sum of exactly $|D|+|E|$ nonnegative terms:
two copies of $\|x-y\|_2$ for each matched pair $(x,y)\in \mathcal M$, and one copy of each
$d_\Delta(x)$, $x\in U$, and $d_\Delta(y)$, $y\in V$. By Cauchy--Schwarz,
\[
2\sum_{(x,y)\in \mathcal M}\|x-y\|_2
+
\sum_{x\in U} d_\Delta(x)
+
\sum_{y\in V} d_\Delta(y)
\le
\sqrt{|D|+|E|}
\Bigl(
2\sum_{(x,y)\in \mathcal M}\|x-y\|_2^2
+
\sum_{x\in U} d_\Delta(x)^2
+
\sum_{y\in V} d_\Delta(y)^2
\Bigr)^{1/2}.
\]
Since
\[
2\sum_{(x,y)\in \mathcal M}\|x-y\|_2^2
+
\sum_{x\in U} d_\Delta(x)^2
+
\sum_{y\in V} d_\Delta(y)^2
\le
2\,W_{2,\Delta}(D,E)^2,
\]
we obtain
\[
d_{\mathrm{SK}}(D,E)^2
\le
\frac{\sqrt{2(|D|+|E|)}}{\lambda_{D,E}}\,W_{2,\Delta}(D,E).
\]
The last claim follows from $|D|+|E|\le 2M$.
\end{proof}

\begin{corollary}[A one-number constant on a finite dataset]
\label{app:cor:SKOT_reverse_dataset_simple}
Let $\mathcal F$ be a finite family of diagrams and set
\[
X_{\mathcal F}:=\bigcup_{D\in\mathcal F}\operatorname{spt}(D),
\qquad
\delta_{\mathcal F}:=
\min\{\|x-y\|_2:\ x,y\in X_{\mathcal F},\ x\neq y\},
\]
with the convention $\delta_{\mathcal F}=+\infty$ if $X_{\mathcal F}$ has at most one point, and
\[
\eta_{\mathcal F}:=\min_{x\in X_{\mathcal F}} d_\Delta(x).
\]
If $M:=\max_{D\in\mathcal F}|D|$, then for all $D,E\in\mathcal F$,
\[
d_{\mathrm{SK}}(D,E)^2
\le
\frac{2\sqrt M}{\min(\delta_{\mathcal F},\eta_{\mathcal F})}
\,W_{2,\Delta}(D,E).
\]
\end{corollary}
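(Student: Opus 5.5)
The plan is to deduce this corollary directly from Proposition~\ref{app:prop:SKOT_reverse_simple}. The only work is to compare the pairwise constants $\delta^{\times}_{D,E}$ and $\eta_{D,E}$ with the dataset-wide constants $\delta_{\mathcal F}$ and $\eta_{\mathcal F}$.

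Fix $D,E\in\mathcal F$. First I will show that $\delta^{\times}_{D,E}\ge\delta_{\mathcal F}$. Any distinct pair $x\in\operatorname{spt}(D)$, $y\in\operatorname{spt}(E)$ is a distinct pair of points of $X_{\mathcal F}$, so $\|x-y\|_2\ge\delta_{\mathcal F}$. If no such pair exists, then $\delta^{\times}_{D,E}=+\infty$, and the inequality holds trivially. The same trivial case covers the convention $\delta_{\mathcal F}=+\infty$: if $X_{\mathcal F}$ has at most one point, no distinct pair exists at all.

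Next I will show that $\eta_{D,E}\ge\eta_{\mathcal F}$. The minimum defining $\eta_{D,E}$ runs over $\operatorname{spt}(D)\cup\operatorname{spt}(E)\subset X_{\mathcal F}$, which is a subset of the set defining $\eta_{\mathcal F}$. When $D=E=\varnothing$, we have $\eta_{D,E}=+\infty$. In that case the right-hand side is harmless anyway, since $d_{\mathrm{SK}}(D,E)^2=0$.

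Together these give
\[
\min(\delta^{\times}_{D,E},\eta_{D,E})\ge\min(\delta_{\mathcal F},\eta_{\mathcal F}).
\]
Since $|D|,|E|\le M$, the second bound of Proposition~\ref{app:prop:SKOT_reverse_simple} then yields
\[
d_{\mathrm{SK}}(D,E)^2\le \frac{2\sqrt M}{\min(\delta^{\times}_{D,E},\eta_{D,E})}W_{2,\Delta}(D,E)\le\frac{2\sqrt M}{\min(\delta_{\mathcal F},\eta_{\mathcal F})}W_{2,\Delta}(D,E).
\]
The second inequality holds because $W_{2,\Delta}\ge0$ and $1/\lambda$ is decreasing in $\lambda$.

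The only step that needs care is the degenerate conventions. I must check that $\min(\delta_{\mathcal F},\eta_{\mathcal F})$ is positive, and either finite or the bound is vacuous. Positivity holds because all atoms lie off the diagonal and $X_{\mathcal F}$ is finite, so $\eta_{\mathcal F}>0$. If $X_{\mathcal F}$ is empty, every diagram is empty, and both sides vanish with the conventions interpreted as $0\le 0$. Otherwise $\eta_{\mathcal F}$ is finite and positive, so the constant is well defined. I do not expect any real obstacle beyond this bookkeeping.
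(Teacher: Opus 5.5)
Your proposal is correct and follows the paper's proof: the paper likewise notes that $\delta^{\times}_{D,E}\ge\delta_{\mathcal F}$ and $\eta_{D,E}\ge\eta_{\mathcal F}$ for all $D,E\in\mathcal F$, and then applies Proposition~\ref{app:prop:SKOT_reverse_simple} with $|D|,|E|\le M$. Your extra handling of the degenerate conventions is sound and makes explicit what the paper leaves implicit; these cases are empty diagrams, $X_{\mathcal F}$ with at most one point, and positivity of $\min(\delta_{\mathcal F},\eta_{\mathcal F})$.
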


\begin{proof}
For any $D,E\in\mathcal F$, one has $\delta^{\times}_{D,E}\ge \delta_{\mathcal F}$ and
$\eta_{D,E}\ge \eta_{\mathcal F}$. Apply Proposition~\ref{app:prop:SKOT_reverse_simple}.
\end{proof}

%% file: Supplementary_material.tex
	\appendices

%% file: supplementary_experiments.tex
\section{Additional experimental results}\label{app:additional_experiments}

\subsection{Kernel-based contiguous segmentation of ordered collections}
\label{sec:kernel_segmentation_supp}

This supplementary experiment evaluates whether the Gaussian $d_{\mathrm{SK}}^{}$
kernel can be used to partition an ordered collection of persistence
diagrams into contiguous segments. For collections with an explicitly
temporal ordering, this corresponds to a change-point-detection
problem. More generally, the experiment should be interpreted as an
order-constrained segmentation problem, since the stored sample order
may represent time or another ordered parameterization.

For each collection \(c\), let
\[
X_0,\ldots,X_{n_c-1}
\]
denote the persistence diagrams in their stored order, and let
\[
K_c:=K_{c,30}
\]
be the Gaussian $d_{\mathrm{SK}}^{}$ kernel matrix constructed from
\(d_{\mathrm{SK},30}\). Its bandwidth is fixed independently for each
collection as
\[
\sigma_c
=
\operatorname{median}
\left\{
d_{\mathrm{SK},30}(X_i,X_j):
i<j,\;
d_{\mathrm{SK},30}(X_i,X_j)>0
\right\}.
\]
The bandwidth is therefore determined from the pairwise $d_{\mathrm{SK}}^{}$
distances without using the reference labels or boundary locations.

For a half-open candidate segment
\[
[a,b)
=
\{a,a+1,\ldots,b-1\},
\]
we use the within-segment kernel dispersion
\[
C_c([a,b))
=
\sum_{i=a}^{b-1}K_c(i,i)
-
\frac{1}{b-a}
\sum_{i,j=a}^{b-1}K_c(i,j).
\]
If \(\psi_c\) denotes a feature map associated with the Gaussian $d_{\mathrm{SK}}^{}$
kernel, so that
\[
K_c(i,j)
=
\left\langle
\psi_c(X_i),\psi_c(X_j)
\right\rangle,
\]
then this cost is exactly the sum of squared distances of the feature
vectors to their mean within the segment:
\[
C_c([a,b))
=
\sum_{i=a}^{b-1}
\left\|
\psi_c(X_i)-\overline{\psi}_{a:b}
\right\|^2,
\qquad
\overline{\psi}_{a:b}
=
\frac{1}{b-a}
\sum_{i=a}^{b-1}\psi_c(X_i).
\]
Consequently, a segment has a small cost when its persistence
diagrams are mutually similar according to the Gaussian $d_{\mathrm{SK}}^{}$ kernel.

The number \(k\) of segments is fixed to the number of reference
groups supplied by the collection metadata. Dynamic programming then
finds the globally optimal boundaries
\[
0=\widehat{\tau}_0
<
\widehat{\tau}_1
<
\cdots
<
\widehat{\tau}_{k-1}
<
\widehat{\tau}_k=n_c
\]
minimizing
\[
\sum_{\ell=1}^{k}
C_c
\left(
[\widehat{\tau}_{\ell-1},\widehat{\tau}_{\ell})
\right).
\]
Only the number \(k\) and the stored sample order are provided to the
algorithm. The reference labels and reference boundary locations are
not used to compute the segmentation; they are used only afterwards
for evaluation.

No minimum segment length is imposed, beyond requiring each segment
to contain at least one sample. Therefore, the optimization may
produce a one-sample segment when this decreases the total
within-segment dispersion.

A reported boundary \(b\) denotes the first sample index of the next
segment. For example, the boundary list
\[
[4,8]
\]
defines the three half-open segments
\[
[0,4),\qquad[4,8),\qquad[8,n_c).
\]
Thus, the corresponding groups contain samples \(0\)--\(3\),
\(4\)--\(7\), and \(8\)--\(n_c-1\), respectively.

Let
\[
\tau_1,\ldots,\tau_{k-1}
\]
be the reference boundaries and
\[
\widehat{\tau}_1,\ldots,\widehat{\tau}_{k-1}
\]
the predicted boundaries. The boundary mean absolute error is
\[
\operatorname{Boundary\ MAE}
=
\frac{1}{k-1}
\sum_{\ell=1}^{k-1}
\left|
\widehat{\tau}_{\ell}-\tau_{\ell}
\right|.
\]
It measures the average displacement of the predicted boundaries in
numbers of samples. The Adjusted Rand Index (ARI), by contrast,
compares the complete predicted segmentation with the complete
reference partition. The two measures are complementary: Boundary
MAE evaluates boundary localization, whereas ARI evaluates the
resulting assignments of all samples.

We include only collections whose metadata reference labels form
exactly \(k\) contiguous blocks in the stored sample order. This
criterion is satisfied by 8 of the 12 collections. 

\input{tables/table_kernel_segmentation_supplementary.tex}

Overall, the Gaussian $d_{\mathrm{SK}}^{}$ kernel segmentation exactly recovers all reference
boundaries on 5 of the 8 eligible collections.

For Sea Surface Height, the predicted boundaries are
\(
[12,24,31],
\)
compared with the reference boundaries
\(
[12,24,36].
\)
The first two boundaries are recovered exactly, while the last
boundary is detected five samples early, yielding a Boundary MAE of
\(1.67\) and an ARI of \(0.773\), tying the second-best ARI achieved
on this collection among the three clustering methods evaluated in
\autoref{sec:kernel_ensemble_results}, namely the \(0.773\) obtained
by Hilbert \(k\)-means.

Earthquake provides the least favorable result among the eight
collections. Its predicted boundaries are
\(
[3,4],
\)
instead of the reference boundaries
\(
[4,8].
\)
The predicted partition therefore contains the one-sample segment
\([3,4)\), which is permitted because no minimum segment length is
imposed. This results in a Boundary MAE of \(2.50\) and an ARI of
\(0.408\), exceeding the ARI of \(0.368\) obtained by each of the
three clustering methods evaluated in
\autoref{sec:kernel_ensemble_results}.

This supplementary experiment demonstrates that the Gaussian $d_{\mathrm{SK}}^{}$
kernel can be used directly in kernel change-point and contiguous
segmentation methods~\cite{arlot2019kernelchangepoint}. 

%% file: tables/table_kernel_segmentation_supplementary.tex
\begin{table}[t]
\centering
\caption{
Fixed-\(k\) contiguous segmentation using the Gaussian $d_{\mathrm{SK}}^{}$ kernel at
\(L=30\). For each collection, \(k\) is the number of reference groups
supplied by the metadata, and the Gaussian bandwidth is the median of
the positive pairwise \(d_{\mathrm{SK},30}\) distances, without
label-dependent tuning. Only the 8 collections whose reference
labels form \(k\) contiguous blocks in the stored sample order are
included. A boundary \(b\) denotes the first index of the next
half-open segment; for example, boundaries \([4,8]\) define
\([0,4)\), \([4,8)\), and \([8,n)\). Boundary MAE is the mean absolute
difference, measured in sample indices, between corresponding
reference and predicted boundaries. ARI compares the complete
predicted segmentation with the metadata reference partition. The
stored order is explicitly temporal for some collections and may
represent another ordered parameterization for others.
}
\label{tab:kernel_segmentation_supp}
\scriptsize
\begin{tabular}{lrrrr}
\toprule
Dataset
& Reference boundaries
& Predicted boundaries
& Boundary MAE
& ARI vs.\ reference \\
\midrule

Isabel
& [4, 8]
& [4, 8]
& 0.00
& 1.000 \\

Earthquake
& [4, 8]
& [3, 4]
& 2.50
& 0.408 \\

Ionization 2D
& [4, 8, 12]
& [4, 8, 12]
& 0.00
& 1.000 \\

Ionization 3D
& [4, 8, 12]
& [4, 8, 12]
& 0.00
& 1.000 \\

Cloud Processes
& [4, 8]
& [4, 8]
& 0.00
& 1.000 \\

Asteroid temporal
& [5, 10, 15]
& [7, 10, 15]
& 0.67
& 0.756 \\

Sea Surface Height
& [12, 24, 36]
& [12, 24, 31]
& 1.67
& 0.773 \\

Starting Vortex
& [6]
& [6]
& 0.00
& 1.000 \\

\bottomrule
\end{tabular}
\end{table}